\title{Languages ordered by the subword order}
\titlerunning{Languages ordered by the subword order -- long version}
\author{Dietrich Kuske}{TU Ilmenau, Germany}{dietrich.kuske@tu-ilmenau.de}{}{}
\author{Georg Zetzsche}{MPI-SWS, Germany}{georg@mpi-sws.org}{}{}
\authorrunning{D. Kuske and G. Zetzsche}
\subjclass{\ccsdesc[500]{Theory of computation~Formal languages and automata theory},~
\ccsdesc[500]{Theory of computation~Logic}}
\keywords{Subword order, formal languages, first-order logic, counting quantifiers}
\theoremstyle{theorem}
\newtheorem{proposition}[theorem]{Proposition}
\newcommand{\embedsright}{\hookrightarrow}
\newcommand{\CMOD}{\ensuremath{\mathrm{C{+}MOD}}}
\newcommand{\FOMOD}{\ensuremath{\mathrm{FO{+}MOD}}}
\newcommand{\FO}{\ensuremath{\mathrm{FO}}}
\newcommand{\Inc}{\mathrm{Inc}}
\newcommand{\Sub}{\mathrm{Sub}}
\newcommand{\bN}{\mathbb{N}}
\newcommand{\bZ}{\mathbb{Z}}
\newcommand{\cS}{\mathcal{S}}
\newcommand{\proj}{\mathrm{proj}}
\newcommand{\propersubword}{\sqsubset}
\newcommand{\subword}{\sqsubseteq}
\newcommand{\cover}{\mathbin{\propersubword\!\!\!\!\cdot}}
\begin{document}

\maketitle

\begin{abstract}
  We consider a language together with the subword relation, the cover
  relation, and regular predicates. For such structures, we consider
  the extension of first-order logic by threshold- and modulo-counting
  quantifiers. Depending on the language, the used predicates, and the
  fragment of the logic, we determine four new combinations that yield
  decidable theories. These results extend earlier ones where only the
  language of all words without the cover relation and fragments of
  first-order logic were considered.
\end{abstract}

\section{Introduction}

The subword relation (sometimes called scattered subword relation) is
one of the simplest nontrivial examples of a well-quasi ordering
\cite{Hig52}. This property allows its prominent use in the
verification of infinite state systems \cite{FinS01}. The subword
relation can be understood as embeddability of one word into
another. This embeddability relation has been considered for other
classes of structures like trees, posets, semilattices, lattices,
graphs
etc.~\cite{KudS07,KudS09,KudSY10,JezM09,JezM09a,JezM09b,JezM10,Thi17,Thi18}.

In this paper, we study logical properties of a set of words ordered
by the subword relation. We are mainly interested in general
situations where we get a decidable logical theory. Regarding
first-order logic, we already have a rather precise picture about the border
between decidable and undecidable fragments: For the subword order
alone, the $\exists^*$-theory is decidable \cite{Kus06} and the
$\exists^*\forall^*$-theory is undecidable \cite{KarS15}.  For the
subword order together with regular predicates, the two-variable
theory is decidable \cite{KarS15} and the three-variable theory
\cite{KarS15} as well as the $\exists^*$-theory are undecidable
\cite{HalSZ17} (these two undecidabilities already hold if we only
consider singleton predicates, i.e., constants).

Thus, to get a decidable theory, one has to restrict the
expressiveness of first-order logic considerably. For instance,
neither in the $\exists^*$-, nor in the two-variable fragment of
first-order logic, one can express the cover relation $\cover$ (i.e.,
``$u$ is a proper subword of $v$ and there is no word properly between
these two''). As another example, one cannot express threshold
properties like ``there are at most $k$ subwords with a given
property'' in any of these two logics (for $k>2$).

In this paper, we refine the analysis of logical properties of the
subword order in three aspects:
\begin{itemize}
\item We restrict the universe from the set of all words to a given
  language $L$.
\item Besides the subword order, we also consider the cover relation
  $\cover$.
\item We add threshold and modulo counting quantifiers to the logic.
\end{itemize}
Also as before, we may or may not add regular predicates or constants
  to the structure.
In other words, we consider reducts of the structure
\[
  (L,\subword,\cover,(K\cap L)_{K\text{ regular}},(w)_{w\in L})
\]
with $L$ some language
and fragments of the logic $\CMOD$ that extends first-order logic by
threshold- and modulo-counting quantifiers.

In this spectrum, we identify four new cases of decidable theories:
\begin{enumerate}
\item The $\CMOD$-theory of the whole structure is decidable provided
  $L$ is bounded and context-free (Theorem~\ref{T-bounded}). A rather
  special case of this result follows from
  \cite[Theorem~4.1]{HalSZ17}: If $L=(a_1^*a_2^*\cdots a_m^*)^\ell$
  and if only regular predicates of the form
  $(a_1^*a_2^*\cdots a_m^*)^k$ are used, then the $\FO$-theory is
  decidable (with $\Sigma=\{a_1,\dots,a_m\}$). 
\item The $\CMOD^2$-theory (i.e., the 2-variable-fragment of the
  $\CMOD$-theory) of the whole structure is decidable whenever $L$ is
  regular (Corollary~\ref{C-CMOD2}). The decidability of the
  $\FO^2$-theory without the cover relation is
  \cite[Theorem~5.5]{KarS15}.
\item The $\Sigma_1$-theory of the structure $(L,\subword)$ is
  decidable provided $L$ is regular (Theorem~\ref{T-unbounded}). For
  $L=\Sigma^*$, this is \cite[Prop.~2.2]{Kus06}.
\item The $\Sigma_1$-theory of the structure
  $(L,\subword,(w)_{w\in L})$ is decidable provided $L$ is regular and
  almost every word from $L$ contains a non-negligible number of
  occurrences of every letter (see below for a precise definition,
  Theorem~\ref{T-nice}). Note that, by \cite[Theorem~3.3]{HalSZ17},
  this theory is undecidable if $L=\Sigma^*$.
\end{enumerate}

Our first result is shown by an interpretation of the structure in
$(\bN,+)$. Four ingredients are essential here: Parikh's
theorem~\cite{Par66}, the rationality of the subword
relation~\cite{KarS15}, Nivat's theorem characterising rational
relations~\cite{Ber79}, and the decidability of the $\CMOD$-theory of
$(\bN,+)$~\cite{Ape66,Sch05,HabK15} (note that this decidability does
not follow directly from Presburger's result since in his logic, one
cannot make statements like ``the number of witnesses $x\in\bN$
satisfying \dots is even'').

Our second result extends a result from~\cite{KarS15} that shows decidability of the $\FO^2$-theory of the structure
$(\Sigma^*,\subword,(L)_{L\text{ regular}})$ . They
provide a quantifier elimination procedure which relies on two facts:
\begin{itemize}
\item The class of regular languages is closed under images of
  rational relations.
\item The proper subword relation and the incomparability relation are
  rational.
\end{itemize}
Here, we follow a similar proof strategy. But while that proof had to
handle the existential quantifier, only, we also have to deal with
counting quantifiers. This requires us to develop a theory of counting
images under rational relations, e.g., the set of words $u$ such that
there are at least two words $v$ in the regular language $K$ with
$(u,v)$ in the rational relation $R$. We show that the class of
regular languages is closed under such counting images provided the
rational relation $R$ is unambiguous, a proof that makes heavy use of
weighted automata~\cite{Sak09}. To apply this to the subword and the
cover relation, this also requires us to show that the proper subword,
the cover, and the incomparability relations are unambiguous rational.

Our third result extends the decidability of the $\Sigma_1$-theory of
$(\Sigma^*,\subword)$ from~\cite{Kus06}. The main point there was to
prove that every finite partial order can be embedded into
$(\Sigma^*,\subword)$ if $|\Sigma|\ge 2$. This is certainly false if we restrict the
universe, e.g., to $L=a^*$. However, such bounded regular languages are
already covered by the first result, so we only have to handle
unbounded regular languages~$L$. In that case, we prove nontrivial
combinatorial results regarding primitive words in regular languages
and prefix-maximal subwords. These considerations then allow us to
prove that, indeed, every finite partial order embeds into
$(L,\subword)$. Then, the decidability of the $\Sigma_1$-theory
follows as in \cite{Kus06}.

Regarding our fourth result, we know from~\cite{KarS15} that
decidability of the $\Sigma_1$-theory of $(L,\subword,(w)_{w\in L})$
does not hold for every regular $L$.  Therefore, we require that a
certain fraction of the positions in a word carries the letter $a$
(for almost all words from the language and for all letters). This
allows us to conclude that every finite partial order embeds into
$(L,\subword)$ above each word. The second ingredient is that,
for such languages, any $\Sigma_1$-sentence is effectively equivalent
to such a sentence where  constants are only used to express that
all variables take values above a certain word~$w$. These two
properties together with some combinatorial arguments from the theory
of well-quasi orders then yield the decidability.

In summary, we identify four classes of decidable theories related to
the subword order. In this paper, we concentrate on these positive
results, i.e., we did not try to find new undecidable theories. It
would, in particular, be nice to understand what properties of the
regular language $L$ determine the decidability of the
$\Sigma_1$-theory of the structure $(L,\subword,(w)_{w\in L})$ (it is
undecidable for $L=\Sigma^*$ \cite{HalSZ17} and decidable for, e.g.,
$L=\{ab,baa\}^*\cup bb\{abb\}^*$ by our third result). Another open
question concerns the complexity of our decidability results.

\section{Preliminaries}

Throughout this paper, let $\Sigma$ be some alphabet. A word
$u=a_1a_2\dots a_m$ with $a_1,a_2,\dots,a_m\in\Sigma$ is a
\emph{subword} of a word $v\in\Sigma^*$ if there are words
$v_0,v_1,\dots,v_m\in\Sigma^*$ with $v=v_0a_1v_1a_2v_2\cdots
a_mv_m$. In that case, we write $u\subword v$; if, in addition,
$u\neq v$, then we write $u\propersubword v$ and call $u$ a \emph{proper}
subword of $v$. If $u,w\in\Sigma^*$ such that $u\propersubword w$ and
there is no word $v$ with $u\propersubword v\propersubword w$, then we
say that $w$ is a \emph{cover} of $u$ and write $u\cover w$. This is
equivalent to saying $u\subword w$ and $|u|+1=|w|$ where $|u|$ is the
length of the word $u$.  If, for two words $u$ and $v$, neither $u$ is
a subword of $v$ nor \textit{vice versa}, then the words $u$ and $v$
are \emph{incomparable} and we write $u\parallel v$. For instance,
$aa\propersubword babbba$, $aa\cover aba$, and $aba\parallel aabb$.
\medskip

Let $\mathcal S=(L,(R_i)_{i\in I},(w_j)_{j\in J})$ be a
\emph{structure}, i.e., $L$ is a set, $R_i\subseteq L^{n_i}$ is a
relation of arity~$n_i$ (for all $i\in I$), and $w_j\in L$ for all
$j\in J$. Then, formulas of the logic $\CMOD$ are built from the
atomic formulas
\begin{itemize}
\item $s=t$ for $s,t$ variables or constants $w_j$ and
\item $R_i(s_1,s_2,\dots,s_{n_i})$ for $i\in I$ and
  $s_1,s_2,\dots,s_{n_i}$ variables or constants $w_j$
\end{itemize}
by the following formation rules:
\begin{enumerate}
\item If $\alpha$ and $\beta$ are formulas, then so are $\lnot\alpha$
  and $\alpha\land\beta$.
\item If $\alpha$ is a formula and $x$ a variable, then
  $\exists x\,\alpha$ is a formula.
\item If $\alpha$ is a formula, $x$ a variable, and $k\in\bN$, then
  $\exists^{\ge k} x\,\alpha$ is a formula.
\item If $\alpha$ is a formula, $x$ a variable, and $p,q\in\bN$ with
  $p<q$, then $\exists^{p\bmod q} x\,\alpha$ is a formula.
\end{enumerate}
We call $\exists^{\ge k}$ a \emph{threshold counting quantifier} and
$\exists^{p\bmod q}$ a \emph{modulo counting quantifier}. The
semantics of these quantifiers is defined as follows:
\begin{itemize}
\item $\mathcal S\models\exists^{\ge k}x\,\alpha$ iff
  $|\{w\in L\mid \mathcal S \models\alpha(w)\}|\ge k$
\item $\mathcal S\models\exists^{p\bmod q}x\,\alpha$ iff
  $|\{w\in L\mid \mathcal S \models\alpha(w)\}|\in p+q\bN$
\end{itemize}
For instance, $\exists^{0\bmod 2}x\,\alpha$ expresses that the number
of elements of the structure satisfying $\alpha$ is even. Then
$\bigl(\exists^{0\bmod 2}x\,\alpha\bigr)\lor\bigl(\exists^{1\bmod
  2}x\,\alpha\bigr)$ holds iff only finitely many elements of the
structure satisfy $\alpha$. The fragment $\FOMOD$ of $\CMOD$ comprises
all formulas not containing any threshold counting quantifier
$\exists^{\ge k}$. First-order logic $\FO$ is the set of formulas from
$\CMOD$ not mentioning any counting quantifier, i.e., neither
$\exists^{\ge k}$ nor $\exists^{p\bmod q}$. Let $\Sigma_1$ denote the
set of first-order formulas of the form
$\exists x_1\,\exists x_2\dots \exists x_n\colon\psi$ where $\psi$ is
quantifier-free; these formulas are also called \emph{existential}.

Note that the formulas
\begin{equation}
  \exists^{\ge k} x\,x=x\label{eq:C}
\end{equation}
and
\begin{equation}
  \label{eq:FO}
  \exists x_1\,\exists x_2\dots\exists x_k\colon
  \bigwedge_{1\le i<j\le k}x_i\neq x_j\land\bigwedge_{1\le i\le k}x_i=x_i
\end{equation}
are equivalent (they both express that the structure contains at least
$k$ elements). Generalising this, the threshold quantifier
$\exists^{\ge k}$ can be expressed using the existential quantifier,
only. Consequently, the logics $\FOMOD$ and $\CMOD$ are equally
expressive. The situation changes when we restrict the number of
variables that can be used in a formula: Let $\FO^2$ and $\CMOD^2$
denote the set of formulas from $\FO$ and $\CMOD$, respectively, that
use the variables $x$ and $y$, only.  Note that the formula from
\eqref{eq:C} belongs to $\CMOD^2$, but the equivalent formula from
\eqref{eq:FO} does not belong to $\FOMOD^2$.

\begin{remark}
  Let $L_1$ be a set with two elements and let $L_2$ be a set with
  $k\cdot q+2$ elements (where $k>2$). Furthermore, let
  $\varphi\in\CMOD^2$ be a formula such that all moduli appearing in
  $\varphi$ are divisors of $q$. By induction on the construction of
  the formula $\varphi$, one can show the following for any
  $x,y\in L_1$ and $x',y'\in L_2$:
  \begin{itemize}
  \item If $x\neq y$ and $x'\neq y'$, then
    $L_1\models\varphi(x,y)\iff L_2\models\varphi(x',y')$.
  \item $L_1\models\varphi(x,x)\iff L_2\models\varphi(x',x')$.
  \end{itemize}
  Consequently, there is no $\CMOD^2$-formula expressing that the
  number of elements of a structure is $\ge k$.
\end{remark}

In this paper, we will consider the following structures:
\begin{itemize}
\item The largest one is
  $(L,\subword,\cover,(K\cap L)_{L\text{ regular}},(w)_{w\in L})$ for
  some $L\subseteq\Sigma^*$. The universe of this structure is the
  language~$L$, we have two binary predicates ($\subword$ and
  $\cover$), a unary predicate $K\cap L$ for every regular language
  $L$, and we can use every word from $L$ as a constant.
\item The other extreme is the structure $(L,\subword)$ for some
  $L\subseteq\Sigma^*$ where we consider only the binary predicate
  $\subword$.
\item Finally, we will also prove results on the intermediate
  structure $(L,\subword,(w)_{w\in L})$ that has a binary relation and
  any word from the language as a constant.
\end{itemize}
For any structure $\mathcal S$ and any of the above logics
$\mathcal L$, we call
\[
  \{\varphi\mid \varphi\in \mathcal L\text{ sentence and } \mathcal S\models\varphi\}
\]
the \emph{$\mathcal L$-theory of $\mathcal S$}.
\medskip

A language $L\subseteq\Sigma^*$ is \emph{bounded} if there are a
number $n\in\bN$ and words $w_1,w_2,\dots,w_n\in\Sigma^*$ such that
$L\subseteq w_1^*\,w_2^*\cdots w_n^*$. Otherwise, it is
\emph{unbounded}.

For an alphabet $\Gamma$, a word $w\in\Gamma^*$, and a letter
$a\in\Gamma$, let $|w|_a$ denote the number of occurrences of the
letter $a$ in the word $w$. The \emph{Parikh vector} of $w$ is the
tuple $\Psi_\Gamma(w)=(|w|_a)_{a\in\Gamma}\in\bN^\Gamma$. Note that
$\Psi_\Gamma$ is a homomorphism from the free monoid $\Gamma^*$ onto
the additive monoid $(\bN^\Gamma,+)$.

\section{The $\FOMOD$-theory with regular predicates}

The aim of this section is to prove that the full $\FOMOD$-theory of
the structure
\[
  (L,\subword,(K\cap L)_{K\text{ regular}})
\]
is decidable for $L$ bounded and context-free. This is achieved by
interpreting this structure in $(\bN,+)$, i.e., in Presburger
arithmetic whose $\FOMOD$-theory is known to be
decidable~\cite{Ape66,Sch05,HabK15}.

We start with three preparatory lemmas.

\begin{lemma}\label{L1}
  Let $K\subseteq\Sigma^*$ be context-free,
  $w_1,\dots,w_n\in\Sigma^*$, and $g\colon\bN^n\to\Sigma^*$ be defined
  by $g(\overline{m})=w_1^{m_1}\,w_2^{m_2}\cdots w_n^{m_n}$ for all
  $\overline{m}=(m_1,m_2,\dots,m_n)\in\bN^n$. The set
  $g^{-1}(K)=\{\overline{m}\in\bN^n\mid g(\overline{m})\in K\}$ is
  effectively semilinear.
\end{lemma}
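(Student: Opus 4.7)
The plan is to reduce the lemma directly to Parikh's theorem. I introduce fresh pairwise distinct letters $a_1,\dots,a_n$, set $\Gamma=\{a_1,\dots,a_n\}$, and consider the regular language $R=a_1^*a_2^*\cdots a_n^*\subseteq\Gamma^*$ together with the monoid homomorphism $h\colon\Gamma^*\to\Sigma^*$ defined by $h(a_i)=w_i$. Then, for every $\overline{m}=(m_1,\dots,m_n)\in\bN^n$, we have $g(\overline{m}) = h(a_1^{m_1}\,a_2^{m_2}\cdots a_n^{m_n})$, so the values of $g$ are precisely the $h$-images of the words in $R$.

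Next I set $M = h^{-1}(K)\cap R$. Since the class of context-free languages is effectively closed under inverse homomorphism and under intersection with regular languages, $M$ is effectively context-free. By Parikh's theorem~\cite{Par66}, the Parikh image $\Psi_\Gamma(M)\subseteq\bN^\Gamma$ is therefore an effectively computable semilinear set.

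Finally, every word in $R$ has a unique normal form $a_1^{m_1}\,a_2^{m_2}\cdots a_n^{m_n}$, and $\Psi_\Gamma$ applied to this word records exactly the exponent tuple $(m_1,\dots,m_n)$. Identifying $\bN^\Gamma$ with $\bN^n$ via $a_i\mapsto i$, the map $\overline{m}\mapsto a_1^{m_1}\cdots a_n^{m_n}$ is a bijection between $\bN^n$ and $R$ whose inverse is the restriction of $\Psi_\Gamma$ to $R$. Under this identification, $\Psi_\Gamma(M) = g^{-1}(K)$, which yields the claim. Since the proof is just a repackaging of Parikh's theorem plus standard closure properties of context-free languages, I do not expect any serious obstacle.
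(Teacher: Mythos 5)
Your proof is correct and follows essentially the same route as the paper's: inverse homomorphism into a fresh alphabet, intersection with the regular language $a_1^*\cdots a_n^*$, Parikh's theorem, and the observation that a word of that shape is determined by its Parikh image. The only differences are notational (your $h$, $R$, $M$ are the paper's $f$, $a_1^*\cdots a_n^*$, $K_2$).
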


\begin{proof}
  Let $\Gamma=\{a_1,a_2,\dots,a_n\}$ be an alphabet and define the
  monoid homomorphism $f\colon\Gamma^*\to\Sigma^*$ by $f(a_i)=w_i$ for
  all $i\in[1,n]$.
  
  Since the class of context-free languages is effectively closed
  under inverse homomorphisms, the language
  \[
    K_1=f^{-1}(K)=\{u\in\Gamma^*\mid f(u)\in K\}
  \]
  is effectively context-free. Since $a_1^*a_2^*\dots a_n^*$ is
  regular, also the language
  \[
    K_2=K_1\cap a_1^*a_2^*\dots a_n^*
  \]
  is effectively context-free. By Parikh's theorem~\cite{Par66}, the
  Parikh-image $\Psi_\Gamma(K_2)\subseteq\bN^n$ of this intersection
  is effectively semilinear.

  Now let $\overline{m}\in \bN^n$. Then
  $\overline{m}\in \Psi_\Gamma(K_2)$ iff there exists a word
  $u\in K_1\cap a_1^*a_2^*\dots a_n^*$ with Parikh image
  $\overline{m}$. But the only word from $a_1^*a_2^*\dots a_n^*$ with
  this Parikh image is $a_1^{m_1}a_2^{m_2}\cdots a_n^{m_n}$, i.e.,
  $\overline{m}\in\Psi_\Gamma(K_2)$ iff
  $a_1^{m_1}a_2^{m_2}\cdots a_n^{m_n}\in K_1$.  Since
  $f(a_1^{m_1}a_2^{m_2}\cdots a_n^{m_n})=g(\overline{m})$, this is
  equivalent to $g(\overline{m})\in K$. Thus, the semilinear set
  $\Psi_\Gamma(K_2)$ equals the set $g^{-1}(K)$ from the lemma.
\end{proof}

\begin{lemma}\label{L2}
  Let $w_1,\dots,w_n\in\Sigma^*$ and $g\colon\bN^n\to\Sigma^*$ be
  defined by $g(\overline{m})=w_1^{m_1}\,w_2^{m_2}\cdots w_n^{m_n}$
  for all $\overline{m}=(m_1,m_2,\dots,m_n)\in\bN^n$. The set
  $\{(\overline{m},\overline{n})\in\bN^n\times\bN^n\mid
  g(\overline{m})\sqsubseteq g(\overline{n})\}$ is
  semilinear.
\end{lemma}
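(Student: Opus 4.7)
The plan is to follow the template of Lemma~\ref{L1} but to replace the use of context-free languages by rational relations, exploiting the fact that $\subword$ is itself a rational relation~\cite{KarS15}. Let $\Gamma=\{a_1,\dots,a_n\}$, let $f\colon\Gamma^*\to\Sigma^*$ be the homomorphism with $f(a_i)=w_i$, and let $\pi\colon\bN^n\to a_1^*\cdots a_n^*$ be the bijection $\pi(\overline{m})=a_1^{m_1}\cdots a_n^{m_n}$, so that $g=f\circ\pi$ and $\Psi_\Gamma\circ\pi$ is the identity on $\bN^n$. Define
\[
  T \;=\; (f\times f)^{-1}(\subword) \;\cap\; \bigl((a_1^*\cdots a_n^*)\times(a_1^*\cdots a_n^*)\bigr).
\]
The map $\pi\times\pi$ restricts to a bijection onto $(a_1^*\cdots a_n^*)\times(a_1^*\cdots a_n^*)$, so the set in the statement coincides with the ``pair-Parikh image'' $\{(\Psi_\Gamma(u),\Psi_\Gamma(v))\mid (u,v)\in T\}$ of $T$.

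It remains to check that $T$ is rational and that the pair-Parikh image of any rational relation is semilinear. Rationality of $T$ follows from two standard closure properties: rational relations are closed under inverse image along a pair of homomorphisms (here $f\times f$) and under intersection with recognizable relations (here the product of two regular languages). For semilinearity, I would apply Nivat's theorem to write $T=\{(h_1(z),h_2(z))\mid z\in R\}$ for some alphabet $\Delta$, regular language $R\subseteq\Delta^*$, and monoid homomorphisms $h_1,h_2\colon\Delta^*\to\Gamma^*$. Each composition $\Psi_\Gamma\circ h_i$ is a homomorphism $\Delta^*\to\bN^\Gamma$ and hence factors through $\Psi_\Delta$ as a linear map $L_i\colon\bN^\Delta\to\bN^\Gamma$. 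Consequently the pair-Parikh image of $T$ equals the image of $\Psi_\Delta(R)$ under the linear map $v\mapsto(L_1(v),L_2(v))$; by Parikh's theorem $\Psi_\Delta(R)$ is semilinear, and linear images of semilinear sets are semilinear.

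Given that the argument is a combination of three well-established tools (rationality of $\subword$, Nivat's theorem, and Parikh's theorem), I do not expect a genuine obstacle. The only step that asks for a moment's care is the passage from the word-level Parikh image over $\Delta$ to the pair-of-Parikh-vectors image over $\Gamma$, which is exactly what the factorisation of $\Psi_\Gamma\circ h_i$ through $\Psi_\Delta$ provides.
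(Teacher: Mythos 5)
Your proposal is correct and follows essentially the same route as the paper's proof: establish rationality of $\{(u,v)\in\Gamma^*\times\Gamma^*\mid f(u)\subword f(v)\}$ via closure under inverse homomorphisms, intersect with $(a_1^*\cdots a_n^*)\times(a_1^*\cdots a_n^*)$, apply Nivat's theorem, and push the semilinear Parikh image of the resulting regular language through the induced linear maps $\bN^\Delta\to\bN^n$. The factorisation of $\Psi_\Gamma\circ h_i$ through $\Psi_\Delta$ is exactly the computation the paper carries out with its homomorphisms $p_1,p_2$.
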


\begin{proof}
  Let $\Gamma=\{a_1,a_2,\dots,a_n\}$ be an alphabet and define the
  monoid homomorphism $f\colon\Gamma^*\to\Sigma^*$ by $f(a_i)=w_i$ for
  all $i\in[1,n]$.

  In this prove, we construct an alphabet $\Delta$ and homomorphisms
  $g$, $h_1$, $h_2$, $p_1$, and $p_2$, such that the diagrams (for
  $i\in\{1,2\}$) from Fig.~\ref{F} commute. In addition, we will
  construct a regular language $R\subseteq\Delta^*$ with
  \[
    U\subword V\iff\exists w\in R\colon U=f\circ h_1(w)\text{ and }V=f\circ h_2(w)
  \]
  for all $U,V\in\Sigma^*$.
  
  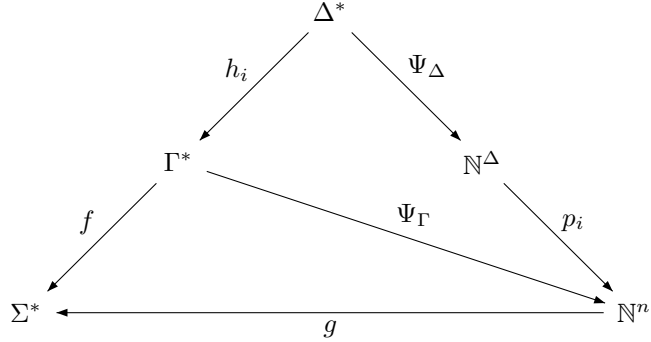
\begin{figure}
    \centering
    \begin{picture}(100,60)
      \gasset{Nframe=n}
      \node(1)(50,50){$\Delta^*$}
      \node(2)(30,30){$\Gamma^*$}
      \node(3)(10,10){$\Sigma^*$}
      \node(4)(70,30){$\bN^\Delta$}
      \node(5)(90,10){$\bN^n$}
      \drawedge[ELside=r](1,2){$h_i$}
      \drawedge[ELside=r](2,3){$f$}
      \drawedge(1,4){$\Psi_\Delta$}
      \drawedge(4,5){$p_i$}
      \drawedge(2,5){$\Psi_\Gamma$}
      \drawedge(5,3){$g$}
    \end{picture}
  
    \caption{Commuting diagram for proof of Lemma~\ref{L2}}
    \label{F}
  \end{figure}

  The subword relation
  \[
    S=\{(U,V)\in\Sigma^*\times\Sigma^*\mid U\subword V\}
  \]
  on $\Sigma^*$ is rational~\cite{KarS15}. Since the class of rational
  relations is closed under inverse homomorphisms \cite{Ber79}, also
  the relation
  \[
    S_1=\{(u,v)\in\Gamma^*\times\Gamma^*\mid f(u)\subword f(v)\}
  \]
  is rational. While the class of rational relations is not closed
  under intersections, it is at least closed under intersections with
  direct products of regular languages. Hence also
  \[
    S_2=\{(u,v)\mid u,v\in a_1^*a_2^*\dots a_n^*,f(v)\subword f(v)\}
  \]
  is rational. By Nivat's theorem \cite{Ber79}, there are a regular
  language $R$ over some alphabet $\Delta$ and two homomorphisms
  $h_1,h_2\colon\Delta^*\to\Gamma^*$ with
  \[
    S_2=\{(h_1(w),h_2(w))\mid w\in R\}\,.
  \]
  Since $R$ is regular, its Parikh-image
  \[
    \Psi_\Delta(R)=\{\Psi_\Delta(w)\mid w\in R\}
  \]
  is semilinear~\cite{Par66}. 

  Note that the additive monoid $(\bN^\Delta,+)$ is the commutative
  monoid freely generated by the vectors
  $\Psi_\Delta(b)=(0,\dots,0,1,0,\dots,0)$ for $b\in\Delta$. Since
  also $(\bN^n,+)$ is commutative, we can define monoid homomorphisms
  $p_1,p_2\colon\bN^{\Delta}\to\bN^n$ by
  $p_1(\Psi_\Delta(b))=\Psi_\Gamma(h_1(b))$ and
  $p_2(\Psi_\Delta(b))=\Psi_\Gamma(h_2(b))$ for $b\in\Delta$.  Then
  also
  $(p_1,p_2)\colon\bN^\Delta\to\bN^n\times\bN^n\colon
  \overline{x}\mapsto \bigl(p_1(\overline{x}),p_2(\overline{x})\bigr)$
  is a monoid homomorphism.

  Let $w=b_1b_2\cdots b_m$ with $b_i\in\Delta$ for all $1\le i\le
  m$. Then we have
  \begin{align*}
    \Psi_\Gamma(h_1(w))
    &=\Psi_\Gamma(h_1(b_1b_2\cdots b_m))\\
    &=\Psi_\Gamma(h_1(b_1)h_1(b_2)\cdots h_1(b_m))\\
    &=\sum_{1\le j\le m}\Psi_\Gamma(h_1(b_j))\\
    &=\sum_{1\le j\le m} p_1(\Psi_\Delta(b_j))\\
    &=p_1(\sum_{1\le j\le m}\Psi_\Delta(b_j))\\
    &=p_1(\Psi_\Delta(w))
  \end{align*}
  and similarly $\psi_\Gamma(h_2(w))=p_2(\Psi_\Delta(w))$.

  Since the class of semilinear sets is closed under monoid
  homomorphisms, the image of the semilinear set $\Psi_\Delta(R)$
  under $(p_1,p_2)$, i.e.,
  \[
    H=\{\bigl(p_1(\Psi_\Delta(w)),p_2(\Psi_\Delta(w))\bigr)\mid w\in R\}
  \]
  is semilinear. 

  Let $\overline{m},\overline{n}\in\bN^n$. Then
  $(\overline{m},\overline{n})\in H$ iff there exists $w\in R$ with
  $\overline{m}=p_1(\Psi_\Delta(w))=\Psi_\Gamma(h_1(w))$ and
  $\overline{n}=p_2(\Psi_\Delta(w))=\Psi_\Gamma(h_2(w))$. The
  existence of such a word $w\in R$ is equivalent to the existence of
  a pair $(u,v)\in S_2$ with $\overline{m}=\Psi_\Gamma(u)$ and
  $\overline{n}=\Psi_\Gamma(v)$.

  Since all words appearing in $S_2$ belong to
  $a_1^*a_2^*\cdots a_n^*$, this last statement is equivalent to saying
  \[
    (a_1^{m_1}a_2^{m_2}\cdots a_n^{m_n},a_1^{n_1}a_2^{n_2}\cdots a_n^{n_n})\in S_2\,.
  \]
  But this is equivalent to saying
  \[
    f(a_1^{m_1}a_2^{m_2}\cdots a_n^{m_n}) \subword
    f(a_1^{n_1}a_2^{n_2}\cdots a_n^{n_n})\,.
  \]
  Note that $g(\overline{m})=f(a_1^{m_1}a_2^{m_2}\cdots a_n^{m_n})$
  and similarly
  $g(\overline{n})=f(a_1^{n_1}a_2^{n_2}\cdots a_n^{n_n})$. Thus, the
  last claim is equivalent to
  $g(\overline{m})\subword g(\overline{n})$.

  In summary, we showed that the semilinear set $H$ is the set from
  the lemma.
\end{proof}

\begin{lemma}\label{L3}
  Let $w_1,w_2,\dots,w_n\in\Sigma^*$,
  $L\subseteq w_1^*w_2^*\cdots w_n^*$ be context-free, and
  $g\colon\bN^n\to\Sigma^*$ be defined by
  $g(\overline{m})=w_1^{m_1}w_2^{m_2}\cdots w_n^{m_n}$ for every tuple
  $\overline{m}=(m_1,m_2,\dots,m_n)\in\bN^n$.
  Then there exists a semilinear set $U\subseteq\bN^n$ such that $g$
  maps $U$ bijectively onto $L$.
\end{lemma}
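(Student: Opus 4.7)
My plan is to realise $U$ as a set of canonical representatives, one per fibre of $g$, chosen from $g^{-1}(L)$; the point is that $g$ need not be injective on $\bN^n$ (for instance when several $w_i$'s are powers of a common word), so $g^{-1}(L)$ itself generally is not a bijective preimage of $L$. Instead, I would cut $g^{-1}(L)$ down to the lexicographically least point in each fibre, and verify that the cutting can be performed definably, so that semilinearity is preserved.

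Concretely, first invoke Lemma~\ref{L1} to obtain that $M := g^{-1}(L) \subseteq \bN^n$ is semilinear. Next, since $g(\overline{m}) = g(\overline{m}')$ iff $g(\overline{m}) \subword g(\overline{m}')$ and $g(\overline{m}') \subword g(\overline{m})$, two applications of Lemma~\ref{L2} combined with closure of semilinear sets under intersection show that
\[
E = \{(\overline{m},\overline{m}') \in \bN^n\times\bN^n \mid g(\overline{m}) = g(\overline{m}')\}
\]
is semilinear. Recalling that the semilinear subsets of $\bN^k$ are precisely the Presburger-definable ones, and that the lexicographic order $\le_{\mathrm{lex}}$ on $\bN^n$ is Presburger-definable (as a finite disjunction over the ``first differing coordinate''), I would set
\[
U = \{\overline{m} \in M \mid \forall \overline{m}' \in M \colon (\overline{m},\overline{m}') \in E \Rightarrow \overline{m} \le_{\mathrm{lex}} \overline{m}'\}.
\]
This set is defined by a Presburger formula over the semilinear predicates $M$ and $E$, hence is itself semilinear.

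It then remains to check that the restriction of $g$ to $U$ is a bijection onto $L$. Injectivity is immediate: two distinct elements of $U$ lying in the same $E$-class would both have to be its lex-minimum. For surjectivity, fix $w \in L$; the fibre $g^{-1}(w)$ is a nonempty subset of $M$, and since lexicographic order is a well-order on $\bN^n$, it has a least element, which by construction lies in $U$ and is mapped by $g$ to $w$. I do not anticipate any real obstacle here: the only subtlety is the observation that Presburger-definability allows us to carry out the representative selection entirely inside the semilinear world, without having to describe $U$ by an explicit semilinear presentation.
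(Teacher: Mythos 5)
Your proposal is correct and follows essentially the same route as the paper: both take $g^{-1}(L)$ from Lemma~\ref{L1}, encode equality of $g$-values via the mutual-subword relation from Lemma~\ref{L2}, select the lexicographically least element of each fibre by a first-order (Presburger) definition over these semilinear predicates, and conclude bijectivity from the well-orderedness of the lexicographic order on $\bN^n$. No substantive differences.
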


\begin{proof}
  By Lemma~\ref{L1}, the set $g^{-1}(L)$ is semilinear and satisfies,
  by its definition, $g(g^{-1}(L))=L\cap w_1^*w_2^*\cdots w_n^*=L$.

  Let $T$ denote the semilinear set from Lemma~\ref{L2}.

  Then let $U$ denote the set of $n$-tuples
  $\overline{m}\in g^{-1}(L)$ such that the following holds for all
  $\overline{n}\in \bN^n$:

  If $(\overline{m},\overline{n})\in T$ and
  $(\overline{n},\overline{m})\in T$, then $\overline{m}$ is
  lexicographically smaller than or equal to $\overline{n}$.

  This set $U$ is semilinear since the class of semilinear relations
  is closed under first-order definitions. Now let $u\in L$. Since $g$
  maps $g^{-1}(L)$ onto $L$, there is $\overline{m}\in g^{-1}(L)$ with
  $g(\overline{m})=u$. Since, on $g^{-1}(L)\subseteq\bN^n$, the
  lexicographic order is a well-order, there is a lexicographically
  minimal such tuple $\overline{m}$. This tuple belongs to $U$ and it
  is the only tuple from $U$ mapped to $u$.
\end{proof}

Now we can prove the main result of this section.

\begin{theorem}\label{T-bounded}
  Let $L\subseteq\Sigma^*$ be context-free and bounded. Then the
  $\FOMOD$-theory of $\cS=(L,\subword,(K\cap L)_{K\text{ regular}})$
  is decidable.
\end{theorem}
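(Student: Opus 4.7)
The plan is to interpret $\cS$ in Presburger arithmetic $(\bN,+)$ with modulo-counting quantifiers, whose $\FOMOD$-theory is decidable by \cite{Ape66,Sch05,HabK15}, and then push every sentence through the interpretation.

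Since $L$ is bounded, fix $w_1,\dots,w_n$ with $L\subseteq w_1^*w_2^*\cdots w_n^*$, and let $g\colon\bN^n\to\Sigma^*$ be the map $g(\overline{m})=w_1^{m_1}\cdots w_n^{m_n}$ used in the three preparatory lemmas. By Lemma~\ref{L3} I obtain (effectively) a semilinear set $U\subseteq\bN^n$ such that $g$ restricts to a bijection $U\to L$. This bijection is the backbone of the interpretation: an element of $L$ is represented by the unique $\overline{m}\in U$ with $g(\overline{m})=u$.

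Next I translate every atomic predicate to a semilinear (hence Presburger-definable) relation on $U$. Equality is trivial. The subword relation transfers to $\{(\overline{m},\overline{n})\in U\times U\mid g(\overline{m})\subword g(\overline{n})\}$, which is semilinear by Lemma~\ref{L2}. For a regular predicate $K$, note that $K\cap L$ is the intersection of a regular and a context-free language, so it is effectively context-free; then Lemma~\ref{L1} (applied to $K\cap L$ instead of $K$) makes $g^{-1}(K\cap L)\cap U$ an effectively semilinear unary predicate on $U$. With these ingredients, every $\FOMOD$-sentence $\varphi$ over $\cS$ is translated recursively into an $\FOMOD$-sentence $\tilde\varphi$ over $(\bN^n,+)$ by replacing each atom with its semilinear counterpart and each quantifier $Qx$ by the corresponding $Q\overline{m}$ relativised to $U$ (for $\exists$ by conjunction, for $\exists^{\ge k}$ and $\exists^{p\bmod q}$ by restricting the count to $\overline{m}\in U$). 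Because $g\colon U\to L$ is a bijection, the cardinality of $\{u\in L\mid\cS\models\alpha(u)\}$ equals that of $\{\overline{m}\in U\mid (\bN^n,+)\models\tilde\alpha(\overline{m})\}$, so the translation preserves truth of threshold and modulo counting quantifiers as well. Thus $\cS\models\varphi$ iff $(\bN^n,+)\models\tilde\varphi$.

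The main point to check carefully, and the only real obstacle, is that counting over $n$-tuples is still within the scope of the decidability results for $(\bN,+)$ with modulo counting; this is addressed in \cite{Ape66,Sch05,HabK15}, so decidability of the $\FOMOD$-theory of $(\bN^n,+)$ transfers to $\cS$ via the effective translation $\varphi\mapsto\tilde\varphi$.
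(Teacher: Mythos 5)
Your overall strategy is exactly the paper's: use Lemmas~\ref{L1}--\ref{L3} to obtain a semilinear set $U\subseteq\bN^n$ that $g$ maps bijectively onto $L$, together with semilinear translations of $\subword$ and of the regular predicates, and then push $\FOMOD$-sentences through the resulting interpretation into $(\bN,+)$. Up to and including the atomic formulas and the plain existential quantifier, your argument matches the paper's.

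The gap is in the step you yourself flag as ``the only real obstacle'' and then dispose of by citation. A quantifier $\exists^{p\bmod q}x$ over $\cS$ must become a statement about the number of \emph{$n$-tuples} $\overline{m}\in U$ satisfying a formula, and $\exists^{p\bmod q}\overline{m}$ is simply not a formula of $\FOMOD$ over $(\bN,+)$: the modulo-counting quantifiers of that logic bind a single variable. The decidability results of \cite{Ape66,Sch05,HabK15} concern precisely this single-variable logic and do not provide tuple counting; the paper's own concluding section makes this explicit by naming ``there is an even number of pairs $(x,y)\in\bN^2$ such that \dots'' as the kind of statement the reduction must eliminate, and by leaving open whether \cite{HabK15} can handle it directly. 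So the obstacle is not ``addressed in'' the references --- it has to be resolved inside the proof. The paper does this with the nested formulas $\alpha^k_p$: the tuple components are quantified one at a time, and the residue of the total count is recovered from the residues $f(i)$ of the numbers of values of the next component whose residual count is congruent to $i$, via a disjunction over all functions $f$ with $\sum_{0\le i<q} i\cdot f(i)\equiv p\pmod q$. Without this (or an equivalent) decomposition your translation is not defined for the modulo quantifiers, and since $\FOMOD$ is built around exactly these quantifiers, this is the heart of the proof rather than a checkable detail. (A smaller slip: $\FOMOD$ as defined in the paper contains no $\exists^{\ge k}$, so that case does not arise; and even in $\CMOD$, threshold quantification over tuples is unproblematic because $\exists^{\ge k}$ can be rewritten with ordinary existential quantifiers, unlike $\exists^{p\bmod q}$.)
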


\begin{proof}
  Since $L$ is bounded, there are words $w_1,w_2,\dots,w_n\in\Sigma^*$
  such that $L\subseteq w_1^*\,w_2^*\cdots w_n^*$. For an $n$-tuple
  $\overline{m}=(m_1,m_2,\dots,m_n)\in\bN^n$ we define
  $g(\overline{m})= w_1^{m_1}w_2^{m_2}\cdots w_n^{m_n}\in\Sigma^*$.

  \begin{enumerate}
  \item By Lemma~\ref{L3}, there is a semilinear set $U\subseteq\bN^n$
    that is mapped by $g$ bijectively onto $L$.

    From this semilinear set, we obtain a first-order formula
    $\lambda(\overline{x})$ in the language of $(\bN,+)$ such that,
    for any $\overline{m}\in\bN^n$, we have
    $(\bN,+)\models\lambda(\overline{m})\iff\overline{m}\in U$.
  \item The set
    $\{(\overline{m},\overline{n})\mid g(\overline{m})\subword
    g(\overline{n})\}$ is semilinear by Lemma~\ref{L2}.

    From this semilinear set, we obtain a first-order formula
    $\sigma(\overline{x},\overline{y})$ in the language of $(\bN,+)$
    such that
    $(\bN,+)\models\sigma(\overline{m},\overline{n})\iff
    g(\overline{m})\subword g(\overline{n})$.
  \item For any regular language $K\subseteq\Sigma^*$ the set
    $\{\overline{m}\in\bN^n\mid g(\overline{m})\in K\}\subseteq\bN^n$
    is effectively semilinear by Lemma~\ref{L1}.

    From this semilinear set, we can compute a first-order formula
    $\kappa_K(\overline{x})$ in the language of $(\bN,+)$ such that
    $(\bN,+)\models\kappa_K(\overline{m})\iff g(\overline{m})\in K$.
  \end{enumerate}

  We now define, from an $\FOMOD$-formula $\varphi(x_1,\dots,x_k)$ in
  the language of $\cS$, an $\FOMOD$-formula
  $\varphi'(\overline{x_1},\dots,\overline{x_k})$ in the language of
  $(\bN,+)$ such that
  \[
    (\bN,+)\models\varphi'(\overline{m_1},\dots,\overline{m_k})
    \iff
    \cS\models\varphi(g(\overline{m_1}),\dots,g(\overline{m_k}))\,.
  \]
  If $\varphi=(x\subword y)$, then
  $\varphi'=\sigma(\overline{x},\overline{y})$. If $\varphi=(x\in K)$,
  then $\varphi'=\kappa_K(\overline{x})$. Furthermore,
  $(\alpha\land\beta)'=\alpha'\land\beta'$ and
  $(\lnot\varphi)'=\lnot\varphi'$.

  For $\varphi=\exists x\colon\psi$, we set
  $\varphi'=\exists x_1\exists x_2\dots \exists x_n\colon
  \lambda(x_1,x_2,\dots,x_n)\land\psi'$.

  Finally, let $\varphi=\exists^{p\bmod q}x\colon\psi$. Intuitively,
  one is tempted to set
  $\varphi'=\exists^{p\bmod
    q}\overline{x}\colon\lambda(\overline{x})\land\psi'$, but this is
  not a valid formula since $\overline{x}$ is not a single variable,
  but a tuple of variables. To rectify this, we define
  $\FOMOD$-formulas $\alpha_p^k$ for $p\in[0,q-1]$ and $k\in[0,n-1]$
  as follows:
  \[
    \alpha_p^k(x_1,\dots,x_k)=
    \begin{cases}
      \exists^{p\bmod q} x_{k+1}\colon\lambda(x_1,\dots,x_k,x_{k+1})\land\psi'
        &\text{ if }k=n-1\\
      \displaystyle
      \bigvee_{(*)} \bigwedge_{0\le i<q}
        \exists^{f(i)\bmod q}x_{k+1}\colon \alpha_i^{k+1}(x_1,\dots,x_k,x_{k+1})
         &\text{ otherwise}
    \end{cases}
  \]
  where the disjunction $(*)$ extends over all functions
  $f\colon\{0,1,\dots,q-1\}\to\{0,1,\dots,q-1\}$ with
  $\sum_{0\le i<q}i\cdot f(i)\equiv p\pmod q$.

  By induction, one obtains
  \[
    (\bN,+)\models\alpha_p^k(m_1,\dots,m_k)
  \]
  iff
  \[
    \biggl|
      \bigl\{(m_{k+1},m_{k+2},\dots,m_n)\mid (\bN,+) \models
                       \lambda(\overline{m})\land\psi'(\overline{m})
      \bigr\}
    \biggr|\in p+q\bN\,.
  \]
  Recall that $g$ maps the tuples satisfying $\lambda$ bijectively
  onto $L$. Hence, the above is equivalent to
  \[
    \biggl|
      \bigl\{w\in L \mid \exists m_{k+1},m_{k+2},\dots,m_n\in \bN\colon 
                          w=w_1^{m_1}\cdots w_n^{m_n}\in L\text{ and }
                          \cS\models\psi(w)
      \bigr\}
    \biggr|\in p+q\bN\,.
  \]
  Setting $\varphi'=\alpha_p^0$ therefore solves the problem.

  Consequently, any sentence $\varphi$ from $\FOMOD$ in the language
  of $\cS$ is translated into an equivalent sentence $\varphi'$ in the
  language of $(\bN,+)$. By \cite{Ape66,Sch05,HabK15}, validity of the
  sentence $\varphi'$ in $(\bN,+)$ is decidable.
\end{proof}

\section{The \CMOD$^2$-theory with regular predicates}

By \cite{KarS15}, the $\FO^2$-theory of
$(\Sigma^*,\subword,(L)_{L\text{ regular}})$ is decidable. This
two-variable fragment of first-order logic has a restricted expressive
power since, e.g., the following two properties cannot be expressed:
\begin{enumerate}
\item
  $x\cover y=(x\subword y\land x\neq y\land \forall z\colon(x\subword
  z\subword y\to (x=z\lor z=y))$.
\item
  $\exists x_1,x_2,x_3\colon\bigwedge_{1\le i<j\le 3} x_i\neq
  x_j\land\bigwedge_{1\le i\le 3}x_i\in L$.
\end{enumerate}
To make the first property accessible, we add the cover relation to
the structure. The logic $\CMOD^2$ allows to express the second
property with only two variables by $\exists^{\ge 3}x\colon x\in L$
(in addition, it can express that the regular language $L$ contains an
even number of elements which is not expressible in first-order logic
at all).

It is the aim of this section to show that the $\CMOD^2$-theory of the
structure
\[
  \cS=(\Sigma^*,\subword,\cover,(L)_{L\text{ regular}})
\]
is decidable. This decidability proof extends the proof from
\cite{KarS15} for the decidability of the $\FO^2$-theory of
$(\Sigma^*,\subword,(L)_{L\text{ regular}})$. That proof provides a
quantifier-elimination procedure that relies on two facts, namely
\begin{enumerate}
\item that the class of regular languages is closed under images under
  rational relations and
\item that the proper subword relation and the incomparablity relation are
  rational.
\end{enumerate}
Similarly, our more general result also provides a
quantifier-elimination procedure that relies on the following
extensions of these two properties:
\begin{enumerate}
\item The class of regular languages is closed under \emph{counting}
  images under \emph{unambiguous} rational relations
  (Section~\ref{SS-counting-closure-and-unambiguous-rational-relations})
  and
\item the proper subword, the cover, and the incomparability relation
  are \emph{unambiguous} rational
  (Section~\ref{SS-unambiguous-rational-relations}).
\end{enumerate}
The actual quantifier-elimination is then presented in
Section~\ref{SS-quantifier-elimination}.

\subsection{Unambiguous rational relations}
\label{SS-unambiguous-rational-relations}

Recall that, by Nivat's theorem~\cite{Ber79}, a relation
$R\subseteq\Sigma^*\times\Sigma^*$ is rational if there exist an
alphabet $\Gamma$, a homomorphism
$h\colon\Gamma^*\to\Sigma^*\times\Sigma^*$, and a regular language
$S\subseteq\Gamma^*$ such that $h$ maps $S$ surjectively onto $R$.  We
call $R$ \emph{unambiguous rational relation} if, in addition, $h$ maps
$S$ \emph{injectively} (and therefore bijectively) onto $R$.

\begin{example}
  The relations $R_1=\{(a^mba^n,a^m)\mid m,n\in\bN\}$ and
  $R_2=\{(a^mba^n,a^n)\mid m,n\in\bN\}$ are unambiguous rational: take
  $\Gamma=\{x,y,z\}$, $S=x^*yz^*$ and the homomorphisms
  \[
    x\mapsto (a,a)\ y\mapsto(b,\varepsilon)\ z\mapsto(a,\varepsilon)
  \]
  (for the first relation) and
  \[
    x\mapsto (a,\varepsilon)\ y\mapsto(b,\varepsilon)\ z\mapsto(a,a)\,.
  \]

  Note that the intersection $R_1\cap R_2$ is not even rational, while
  the union $R=R_1\cup R_2$ is rational (since the union of rational
  language is always rational) \cite{Ber79}. But this union is not
  unambiguous rational: If it were unambiguous rational, then the set
  \[
    \{u\in a^*ba^*\mid \exists^{\ge 2}v\colon(u,v)\in R\}
    =\{a^mba^n\mid m\neq n\}
  \]
  would be regular by Prop.~\ref{P-preservation-of-regularity} below.
\end{example}

\begin{lemma}\label{L-disjoint-union-of-unamb.-rational-relations}
  Let $R_1,R_2\subseteq\Sigma^*\times\Sigma^*$ be unambiguous rational
  and disjoint. Then $R_1\cup R_2$ is unambiguous rational.
\end{lemma}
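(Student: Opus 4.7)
The plan is to apply Nivat's theorem to each $R_i$ and then glue the two representations together using disjoint copies of the alphabets so that every word in the combined regular language witnesses its unique membership.

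More concretely, by Nivat's theorem, for each $i\in\{1,2\}$ there exist an alphabet $\Gamma_i$, a regular language $S_i\subseteq\Gamma_i^*$, and a homomorphism $h_i\colon\Gamma_i^*\to\Sigma^*\times\Sigma^*$ such that $h_i$ maps $S_i$ bijectively onto $R_i$. First I would take disjoint copies $\Gamma_1'$ and $\Gamma_2'$ of $\Gamma_1$ and $\Gamma_2$ (passing through a renaming if the original alphabets share letters), transport $S_i$ and $h_i$ along the renaming to obtain $S_i'\subseteq(\Gamma_i')^*$ and $h_i'\colon(\Gamma_i')^*\to\Sigma^*\times\Sigma^*$, and set $\Gamma=\Gamma_1'\cup\Gamma_2'$. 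Then I would define $h\colon\Gamma^*\to\Sigma^*\times\Sigma^*$ as the unique homomorphism extending the map sending $a\in\Gamma_i'$ to $h_i'(a)$, and let $S=S_1'\cup S_2'$, which is regular as a union of two regular languages.

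Next I would verify that $h$ maps $S$ bijectively onto $R_1\cup R_2$. Surjectivity is immediate since $h(S_i')=h_i'(S_i')=R_i$. For injectivity, suppose $w,w'\in S$ satisfy $h(w)=h(w')$. If $w,w'$ lie both in $S_1'$ or both in $S_2'$, then injectivity of $h_i'$ on $S_i'$ forces $w=w'$. Otherwise, say $w\in S_1'$ and $w'\in S_2'$; then $h(w)\in R_1$ and $h(w')\in R_2$, so $h(w)=h(w')\in R_1\cap R_2=\emptyset$ unless $w$ and $w'$ use no letters at all, i.e.\ $w=w'=\varepsilon$. In that single potentially troublesome case the common value is $(\varepsilon,\varepsilon)$, which would then belong to $R_1\cap R_2$, again contradicting disjointness; hence $\varepsilon$ can belong to at most one of $S_1',S_2'$ and the cross case is vacuous.

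The only genuinely delicate point is this handling of the empty word, but disjointness of $R_1,R_2$ disposes of it directly. Everything else is a routine disjoint-union construction, so once the alphabet-renaming bookkeeping is set up there is no further obstacle.
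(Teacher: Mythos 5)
Your proof is correct and follows essentially the same route as the paper: take (renamed, hence disjoint) alphabets, form the union of the regular languages and the combined homomorphism, and use disjointness of $R_1$ and $R_2$ to rule out collisions across the two parts. Your extra care about the empty word is a valid (if slightly over-elaborate) treatment of a detail the paper's proof passes over silently.
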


\begin{proof}
  There are disjoint alphabets $\Gamma_1$ and $\Gamma_2$, regular
  languages $S_i\subseteq\Gamma_i$ and homomorphisms
  $h_i\colon\Gamma_i^*\to\Sigma^*\times\Sigma^*$ such that $S_i$ is
  mapped bijectively onto $R_i$. Let $\Gamma=\Gamma_1\cup\Gamma_2$ and
  $S=S_1\cup S_2$. Let the homomorphism
  $h\colon\Gamma^*\to\Sigma^*\times\Sigma^*$ be given by
  \[
    h(a)=
    \begin{cases}
      h_1(a) & \text{ if }a\in\Gamma_1\\
      h_2(a) & \text{ if }a\in\Gamma_2
    \end{cases}
  \]
  for $a\in\Gamma$. Then $h$ maps the regular language~$S$ bijectively
  onto $R_1\cup R_2$.
\end{proof}

\begin{lemma}\label{L-subword-unamb.-rational}
  For any alphabet $\Sigma$, the cover relation $\cover$ and the
  relation $\propersubword\setminus\cover$ are unambiguous rational.
\end{lemma}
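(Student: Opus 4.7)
The plan is to encode each pair $(u,v)$ in the relation by its canonical \emph{leftmost embedding} and to verify that this encoding forms a regular language on which the encoding homomorphism is injective. Given $u\subword v$ with $u=a_1\cdots a_m$, write $v=v_0 a_1 v_1 a_2 v_2\cdots a_m v_m$ by matching the $i$-th letter of $u$ to the leftmost available occurrence of $a_i$ in $v$. This factorisation is uniquely characterised by the conditions that $v_0$ contains no $a_1$ and $v_i$ contains no $a_{i+1}$ for $1\le i<m$ (while $v_m$ is unconstrained); uniqueness follows by a short induction on $m$.

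I would then introduce an auxiliary alphabet $\Gamma$ consisting of three kinds of symbols: $\overline a$ (for $a\in\Sigma$), standing for a matched letter; $\langle a,b\rangle$ (for $a,b\in\Sigma$ with $a\ne b$), standing for the insertion of $b$ in $v$ while the next letter of $u$ still to be matched is $a$; and $[b]$ (for $b\in\Sigma$), standing for the insertion of $b$ in $v$ after $u$ has been exhausted. The associated homomorphism $h\colon\Gamma^*\to\Sigma^*\times\Sigma^*$ is then $\overline a\mapsto(a,a)$, $\langle a,b\rangle\mapsto(\varepsilon,b)$, and $[b]\mapsto(\varepsilon,b)$.

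The set $S\subseteq\Gamma^*$ of well-formed encodings is regular: a finite automaton whose states track ``the next letter of $u$ still to be matched'', together with an absorbing ``done'' state entered after the final $\overline a$ in which only the letters $[b]$ are permitted, recognises it. By uniqueness of the leftmost embedding, $h$ maps $S$ bijectively onto the whole subword relation $\subword$. Intersecting $S$ with the regular constraint of containing \emph{exactly one} letter of the form $\langle\cdot,\cdot\rangle$ or $[\cdot]$ yields an unambiguous rational presentation of $\cover$; intersecting with the constraint of containing \emph{at least two} such letters yields one for $\propersubword\setminus\cover$.

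I do not foresee a real obstacle. The only delicate point is checking that the ``forbidden next letter'' condition between consecutive $\overline a$-symbols precisely captures the leftmost embedding, and thereby eliminates the ambiguities that would otherwise arise (as in $u=a$, $v=aa$, where two insertion positions would both produce the same pair). Everything else is a routine finite-automaton construction.
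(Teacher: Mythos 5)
Your proposal is correct and follows essentially the same route as the paper: both encode each pair via its unique leftmost embedding (characterised by the segment before each matched letter containing no occurrence of that letter), obtain a regular set of encodings mapped bijectively onto $\subword$, and then carve out $\cover$ and $\propersubword\setminus\cover$ by counting the insertion symbols. The only cosmetic difference is that you push the ``forbidden next letter'' constraint into the annotated alphabet, whereas the paper keeps a plain two-copy alphabet and expresses the constraint in the regular expression.
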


\begin{proof}
  For $i\in\{1,2\}$, let $\Sigma_i=\Sigma\times\{i\}$ and
  $\Gamma=\Sigma_1\cup\Sigma_2$. Furthermore, let the homomorphism
  $\proj_i\colon\Gamma^*\to\Sigma^*$ be defined by $\proj_i(a,i)=a$
  and $\proj_i(a,3-i)=\varepsilon$ for all $a\in\Sigma$.  Finally, let
  the homomorphism $\proj\colon\Gamma^*\to\Sigma^*\times\Sigma^*$ be
  defined by $\proj(w)=(\proj_1(w),\proj_2(w))$.

  Now consider the regular language
  \[
    \Sub = \left(\bigcup_{a\in\Sigma}\Bigl(\bigl(\Sigma_2\setminus\{(a,2)\}\bigr)^*\, (a,2)\,(a,1)\Bigr)\right)^*
    {\Sigma_2}^*\,.
  \]

  Let $w\in \Sub$. Since any occurrence of a letter $(a,1)$ in $w$ is
  immediately preceeded by an occurrence of $(a,2)$, we get
  $\proj_1(w)\subword\proj_2(w)$.

  Conversely, if $u=a_1a_2\dots a_n\subword v$, then $v$ can be
  written (uniquely) as $v=v_1a_1\,v_2a_2\cdots v_na_n\,v_{n+1}$ such
  that $a_i$ does not occur in $v_i$ (for all
  $i\in\{1,2,\dots,n\}$). For $i\in[1,n+1]$ let $w_i\in\Gamma^*$ be
  the unique word from $\Sigma_2^*$ with $\proj_2(w_i)=v_i$. Then
  \[
    w=w_1(a_1,2)(a_1,1)\,w_2(a_2,2)(a_2,1)\,w_3(a_3,2)(a_3,1)\cdots w_n(a_n,2)(a_n,1)\,w_{n+1}
  \]
  belongs to $\Sub$ and satisfies $\proj(w)=(u,v)$.

  Since the factorization of $v$ is unique, we have that $\proj$ maps
  $\Sub$ bijectively onto the subword relation $\subword$.
  \begin{itemize}
  \item Let $S$ denote the intersection of $\Sub$ with
    $\bigl(\Sigma_2\Sigma_1\bigr)^*\,\Sigma_2\,\bigl(\Sigma_2\Sigma_1\bigr)^*$,
    i.e., the regular language of words from $\Sub$ with precisely one
    more occurrence of letters from $\Sigma_2$ than from
    $\Sigma_1$. Then $S$ is mapped bijectively onto the relation
    $\cover$, hence this relation is unambiguous rational. 
  \item Similarly, let $S'$ denote the regular language of all words
    from $\Sub$ with at least two more occurrences of letters from
    $\Sigma_2$ than from $\Sigma_1$. It is mapped bijectively onto the
    relation $\propersubword\setminus\cover$. Hence this relation is
    unambiguous rational.
  \end{itemize}
\end{proof}

\begin{lemma}\label{L-inc-unamb.-rational}
  For any alphabet $\Sigma$, the incomparability relation
  \[
    \mathbin{\parallel}=\{(u,v)\in\Sigma^*\times\Sigma^*\mid
    \text{neither }u\subword v\text{ nor }v\subword u\}
  \]
  is unambiguous rational.
\end{lemma}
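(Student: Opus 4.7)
My plan is to decompose $\parallel$ based on the longest common prefix (LCP) of its arguments, then build an unambiguous rational encoding for each resulting class.

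I would begin with the observation that for $(u,v)\in\parallel$, neither word is a prefix of the other (else one would embed into the other), so $u$ and $v$ admit a unique decomposition $u=pau'$, $v=pbv'$ where $p$ is their LCP and $a,b\in\Sigma$ are distinct. A short greedy-embedding argument shows $pau'\subword pbv'$ iff $au'\subword v'$: the leftmost embedding necessarily matches $p$ with the leading $p$ of $pbv'$, and must then skip the distinguishing $b$ to find an $a$ in $v'$. Hence $(u,v)\in\parallel$ iff additionally $au'\not\subword v'$ and $bv'\not\subword u'$. Writing $\mathrm{Inc}_{a,b}$ for the set of such pairs with fixed distinct $a,b$, one obtains the disjoint decomposition $\parallel=\bigsqcup_{a\neq b}\mathrm{Inc}_{a,b}$, disjoint by uniqueness of the LCP and of the distinguishing letters. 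By Lemma~\ref{L-disjoint-union-of-unamb.-rational-relations}, it then suffices to show each $\mathrm{Inc}_{a,b}$ is unambiguous rational.

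To encode $\mathrm{Inc}_{a,b}$ I would work over the alphabet $\Gamma=\Sigma_1\cup\Sigma_2$ and the projection $\proj$ from Lemma~\ref{L-subword-unamb.-rational}. The canonical encoding of a pair $(pau',pbv')\in\mathrm{Inc}_{a,b}$ would have three parts: (i)~the LCP $p$, encoded by interleaving as a word in $\bigl(\bigcup_{c\in\Sigma}(c,1)(c,2)\bigr)^*$; (ii)~the distinguishing letters $(a,1)$ and $(b,2)$ in a fixed order; and (iii)~a canonical shuffle of $u'$ and $v'$ jointly witnessing the two non-embedding conditions.

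The main obstacle is constructing part (iii) so that the resulting language is regular and the overall encoding unambiguous. I would proceed by a further case split on whether $a$ occurs in $v'$ and whether $b$ occurs in $u'$. When neither occurs, both non-embedding conditions are automatic and the simple shuffle placing $u'$ (in $\Sigma_1$) before $v'$ (in $\Sigma_2$) suffices. Otherwise, I would use the leftmost occurrence of $a$ in $v'$ (respectively $b$ in $u'$) as an anchor to split the words, reducing each non-embedding condition to a $\not\subword$-type statement certifiable by a block-structured regular language analogous to $\Sub$ of Lemma~\ref{L-subword-unamb.-rational}. The most delicate case, where both $a\in v'$ and $b\in u'$, requires interlacing two such block certificates into a single shuffle, with the anchors guaranteeing that each pair in $\mathrm{Inc}_{a,b}$ receives exactly one encoding.
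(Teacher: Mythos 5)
Your LCP decomposition is sound as far as it goes: for $u=pau'$, $v=pbv'$ with $p$ the longest common prefix and $a\neq b$, the equivalence of $(u,v)\in{\parallel}$ with ($au'\not\subword v'$ and $bv'\not\subword u'$) is correct, the union over distinct pairs $a,b$ is disjoint, and Lemma~\ref{L-disjoint-union-of-unamb.-rational-relations} applies. The gap is in the construction for $\Inc_{a,b}$, and it is not merely a matter of unfinished details: in the case where $a$ occurs in $v'$ and $b$ occurs in $u'$, your anchor step makes no progress. Writing $v'=v_1av_2$ with $a$ not occurring in $v_1$, and $u'=u_1bu_2$ with $b$ not occurring in $u_1$, the two conditions become $u'\not\subword v_2$ and $v'\not\subword u_2$ --- again two non-embedding conditions pointing in opposite directions, each relating the whole of one word to a suffix of the other. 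A block certificate in the style of $\Sub$ for $x\not\subword y$ fixes an interleaving in which $y$ is cut into blocks governed by the successive letters of $x$; here the single encoding word over $\Sigma_1\cup\Sigma_2$ would have to realise simultaneously one such block structure of $v'$ governed by the letters of $u'$ \emph{and} one of $u'$ governed by the letters of $v'$. These two interleavings of the same pair are in general incompatible, and a finite automaton reading one shuffle cannot track both greedy embeddings (each requires remembering an unbounded portion of the other word). You give no argument that the two certificates can be merged, so the ``most delicate case'' is exactly where the proof is missing.

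The paper sidesteps this issue by splitting on length rather than on the longest common prefix: $\parallel$ is the disjoint union of $R_1$ ($|u|<|v|$ and $u\not\subword v$), $R_2$ ($|u|=|v|$ and $u\neq v$) and $R_3=R_1^{-1}$. The point of that split is that in each piece at most one non-embedding condition is non-trivial (if $|u|<|v|$ then $v\not\subword u$ holds automatically), so a single greedy-failure block certificate --- the corrected characterisation of $R_1\cup R_2$ from \cite{KarS15} --- suffices and is unique, and $R_3$ comes for free from closure under inverses. To salvage your approach you would need an additional length-based split inside each $\Inc_{a,b}$ to kill one of the two conditions, at which point you have essentially rederived the paper's argument.
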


\begin{proof}
  Note that the set $\parallel$ is the disjoint union of the following
  three relations:
  \begin{enumerate}
  \item $R_1=\{(u,v)\mid |u|<|v|\text{ and not }u\subword v\}$,
  \item $R_2=\{(u,v)\mid |u|=|v|\text{ and  }u\neq v\}$, and
  \item $R_3=\{(u,v)\mid |u|>|v|\text{ and not }v\subword u\}$.
  \end{enumerate}

  As in the previous proof, let $\Sigma_i=\Sigma\times\{i\}$ and
  $\Gamma=\Sigma_1\cup\Sigma_2$. Furthermore, let the homomorphism
  $\proj_i\colon\Gamma^*\to\Sigma^*$ be defined by $\proj_i(a,i)=a$
  and $\proj_i(a,3-i)=\varepsilon$ for all $a\in\Sigma$.
  
  We prove that the relations $R_1$, $R_2$, and $R_3$ are all
  unambiguous rational. From
  Lemma~\ref{L-disjoint-union-of-unamb.-rational-relations}, we then
  get that $R$ is unambiguous rational since it is the disjoint union
  of these three relations.

  We start with the simple case: $R_2$. Consider the regular language
  \[
    \Inc_2=(\Sigma_1\Sigma_2)^*\cdot
      \{(a,1)(b,2)\mid a,b\in\Sigma, a\neq b\}
      \cdot (\Sigma_1\Sigma_2)^*\,.
  \]
  This is the set of sequences of words of the form $(a,1)(b,2)$ such
  that, at least once, $a\neq b$. Hence, $\proj$ maps the regular
  language $\Inc_2$ bijectively onto $R_2$.
  
  Next, we handle the relation $R_1\cup R_2$. Correcting \cite[Lemma 5.2]{KarS15}
  slightly, we learn that $(u,v)\in R_1\cup R_2$ if, and only if,
  \begin{itemize}
  \item $u=a_1a_2\dots a_\ell u'$ for some $\ell\ge1$,
    $a_1,\dots,a_\ell\in\Sigma$,  $u'\in\Sigma^*$, and
  \item
    $v\in
    (\Sigma\setminus\{a_1\})^*a_1\,(\Sigma\setminus\{a_2\})^*a_2\cdots
    (\Sigma\setminus\{a_{\ell-1}\})^*a_{\ell-1}\,
    (\Sigma\setminus\{a_\ell\})^+ v'$ for some word $v'\in\Sigma^*$
    with $|u'|=|v'|$.
  \end{itemize}
  Furthermore, the number $\ell$, the letters $a_1,a_2,\dots,a_\ell$,
  and the words $u'$ and $v'$ are unique. Define
  \[
    \Inc_{1,2}=
    \left(
      \bigcup_{a\in\Sigma}
        \Bigl(
          \bigl(\Sigma_1\setminus\{(a,1)\}\bigr)^*(a,1)(a,2)
        \Bigr)
    \right)^*
    \cdot
    \bigcup_{a\in\Sigma}
      \Bigl(\bigl(\Sigma_1\setminus\{(a,1)\}\bigr)^+(a,2)\Bigr)
    \cdot
    \left(\Sigma_1\Sigma_2\right)^*\,.
  \]
  By the above characterisation of $R_1\cup R_2$, the homomorphism
  $\proj$ maps $\Inc_{1,2}$ bijectively onto $R_1\cup R_2$. 

  Recall that $\Inc_2\subseteq\Inc_{1,2}$ is mapped bijectively onto
  $R_2$. Hence $\proj$ maps $\Inc_1=\Inc_{1,2}\setminus\Inc_2$
  bijectively onto $R_1$.

  Since the class of unambiguous rational relations is closed under
  inverses, also $R_3=R_1^{-1}$ is unambiguous rational.
\end{proof}

\subsection{Closure properties of the class of regular languages}
\label{SS-counting-closure-and-unambiguous-rational-relations}

Let $R\subseteq\Sigma^*\times\Sigma^*$ be an unambiguous rational
relation and $L\subseteq\Sigma^*$ a regular language. We want to show
that the languages of all words $u\in\Sigma^*$ with
\begin{align}
  & \text{ with }|\{v\in L\mid (u,v)\in R\}|\ge k\label{threshold-language}\\
  & (\text{with }|\{v\in L\mid (u,v)\in R\}|\in p+q\bN\text{, respectively})
    \label{modulo-language}
\end{align}
are effectively regular for all $k\in\bN$ and all $0\le p<q$,
respectively.

\begin{example}
  Consider the rational relation
  $R=\{(a^kba^\ell,a^m)\mid k=m\text{ or }\ell=m\}$ and the regular
  language $L=\Sigma^*$. With $k=2$, the language
  \eqref{threshold-language} equals the non-regular set
  $\{a^kba^\ell\mid k\neq \ell\}$. Thus, to prove effective
  regularity, we need to restrict the rational relation~$R$.
\end{example}

For these proofs, we need the following classical concepts.  Let $S$
be a semiring. A function $r\colon\Sigma^*\to S$ is \emph{realizable
  over $S$}, if there are $n\in\bN$, $\lambda\in S^{1\times n}$, a
homomorphism $\mu\colon\Sigma^*\to S^{n\times n}$, and
$\nu\in S^{n\times 1}$ with $r(w)=\lambda\cdot \mu(w)\cdot\nu$ for all
$w\in\Sigma^*$.\footnote{In the literature, a realizable function is
  often called recognizable formal power series. Since, in this paper,
  we will not encounter any operations on formal power series (like
  addition, Cauchy product etc), we use the (in this context) more
  intuitive notion of a ``realizable function''.} The triple
$(\lambda,\mu,\nu)$ is a \emph{presentation} or a \emph{weighted
  automaton for $r$}.

In the following, we consider the semiring $\bN^\infty$, i.e., the set
$\bN\cup\{\infty\}$ together with the commutative operations $+$ and
$\cdot$ (with $x+\infty=\infty$ for all $x\in\bN\cup\{\infty\}$,
$x\cdot\infty=\infty$ for all $x\in(\bN\cup\{\infty\})\setminus\{0\}$,
and $0\cdot\infty=0$). On this set, we define (in a natural way) an
infinite sum setting
\[
  \sum_{i\in I}x_i=
  \begin{cases}
    \infty  & \text{ if there are infinitely many $i\in I$ with $x_i>0$}\\
    \displaystyle
    \sum_{\substack{i\in I\\x_i>0}}x_i & \text{ otherwise}.
  \end{cases}
\]
for any family $(x_i)_{i\in I}$ with entries in $\bN^\infty$.

Our first aim in this section is to prove the following

\begin{proposition}\label{P-general-preservation}
  Let $\Gamma$ and $\Sigma$ be alphabets, $f\colon\Gamma^*\to\Sigma^*$
  a homomorphism, and $\chi\colon\Gamma^*\to\bN^\infty$ a realizable
  function over $\bN^\infty$. Then the function
  \[
    r=\chi\circ f^{-1}\colon\Sigma^*\to\bN^\infty
    \colon u\mapsto\sum_{\substack{w\in\Gamma^*\\f(w)=u}}\chi(w)
  \]
  is effectively realizable over $\bN^\infty$.
\end{proposition}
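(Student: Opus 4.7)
The plan is to construct a weighted automaton for $r$ by simulating the automaton for $\chi$ while reading input in $\Sigma^*$ rather than $\Gamma^*$, using silent transitions to account for letters $b\in\Gamma$ with $f(b)=\varepsilon$ and chains of intermediate states to handle letters with $|f(b)|\ge 2$. Let $(\lambda,\mu,\nu)$ of dimension $n$ be the given presentation of $\chi$ with main states $1,\dots,n$.

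The construction is as follows. Build a weighted $\varepsilon$-automaton $\mathcal{A}$ over $\Sigma$ with state set
\[
Q=\{1,\dots,n\}\,\cup\,\bigl\{(i,b,\ell)\mid 1\le i\le n,\ b\in\Gamma,\ 1\le\ell<|f(b)|\bigr\},
\]
initial and final weight vectors given by $\lambda,\nu$ on $\{1,\dots,n\}$ and $0$ elsewhere. For every $b\in\Gamma$ with $f(b)=\varepsilon$ include an $\varepsilon$-transition from $i$ to $j$ of weight $\mu(b)_{ij}$; for $b$ with $f(b)=c\in\Sigma$ include a $c$-transition from $i$ to $j$ of weight $\mu(b)_{ij}$; for $b$ with $f(b)=c_1c_2\cdots c_k$ and $k\ge 2$ include, for each $i$, a chain
\[
i\xrightarrow{c_1}(i,b,1)\xrightarrow{c_2}(i,b,2)\xrightarrow{c_3}\cdots\xrightarrow{c_{k-1}}(i,b,k-1)\xrightarrow{c_k}j
\]
whose final transition carries weight $\mu(b)_{ij}$ and whose intermediate ones carry weight $1$; crucially, no $\varepsilon$-transition leaves any intermediate state. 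Then any run of $\mathcal{A}$ on $u$ decomposes uniquely as a sequence of blocks corresponding to some $w=b_1\cdots b_m\in f^{-1}(u)$, and summing over all runs yields
\[
\mathcal{A}(u)=\sum_{w\in f^{-1}(u)}\lambda\mu(w)\nu=r(u).
\]
Forbidding $\varepsilon$-transitions out of intermediate states is essential: it prevents "inserting" an $\varepsilon$-letter of $\Gamma$ into a block $f(b)$, which would not correspond to a valid factorisation of $u$.

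Finally, I would eliminate the $\varepsilon$-transitions. Let $E\in(\bN^\infty)^{Q\times Q}$ be the $\varepsilon$-transition matrix of $\mathcal{A}$, supported on the main $n\times n$ block. Set $E^{*}=\sum_{k\ge 0}E^{k}$ and replace $\lambda$ by $\lambda E^{*}$ and each letter-transition matrix $M_{c}$ of $\mathcal{A}$ by $M_{c}E^{*}$; this yields a standard weighted automaton computing $r$. The main obstacle is to compute $E^{*}$ effectively in $(\bN^\infty)^{Q\times Q}$. Using the convention $0\cdot\infty=0$, each entry $E_{ij}$ is either $0$, a finite positive integer, or $\infty$; hence $(E^{*})_{pq}=\infty$ precisely when some edge of a path from $p$ to $q$ in the support graph of $E$ carries infinite weight, or when such a path passes through a vertex lying on a directed cycle. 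Both conditions are decidable via strongly-connected-component analysis and reachability, and on the complement $(E^{*})_{pq}$ is the finite sum of path-weights over the finitely many simple paths from $p$ to $q$. This gives the effective procedure producing a weighted automaton for $r$ over $\bN^\infty$.
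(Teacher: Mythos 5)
Your construction is correct, and it takes a partly different route from the paper. The paper factors $f$ as $f_2\circ f_1$ with $f_1$ non-expanding (erasing exactly the letters that $f$ erases) and $f_2$ non-erasing, and treats the two cases by separate lemmas: the non-expanding case by building exactly your matrix $E^{*}$ (there called $M=\sum_{w\in\Gamma_\varepsilon^*}\mu(w)$, with the infinite sum evaluated by a pumping argument equivalent to your cycle/reachability analysis on the support graph), and the non-erasing case by appealing to the decomposition of realizable series into rational operations (Kleene--Sch\"utzenberger via Sakarovitch) and observing that iteration is still only applied to proper series. Your single automaton with chains of intermediate states $(i,b,\ell)$ replaces that second, more abstract step by an explicit state-level construction; the dependence of the intermediate states on the entry state $i$ and the prohibition of $\varepsilon$-moves out of them are exactly the points needed to make the run--factorisation correspondence bijective, and you identify both. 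What your approach buys is self-containedness (no black-box rational-series theorem) at the cost of having to argue the $\varepsilon$-elimination and the computability of $E^{*}$ over $\bN^\infty$ by hand; your criterion for $(E^{*})_{pq}=\infty$ (an $\infty$-weight edge on some $p$--$q$ path, or a $p$--$q$ path through a vertex on a positive-weight cycle, otherwise a finite sum over the necessarily simple paths) is sound and decidable, and matches the paper's pumping-based test. The only cosmetic gap is that you do not spell out the semantics of the $\varepsilon$-weighted automaton (the value $\lambda E^{*}M_{u_1}E^{*}\cdots M_{u_{|u|}}E^{*}\nu$, well defined here because all infinite sums exist in $\bN^\infty$), but the elimination step you perform is the standard one and the verification is routine.
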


Before we can do this in all generality, we first consider two special
cases: A monoid homomorphism $f\colon\Gamma^*\to\Sigma^*$ between free
monoids is \emph{non-expanding} if $|f(w)|\le |w|$ for all
$w\in\Gamma^*$, i.e. $f(a)\in\Sigma\cup\{\varepsilon\}$ for all
$a\in\Gamma$. It is \emph{non-erasing} if, dually, $|f(w)|\ge|w|$ for
all $w\in\Gamma^*$, i.e., $f(a)\in\Sigma^+$ for all $a\in\Gamma$.

\begin{lemma}\label{L-non-expanding-preservation}
  Let $\Gamma$ and $\Sigma$ be alphabets, $f\colon\Gamma^*\to\Sigma^*$
  a non-expanding homomorphism, and $\chi\colon\Gamma^*\to\bN^\infty$
  a realizable function over $\bN^\infty$. Then the function
  \[
    r=\chi\circ f^{-1}\colon\Sigma^*\to\bN^\infty
    \colon u\mapsto\sum_{\substack{w\in\Gamma^*\\f(w)=u}}\chi(w)
  \]
  is effectively realizable over $\bN^\infty$.
\end{lemma}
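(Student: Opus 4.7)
The plan is to carry out the standard $\varepsilon$-removal construction for weighted automata, adapted to the semiring $\bN^\infty$. Starting from a presentation $(\lambda,\mu,\nu)$ of dimension $n$ for $\chi$, I split the letters of $\Gamma$ according to whether $f$ sends them to $\varepsilon$ or to a letter of $\Sigma$, and then collapse the ``$\varepsilon$-moves'' into the transitions that read actual letters of $\Sigma$. The essential point is that, although the construction requires summing the geometric series $M_\varepsilon^{*}=\sum_{k\ge 0} M_\varepsilon^{k}$ in the semiring of $n\times n$ matrices, this sum is always defined in $\bN^\infty$ because $\bN^\infty$ is a complete semiring (infinite sums are built into its definition in the preamble to the lemma).

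Concretely, I would set $\Gamma_\varepsilon = f^{-1}(\varepsilon)$ and $\Gamma_b = f^{-1}(b)$ for $b\in\Sigma$ (both finite since $f$ is non-expanding), and put $M_\varepsilon = \sum_{a\in\Gamma_\varepsilon}\mu(a)$ and $N_b = \sum_{a\in\Gamma_b}\mu(a)$. The new presentation is then $\lambda' = \lambda$, $\mu'(b)=M_\varepsilon^{*}\cdot N_b$ for every $b\in\Sigma$, and $\nu'=M_\varepsilon^{*}\cdot\nu$. To verify correctness, observe that every $w\in f^{-1}(b_1\cdots b_m)$ admits a unique factorisation $w = e_0\,c_1\,e_1\,c_2\cdots c_m\,e_m$ with $c_i\in\Gamma_{b_i}$ and $e_j\in\Gamma_\varepsilon^{*}$. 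Since $\mu$ is a monoid homomorphism, the identities $\sum_{e\in\Gamma_\varepsilon^{*}}\mu(e)=M_\varepsilon^{*}$ and $\sum_{c\in\Gamma_b}\mu(c)=N_b$ hold, and distributing the summation over the matrix product yields
$$r(b_1\cdots b_m) \;=\; \lambda\cdot M_\varepsilon^{*}\,N_{b_1}\,M_\varepsilon^{*}\,N_{b_2}\cdots M_\varepsilon^{*}\,N_{b_m}\,M_\varepsilon^{*}\cdot\nu,$$
which is exactly $\lambda'\cdot\mu'(b_1\cdots b_m)\cdot\nu'$; the case $m=0$ gives $r(\varepsilon)=\lambda M_\varepsilon^{*}\nu=\lambda'\nu'$ by the same argument.

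The only point that requires genuine care is the effective computation of $M_\varepsilon^{*}$ over $\bN^\infty$. I would view $M_\varepsilon$ as the weighted adjacency matrix of a finite directed graph on the $n$ states, with edge weights in $\bN^\infty$, and interpret $(M_\varepsilon^{*})_{ij}$ as the sum of the weights of all finite $i$-to-$j$ paths (a path's weight being the product of its edge weights). This entry equals $\infty$ iff either some $i$-to-$j$ path traverses an edge of weight $\infty$, or some $i$-to-$j$ path touches a cycle all of whose edges have positive weight; both conditions are decidable by standard reachability analyses. In the remaining case only acyclic paths contribute, and their finitely many weights can be summed explicitly. This closes the one non-routine step, and the resulting triple $(\lambda',\mu',\nu')$ is an effective presentation of $r$ over $\bN^\infty$.
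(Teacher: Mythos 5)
Your proof is correct and follows essentially the same route as the paper's: split $\Gamma$ into $f^{-1}(\varepsilon)$ and the classes $f^{-1}(b)$ for $b\in\Sigma$, absorb the $\varepsilon$-letters by forming the star $M_\varepsilon^{*}=\sum_{w\in\Gamma_\varepsilon^{*}}\mu(w)$ (the paper's matrix $M$), and fold it into the presentation. The only differences are cosmetic: you place $M_\varepsilon^{*}$ to the left of $N_b$ where the paper places $M$ to the right of $\mu(b)$, and you establish effectivity of the star via a graph-reachability characterisation of the infinite entries where the paper uses an equivalent pumping argument on word lengths.
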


\begin{proof}
  Let $(\lambda,\mu,\nu)$ be a presentation of dimension $n$ for
  $\chi$. 

  For $\sigma\in\Sigma\cup\{\varepsilon\}$, let
  \[
    \Gamma_\sigma=\{b\in\Gamma\mid f(b)=\sigma\}\,.
  \]
  Since $f$ is non-expanding, $\Gamma$ is the disjoint union of these
  subalphabets. Furthermore, let $M\in(\bN^\infty)^{n\times n}$ be the
  matrix defined by
  \begin{equation}
    M_{ij}=\sum_{w\in\Gamma_\varepsilon^*}\mu(w)_{ij}
    \label{eq:matrix-M}
  \end{equation}
  for all $i,j\in[1,n]$.

  To define a presentation for the function $r$, we first
  define a homomorphism
  $\mu'\colon\Sigma^*\to\bigl(\bN^\infty\bigr)^{n\times n}$ by
  \[
    \mu'(a)=\sum_{b\in\Gamma_a} \bigl(\mu(b)\cdot M\bigr)
  \]
  for all $a\in\Sigma$. Setting
  \[
    \lambda'=\lambda\cdot M \text{ and } \nu' =\nu
  \]
  defines the presentation $(\lambda',\mu',\nu')$ of dimension
  $n$. Now let $u=a_1a_2\dots a_m\in\Sigma^*$ with $a_i\in\Sigma$ for
  all $1\le i\le m$. Then we get
  \begin{align*}
    \lambda'\cdot\mu'(u)\cdot\nu'
    &=\lambda\cdot M
      \cdot
      \left(
        \prod_{1\le i\le m}
          \sum_{b_i\in \Gamma_{a_i}}\bigl(\mu(b_i)\cdot M\bigr) 
      \right)
      \cdot\nu\\
    &=\lambda\cdot\sum\bigl(\mu(w_0)\mid w_0\in\Gamma_\varepsilon^*\bigr)
      \cdot
      \left(
        \prod_{1\le i\le m}
          \sum\bigl(\mu(w_i)\mid w_i\in\Gamma_{a_i}\Gamma_\varepsilon^*\bigr)
      \right)
      \cdot\nu\\
    &=\lambda\cdot
      \sum\bigl(\mu(w)\mid w\in\Gamma_\varepsilon^*\Gamma_{a_1}\Gamma_\varepsilon^*\Gamma_{a_2}\cdots\Gamma_\varepsilon^*\Gamma_{a_m}\Gamma_\varepsilon^*
      \bigr)
      \cdot\nu\\
    &=\lambda\cdot
      \sum\bigl(\mu(w)\mid w\in\Gamma^*, f(w)=u\bigr)
      \cdot\nu\\
    &= r(u)\,.
  \end{align*}
  Hence, $(\lambda',\mu',\nu')$ is a presentation for the function
  $r$, i.e., $r$ is realizable.

  It remains to be shown that the presentation $(\lambda',\mu',\nu')$
  is computable from the presentation $(\lambda,\mu,\nu)$ and the
  homomorphism $f$. For this, it suffices to construct the matrix $M$
  effectively, i.e., to compute the infinite sum in
  Eq.~\eqref{eq:matrix-M}. Using a pumping argument, one first shows
  the equivalence of the following two statements for all
  $i,j\in\{1,2,\dots,n\}$:
  \begin{enumerate}[(a)]
  \item There are infinitely many words $w\in\Gamma_\varepsilon^*$
    with $\mu(w)_{ij}>0$.
  \item There is a word $w\in\Gamma_\varepsilon^*$ with $n<|w|\le 2n$
    and $\mu(w)_{ij}>0$.
  \end{enumerate}
  Since statement (b) is decidable, we can evaluate
  Eq.~\eqref{eq:matrix-M} calculating
  \[
    M_{ij}=
    \begin{cases}
      \infty & \text{ if (b) holds}\\
      \sum \bigl(\mu(w)_{ij}\mid w\in\Gamma_\varepsilon^*, |w|\le n\bigl)
             & \text{ otherwise.}
    \end{cases}
  \]           
\end{proof}

\begin{lemma}\label{L-non-erasing-preservation}
  Let $\Gamma$ and $\Sigma$ be alphabets, $f\colon\Gamma^*\to\Sigma^*$
  a non-erasing homomorphism, and $\chi\colon\Gamma^*\to\bN^\infty$ a
  realizable function over $\bN^\infty$. Then the function
  \[
    r=\chi\circ f^{-1}\colon\Sigma^*\to\bN^\infty
    \colon u\mapsto\sum_{\substack{w\in\Gamma^*\\f(w)=u}}\chi(w)
  \]
  is effectively realizable over $\bN^\infty$.
\end{lemma}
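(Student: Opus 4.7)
The plan is to mimic the proof of Lemma~\ref{L-non-expanding-preservation} but in reverse: rather than collapsing weighted $\varepsilon$-paths, I would \emph{unfold} each transition on a letter $a\in\Gamma$ into $|f(a)|$ steps that read the letters of $f(a)$ one at a time. Let $(\lambda,\mu,\nu)$ be an $n$-dimensional presentation of $\chi$ and, for each $a\in\Gamma$, write $f(a)=\sigma_{a,1}\sigma_{a,2}\cdots\sigma_{a,k_a}$ with $k_a=|f(a)|\ge 1$, which uses that $f$ is non-erasing.

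I would take as new state space $Q'=[1,n]\cup\{(q,a,j):q\in[1,n],\,a\in\Gamma,\,1\le j<k_a\}$. The states in $[1,n]$ are \emph{synchronised}: a computation visits them exactly at those positions of the input where a full block $f(b)$ has just been consumed. An intermediate state $(q,a,j)$ records that we committed, at synchronised state~$q$, to consume the block $f(a)$ and have already read its first $j$ letters. The transition matrices $\mu'(\sigma)$ are then defined entry-wise: from $q$ to $q'$ with weight $\mu(a)_{q,q'}$ whenever $k_a=1$ and $\sigma_{a,1}=\sigma$; from $q$ to $(q,a,1)$ with weight~$1$ whenever $k_a\ge 2$ and $\sigma_{a,1}=\sigma$; from $(q,a,j)$ to $(q,a,j+1)$ with weight~$1$ when $j<k_a-1$ and $\sigma=\sigma_{a,j+1}$; and from $(q,a,k_a-1)$ to $q'$ with weight $\mu(a)_{q,q'}$ when $\sigma=\sigma_{a,k_a}$. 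All other entries are zero. Set $\lambda'$ and $\nu'$ to extend $\lambda$ and $\nu$ by~$0$ on intermediate states.

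Correctness boils down to a bijection between accepting computations of $(\lambda',\mu',\nu')$ on input $u\in\Sigma^*$ and pairs $(w=b_1\cdots b_m,(q_0,\dots,q_m))$ with $f(w)=u$ and $q_0,\dots,q_m\in[1,n]$: since $\nu'$ vanishes on intermediate states, every accepting computation must re-enter $[1,n]$ precisely at block boundaries, and between consecutive synchronised states the letters of some $f(b_j)$ are read with the weight $\mu(b_j)_{q_{j-1},q_j}$ collected at the end of that segment. Summing over all paths yields $r(u)=\sum_{f(w)=u}\chi(w)$, so $(\lambda',\mu',\nu')$ presents $r$. Effectiveness is immediate, since $Q'$, $\lambda'$, $\nu'$, and the matrices $\mu'(\sigma)$ are all computed directly from $(\lambda,\mu,\nu)$ and the finite data $\{f(a)\}_{a\in\Gamma}$; being non-erasing, $f$ generates no infinite sum over $\Gamma_\varepsilon^*$, so no pumping argument analogous to the one at the end of Lemma~\ref{L-non-expanding-preservation} is needed. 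The only genuine subtlety is arranging the bookkeeping so that the weight $\mu(a)_{q,q'}$ is counted exactly once per simulated step and in lockstep with a committed choice of~$a$; attaching both the origin $q$ and the chosen letter $a$ to every intermediate state takes care of this.
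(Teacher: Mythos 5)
Your construction is correct, but it takes a genuinely different route from the paper. The paper does not build a presentation by hand: it invokes the rational-expression characterisation of realizable functions (\cite[Theorem~3.11]{Sak09}), decomposes $\chi$ into an expression built from series supported on single words using sum, Cauchy product, and iteration (applied only to series vanishing at $\varepsilon$), and observes that replacing each atomic series $s$ by $s\circ f^{-1}$ yields such an expression for $r$ --- the non-erasing hypothesis being used only to guarantee that iteration is still applied to series vanishing at $\varepsilon$. You instead work directly on the weighted automaton, unfolding each $\Gamma$-transition into a chain of $|f(a)|$ transitions over $\Sigma$ with the weight $\mu(a)_{q,q'}$ charged once at the end of the chain; here the non-erasing hypothesis ensures $k_a\ge 1$, so there are no $\varepsilon$-blocks to collapse and the fibre $f^{-1}(u)$ is finite. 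Your argument is more elementary and makes effectiveness immediate from the explicit construction, whereas the paper's is shorter but defers effectiveness to an analysis of the proof of the cited theorem. One point you should make explicit: when several letters $a\in\Gamma$ with $k_a=1$ satisfy $f(a)=\sigma$, the entry $\mu'(\sigma)_{q,q'}$ must be the \emph{sum} $\sum_{a:\,k_a=1,\,f(a)=\sigma}\mu(a)_{q,q'}$ of all the contributions you list (for $k_a\ge 2$ the intermediate states are tagged with $a$, so no such collision arises); with that reading, the path decomposition you describe gives exactly $\lambda'\cdot\mu'(u)\cdot\nu'=\sum_{f(w)=u}\lambda\cdot\mu(w)\cdot\nu=r(u)$, as required.
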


\begin{proof}
  Since $\chi$ is realizable, it can be constructed from functions
  $s\colon\Gamma^*\to\bN^\infty$ with $s(w)\neq0$ for at most one
  $w\in\Gamma^*$ using addition, Cauchy-product, and iteration applied
  to functions $t$ with $t(\varepsilon)=0$
  \cite[Theorem~3.11]{Sak09}. Replacing, in this construction, the
  basic function $s$ by $s'\colon\Sigma^*\to\bN^\infty$ with
  \[
    f'(x)=\sum_{\substack{w\in\Gamma^*\\f(w)=x}}s(w)
  \]
  yields a construction of $r$. Since $f$ is non-erasing, also in this
  construction, iteration is only applied to functions $t$ with
  $t(\varepsilon)=0$. Hence, by \cite[Theorem~3.1]{Sak09} again, $r$
  is realizable. Analysing the proof of that theorem, one even obtains
  that a presentation for $r$ can be computed from $f$ and a
  presentation of $\chi$.
\end{proof}


\begin{proof}[Proof of Prop.~\ref{P-general-preservation}]
  Let $\sigma\in\Sigma$ be arbitrary. Then define homomorphisms
  $f_1\colon \Gamma^*\to\Gamma^*$ and $f_2\colon\Gamma^*\to\Sigma^*$
  by
  \[
    f_1(a)=
    \begin{cases}
      \varepsilon &\text{if }f(a)=\varepsilon\\
      a & \text{otherwise}
    \end{cases}
    \text{ and }
    f_2(a)=
    \begin{cases}
      f(a) &\text{if }f(a)\neq\varepsilon\\
      \sigma & \text{otherwise}
    \end{cases}
  \]
  for all letters $a\in\Gamma$. Then $f_1$ is non-expanding, $f_2$ is
  non-erasing, and $f=f_2\circ f_1$.

  By Lemma~\ref{L-non-expanding-preservation}, the function
  $\chi\circ f_1^{-1}$ is effectively realizable. By
  Lemma~\ref{L-non-erasing-preservation}, also
  $\chi\circ f_1^{-1}\circ f_2^{-1}=\chi\circ f^{-1}$ is effectively
  realizable.  
\end{proof}

\begin{lemma}\label{L-unamb.-rat.-to-real.}
  Let $R\subseteq\Sigma^*\times\Sigma^*$ be an unambiguous rational
  relation and $L\subseteq\Sigma^*$ be regular. Then the function
  \[
    r\colon \Sigma^*\to\bN^\infty\colon u\mapsto|\{v\in L\mid (u,v)\in R\}|
  \]
  is effectively realizable.
\end{lemma}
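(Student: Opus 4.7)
The plan is to translate the counting problem into a preimage-sum of a realizable function and then apply Proposition~\ref{P-general-preservation}. By the unambiguous version of Nivat's theorem stated at the beginning of Section~\ref{SS-unambiguous-rational-relations}, there are an alphabet $\Gamma$, a regular language $S\subseteq\Gamma^*$, and a homomorphism $h=(h_1,h_2)\colon\Gamma^*\to\Sigma^*\times\Sigma^*$ (with component homomorphisms $h_1,h_2\colon\Gamma^*\to\Sigma^*$) such that $h$ maps $S$ bijectively onto $R$. These data are effectively computable from a presentation of $R$.

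Next, I would form the regular language $S'=S\cap h_2^{-1}(L)$. Since the class of regular languages is effectively closed under inverse homomorphisms and intersections, $S'$ is effectively regular. A deterministic automaton for $S'$ yields, in the standard way, a presentation over $\bN^\infty$ of the characteristic function $\chi_{S'}\colon\Gamma^*\to\bN^\infty$ with $\chi_{S'}(w)=1$ if $w\in S'$ and $\chi_{S'}(w)=0$ otherwise (take $\lambda$ with a $1$ in the initial-state entry, $\mu$ the usual transition matrices with $0/1$ entries, and $\nu$ with $1$s in the final-state entries).

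Now I apply Proposition~\ref{P-general-preservation} to the homomorphism $h_1\colon\Gamma^*\to\Sigma^*$ and the realizable function $\chi_{S'}$. This yields that the function
\[
  r'\colon\Sigma^*\to\bN^\infty\colon u\mapsto\sum_{\substack{w\in\Gamma^*\\ h_1(w)=u}}\chi_{S'}(w)=|\{w\in S'\mid h_1(w)=u\}|
\]
is effectively realizable over $\bN^\infty$. It remains to identify $r'$ with $r$. Given $u\in\Sigma^*$, the unambiguity of $h$ on $S$ implies that the map $w\mapsto h_2(w)$ is a bijection between $\{w\in S\mid h_1(w)=u\text{ and }h_2(w)\in L\}=\{w\in S'\mid h_1(w)=u\}$ and $\{v\in L\mid (u,v)\in R\}$: surjectivity uses that $h(S)=R$, and injectivity is exactly unambiguity of the Nivat presentation. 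Hence $r(u)=r'(u)$ for every $u$, and $r$ is effectively realizable.

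The only delicate point is the use of unambiguity: without the bijectivity of $h\restriction S$, the sum defining $r'$ would over-count the pairs $(u,v)\in R$, so Proposition~\ref{P-general-preservation} would not directly deliver $r$. Everything else is routine bookkeeping with effective closure properties of regular languages and of realizable functions.
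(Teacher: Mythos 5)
Your proof is correct and follows essentially the same route as the paper: take an unambiguous Nivat presentation $(S,h_1,h_2)$ of $R$, intersect $S$ with $h_2^{-1}(L)$, realize the characteristic function of the resulting regular language, and push it through Proposition~\ref{P-general-preservation} along $h_1$, with unambiguity guaranteeing that the preimage count equals $|\{v\in L\mid (u,v)\in R\}|$. The only cosmetic difference is notation (the paper writes $f,g$ for your $h_1,h_2$ and $S_L$ for your $S'$).
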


\begin{proof}
  Since $R$ is unambiguous rational, there are an alphabet $\Gamma$,
  homomorphisms $f,g\colon\Gamma^*\to \Sigma^*$, and a regular
  language $S\subseteq\Gamma^*$ such that
  \[
    (f,g)\colon\Gamma^*\to\Sigma^*\times\Sigma^*
         \colon w\mapsto\bigl(f(w),g(w)\bigr)
  \]
  maps $S$ bijectively onto the relation $R$. Let
  $S_L=S\cap g^{-1}(L)$. Since $L$ is regular, the language $S_L$ is
  effectively regular. Furthermore, $(f,g)$ maps $S_L$ bijectively
  onto $R\cap (\Sigma^*\times L)$.

  Since $S_L$ is regular, the characteristic function
  \[
    \chi\colon\Gamma^*\to \bN^\infty\colon w\mapsto
    \begin{cases}
      1 & \text{ if } w\in S_L\\
      0 & \text{ otherwise}
    \end{cases}
  \]
  is effectively realizable over $\bN^\infty$ \cite[Proposition~3.12]{Sak09}.

  By Proposition~\ref{P-general-preservation}, also the function
  \[
    r'\colon\Sigma^*\to \bN^\infty
      \colon u\mapsto \sum_{\substack{w\in \Gamma^*\\f(w)=u}}\chi(w)
  \]
  is effectively realizable over $\bN^\infty$. Note that, for $u\in\Sigma^*$,
  we get
  \begin{align*}
    r'(u)
    &= \sum_{\substack{w\in\Gamma^*\\f(w)=u}}\chi(w)\\
    &= |\{w\in S_L\mid f(w)=u\}|
      &&\text{ since $\chi$ is the characteristic function of }S_L\\
    &= |\{g(w)\mid  w\in S_L, f(w)=u\}|
      &&\text{ since }(f,g)\text{ is injective on }S_L\subseteq S\\
    &= |\{v\mid (u,v)\in R\cap(\Sigma^*\times L)\}|
      &&\text{ since }(f,g)\text{ maps $S_L$ onto }R\cap(\Sigma^*\times L)\\
    &= |\{v\in L\mid (u,v)\in R\}|\,.
  \end{align*}
  In other words, the effectively realizable function $r'$ is the
  function $r$ from the statement of the lemma.
\end{proof}

\begin{proposition}\label{P-preservation-of-regularity}
  Let $R\subseteq\Sigma^*\times\Sigma^*$ be an unambiguous rational
  relation and $L\subseteq\Sigma^*$ be regular.
  \begin{enumerate}
  \item For $k\in\bN$, the set of words $u\in\Sigma^*$ with
    \[
      |\{v\in L\mid (u,v)\in R\}|\ge k
    \]
    is effectively regular.
  \item For $p,q\in\bN$ with $p<q$, the set $H$ of words
    $u\in\Sigma^*$ with
    \[
      |\{v\in L\mid (u,v)\in R\}|\in p+q\bN
    \]
    is effectively regular.
  \end{enumerate}
\end{proposition}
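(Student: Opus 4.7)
The plan is to apply Lemma~\ref{L-unamb.-rat.-to-real.} to obtain a weighted automaton $(\lambda,\mu,\nu)$ of some dimension $n$ over $\bN^\infty$ computing the function $r\colon\Sigma^*\to\bN^\infty$ with $r(u)=|\{v\in L\mid(u,v)\in R\}|$. For both items the strategy is then to construct a suitable \emph{finite} semiring $T$, an effective semiring homomorphism $\phi\colon\bN^\infty\to T$, and a set $A\subseteq T$ such that the condition on $r(u)$ is equivalent to $\phi(r(u))\in A$. Applying $\phi$ entrywise to $(\lambda,\mu,\nu)$ yields a weighted automaton $(\lambda',\mu',\nu')$ over $T$ computing $\phi\circ r$. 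Since $T^{n\times n}$ is finite, the classical subset-style construction with states in $T^{n\times n}$, initial state the identity matrix, transitions $M\mapsto M\cdot\mu'(a)$, and $M$ accepting iff $\lambda'\cdot M\cdot\nu'\in A$ yields an effectively constructible DFA for the desired language.

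For item~1 I take $T=\{0,1,\dots,k\}$ with truncated operations $a\oplus b=\min(a+b,k)$ and $a\otimes b=\min(a\cdot b,k)$, together with $\phi(x)=\min(x,k)$ (so $\phi(\infty)=k$). A routine case analysis shows that $\phi$ is a semiring homomorphism, and obviously $r(u)\ge k\iff\phi(r(u))=k$, so $A=\{k\}$ does the job.

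For item~2 the naive choice $T=\bZ/q\bZ\cup\{\bot\}$ fails because the identity $0\cdot\infty=0$ in $\bN^\infty$ is incompatible with what one would want for $n\cdot\bot$ when $n\equiv 0\pmod q$ but $n\ne 0$. I resolve this by retaining an additional ``positivity'' marker:
\[
  T \;=\; \{0,\infty\}\;\cup\;\bigl\{(r,\ast)\mid r\in\bZ/q\bZ\bigr\},
\]
with $\phi(0)=0$, $\phi(\infty)=\infty$, and $\phi(n)=(n\bmod q,\ast)$ for $n\ge 1$. Addition: $0$ is neutral, $\infty$ absorbs everything, and $(r_1,\ast)+(r_2,\ast)=(r_1+r_2,\ast)$. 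Multiplication: $0$ is absorbing, $\infty\cdot x=\infty$ whenever $x\ne 0$, and $(r_1,\ast)\cdot(r_2,\ast)=(r_1 r_2,\ast)$. A case check shows that $T$ is a commutative semiring of size $q+2$ and that $\phi$ is a semiring homomorphism; the marker $\ast$ is precisely what makes both $0\cdot\infty=0$ and $n\cdot\infty=\infty$ (for $n\ge 1$) compatible even when $n\bmod q=0$. Since $r(u)\in p+q\bN$ iff $r(u)<\infty$ and $r(u)\equiv p\pmod q$, we take $A=\{(p,\ast)\}$ when $p>0$ and $A=\{0,(0,\ast)\}$ when $p=0$.

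The main obstacle is the design of $T$ in item~2: one must simultaneously track the residue modulo~$q$ and enough information to distinguish $0$ from $\infty$ from nonzero finite values, while remaining a homomorphic image of $\bN^\infty$ as a semiring. Everything else is the standard observation that weighted automata over a finite semiring are finite-state.
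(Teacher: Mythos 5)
Your proof is correct and follows essentially the same route as the paper: realize $r(u)=|\{v\in L\mid(u,v)\in R\}|$ over $\bN^\infty$ via Lemma~\ref{L-unamb.-rat.-to-real.}, push the presentation through a semiring homomorphism onto a finite semiring, and exploit finiteness to obtain an effective DFA. The only (cosmetic) differences are that in item~1 you identify $\infty$ with $k$ rather than keeping it as a separate element, and in item~2 your semiring is isomorphic to the paper's $\bZ_q^\infty$, with your marked element $(0,\ast)$ playing the role of the paper's extra element $q$ that distinguishes ``nonzero but divisible by $q$'' from $0$.
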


\begin{proof}
  By Lemma~\ref{L-unamb.-rat.-to-real.}, the function
  $r\colon\Sigma^*\to\bN^\infty\colon u\mapsto|\{v\in L\mid (u,v)\in
  R\}|$ is effectively realizable over $\bN^\infty$.

  We construct a semiring
  $S_k^\infty=(\{0,1,\dots,k,\infty\},\oplus,\odot,0,1)$ setting
  \[
    x\oplus y=
    \begin{cases}
      x+y & \text{ if }x+y\in S_k^\infty\\
      k &\text{ otherwise}
    \end{cases}
    \text{ and } 
    x\odot y=
    \begin{cases}
      x\cdot y & \text{ if }x\cdot y\in S_k^\infty\\
      k &\text{ otherwise}
    \end{cases}
  \]
  for $x,y\in S_k^\infty$ (note that $x+y\notin S_k^\infty$ is
  equivalent to $k<x+y<\infty$).

  Then the mapping
  \[
    h\colon\bN^\infty\to S_k^\infty\colon n\mapsto
    \begin{cases}
      \min\{k,n\} & \text{ if }n\in\bN\\
      \infty & \text{ if }n=\infty
    \end{cases}
  \]
  is a semiring homomorphism. It follows that the function
  \[
    h\circ r\colon\Sigma^*\to S_k^\infty\colon
    u\mapsto\min\{k,|\{u\in L\mid (u,v)\in R\}|\}
  \]
  is effectively realizable over $S_k^\infty$
  \cite[Prop.~4.5]{Sak09}. Since the semiring $S_k^\infty$ is finite,
  the language
  \[
    (h\circ r)^{-1}(k)=\{u\in\Sigma^*\mid r(u)\ge k\}
  \]
  is effectively regular \cite[Prop.~6.3]{Sak09}. Since
  $r(u)\ge k\iff |\{v\in L\mid (u,v)\in R\}|\ge k$, this proves the
  first statement.

  The second statement is shown similarly. We construct the semiring
  $\bZ_q=(\{0,1,\dots,q,\infty\},\oplus,\odot,0,1)$ with
  \[
    x\oplus y=
    \begin{cases}
      x+y & \text{ if }x+y\in \bZ_q^\infty\\
      x+y \bmod q & \text{ otherwise}
    \end{cases}
    \text{ and } 
    x\odot y=
    \begin{cases}
      x\cdot y & \text{ if }x\cdot y\in\bZ_q^\infty\\
      x\cdot y\bmod q &\text{ otherwise}
    \end{cases}
  \]
  for $x,y\in S_k^\infty$. Note that, with $q=6$, we get
  $4\oplus 2=6\neq 0=(4+2)\bmod q$ and similarly
  $3\odot 2=6\neq 0=(3\cdot2)\bmod q$.\footnote{The reader might
    wonder why both, $0$ and $q$, belong to $\bZ_q^\infty$. Suppose we
    identified them, i.e., considered $S=\{0,1,\dots,q-1,\infty\}$. Then
    we would get
    $0 = 0\odot \infty = (1\oplus(q-1))\odot\infty =
    1\odot\infty\oplus(q-1)\odot\infty=\infty\oplus\infty=\infty$.}

  Let $\eta$ denote the semiring homomorphism from $\bN^\infty$ to
  $\bZ_q^\infty$ with
  \[
    \eta(n)=
    \begin{cases}
      \infty & \text{ if }n=\infty\\
      q & \text{ if }n\in q\bN\setminus\{0\}\\
      n\bmod q & \text{ otherwise.}
    \end{cases}
  \]
  It follows that the function
  \[
    \eta\circ r\colon\Sigma^*\to \bZ_q^\infty\colon
        u\mapsto \eta(|\{v\in L\mid (u,v)\in R\}|)
  \]
  is effectively realizable over $\bZ_q^\infty$
  \cite[Prop.~4.5]{Sak09}. Since the semiring $\bZ_q^\infty$ is
  finite, the language
  \[
    (\eta\circ r)^{-1}(x)
  \]
  is effectively regular for all $x\in \bZ_q^\infty$
  \cite[Prop.~6.3]{Sak09}. Then the claim follows since
  \[
    H =
    \begin{cases}
      (\eta\circ r)^{-1}(p) & \text{ if }1\le p<q\\
      (\eta\circ r)^{-1}(0)\cup(\eta\circ r)^{-1}(q)
                              & \text{ if }p=0\,.
    \end{cases}
  \]
\end{proof}

\subsection{Quantifier elimination for $\CMOD^2$}
\label{SS-quantifier-elimination}

Our decision procedure employs a quantifier alternation procedure,
i.e., we will transform an arbitrary formula into an equivalent one
that is quantifer-free. As usual, the heart of this procedure handles
formulas $\psi=\mathrm{Q}y\,\varphi$ where $\mathrm{Q}$ is a
quantifier and $\varphi$ is quantifier-free. Since the logic $\CMOD^2$
has only two variables, any such formula $\psi$ has at most one free
variable. In other words, it defines a language $K$.  The following
lemma shows that this language is effectively regular, such that
$\psi$ is equivalent to the quantifier-free formula $x\in K$.

\begin{lemma}\label{L-quantifier-elimination}
  Let $\varphi(x,y)$ be a quantifier-free formula from $\CMOD^2$. Then
  the sets
  \[
    \{x\in\Sigma^*\mid \cS\models\exists^{\ge k}y\,\varphi\}\text{ and
    } \{x\in\Sigma^*\mid \cS\models\exists^{p\bmod q}y\,\varphi\}
  \]
  are effectively regular for all $k\in\bN$ and all $p,q\in\bN$ with $p<q$.
\end{lemma}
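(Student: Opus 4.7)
The plan is to eliminate the quantifier $\mathrm{Q}y$ by case analysis on the relative order position of $x$ and $y$. The universe $\Sigma^*\times\Sigma^*$ is the disjoint union of six relations: equality, the cover relation $\cover$, its inverse, the ``strict but not a cover'' relation $\propersubword\setminus\cover$, its inverse, and the incomparability relation $\parallel$. All six are effectively unambiguous rational: equality via the trivial presentation $h(a)=(a,a)$ on $\Sigma^*$; the remaining five by Lemmas~\ref{L-subword-unamb.-rational} and~\ref{L-inc-unamb.-rational}, together with the observation that the class of unambiguous rational relations is closed under inverse (immediate from the definition via Nivat's theorem).

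On each of these pieces $R_i$, every binary atomic formula ($x=y$, $x\subword y$, $y\subword x$, $x\cover y$, $y\cover x$) has a fixed truth value, so $\varphi\land(x,y)\in R_i$ simplifies to $\theta_i(x,y)\land(x,y)\in R_i$, where $\theta_i$ is a Boolean combination of unary predicates $x\in K$ and $y\in L$ only. Taking the atoms of the finite Boolean subalgebra generated by the occurring predicates on $x$ (resp.\ on $y$), I write $\theta_i$ as a finite disjunction $\bigvee_j(x\in K_{i,j}\land y\in L_{i,j})$ with $K_{i,j},L_{i,j}$ effectively regular and the rectangles $K_{i,j}\times L_{i,j}$ pairwise disjoint. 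Therefore, for every $x\in\Sigma^*$,
\[
  |\{y\in\Sigma^*:\cS\models\varphi(x,y)\}|=\sum_{i,j}n_{i,j}(x),
\]
where $n_{i,j}(x):=|\{y\in L_{i,j}:(x,y)\in R_i\}|$ if $x\in K_{i,j}$, and $0$ otherwise. Proposition~\ref{P-preservation-of-regularity} applied to the unambiguous rational relation $R_i$ and regular language $L_{i,j}$, combined with the regularity of $K_{i,j}$, yields that $\{x:n_{i,j}(x)\geq k'\}$ and $\{x:n_{i,j}(x)\in r+q\bN\}$ are effectively regular for every $k'\in\bN$ and every $r\in\{0,\dots,q-1\}$.

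It remains to assemble the final counts. For the threshold quantifier, $\sum_{i,j}n_{i,j}(x)\geq k$ is equivalent to the finite disjunction over tuples $(k_{i,j})$ of non-negative integers with $\sum k_{i,j}=k$ of $\bigwedge_{i,j}(n_{i,j}(x)\geq k_{i,j})$, and so defines an effectively regular set. For the modular quantifier, the condition $\sum_{i,j}n_{i,j}(x)\in p+q\bN$ already requires the sum to be finite, which happens precisely when there exist residues $r_{i,j}\in\{0,\dots,q-1\}$ with $\sum r_{i,j}\equiv p\pmod q$ and $n_{i,j}(x)\in r_{i,j}+q\bN$ for every $(i,j)$ (each such modular membership separately forces $n_{i,j}(x)$ to be finite). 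This is again a finite disjunction of regular conditions, so the corresponding set is effectively regular. The main subtlety is the preparatory step of obtaining truly disjoint rectangles inside each $R_i$, which is what makes the individual counts add to the total without overcounting; once the six-way decomposition and the rectangle decomposition are in place, everything else reduces to Proposition~\ref{P-preservation-of-regularity} plus the elementary combinatorial identities above.
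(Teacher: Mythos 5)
Your proposal is correct and follows essentially the same route as the paper's proof: the six-way disjoint decomposition by order type, the reduction to disjoint rectangles of regular languages, the application of Proposition~\ref{P-preservation-of-regularity} to each piece, and the combinatorial reassembly of the counts via tuples summing to $k$ (resp.\ residues summing to $p$ modulo $q$). The only cosmetic difference is that you treat equality uniformly as an unambiguous rational relation, whereas the paper handles that case by direct inspection.
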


\begin{proof}
  Without changing the meaning of the formula $\varphi$, we can do the
  following replacements of atomic formulas:
  \begin{itemize}
  \item $x=y$ can be replaced by $x\subword y\land y\subword x$,
  \item $x\subword x$ and $y\subword y$ by $x\in\Sigma^*$, and
  \item $x\cover x$ and $y\cover y$ by $x\in\emptyset$.
  \end{itemize}
  Since $\varphi$ is quantifier-free, we can therefore assume that it
  is a Boolean combination of formulas of the form
  \begin{itemize}
  \item $x\in K$ for some regular language $K$,
  \item $y\in L$ for some regular language $L$,
  \item $x\subword y$, 
  \item $y\subword x$,
  \item $x\cover y$, and 
  \item $y\cover x$.
  \end{itemize}

  We define the following formulas $\theta_i(x,y)$ for $1\le i\le 6$:
  \[
    \theta_i(x,y)=
    \begin{cases}
      x=y & \text{ if }i=1\\
      x\cover y & \text{ if }i=2\\
      x\subword y\land\lnot(x=y\lor x\cover y) & \text{ if }i=3\\
      y\cover x & \text{ if }i=4\\
      y\subword x\land\lnot(y=x\lor y\cover x) & \text{ if }i=5\\
      \lnot(x\subword y\lor y\subword x) & \text{ if }i=6
    \end{cases}
  \]
  Note that any pair of words $x$ and $y$ satisfies precisely one of
  these six formulas. Hence $\varphi$ is equivalent to
  \[
    \bigvee_{1\le i\le 6}\bigl(\theta_i\land\varphi)\,.
  \]
  In this formula, any occurrence of $\varphi$ appears in conjunction
  with precisely one of the formulas $\theta_i$. Depending on this
  formula $\theta_i$, we can simplify $\varphi$ to $\varphi_i$ by
  replacing the atomic subformulas that compare $x$ and $y$ as
  follows:
  \begin{itemize}
  \item If $i\in\{1,2,3\}$, we replace $x\subword y$ by the valid
    formula $\top=(x\in\Sigma^*)$.
  \item If $i\in\{1,4,5\}$, we replace $y\subword x$ by $\top$.
  \item If $i=2$, we replace $x\cover y$ by $\top$.
  \item If $i=4$, we replace $y\cover x$ by $\top$.
  \end{itemize}
  All remaining comparisions are replaced by $\bot=(x\in\emptyset)$.

  As a result, the formula $\varphi$ is equivalent to
  \[
    \bigvee_{1\le i\le 6}\bigl(\theta_i\land\varphi_i)
  \]
  where the formulas $\varphi_i$ are Boolean combinations of formulas
  of the form $x\in K$ and $y\in L$ for some regular languages $K$ and
  $L$.
  
  Now let $k\in\bN$. Since the formulas $\theta_i$ are mutually
  exclusive (i.e., $\theta_i(x,y)\land\theta_j(x,y)$ is satisfiable
  iff $i=j$), we get
  \[
    \exists^{\ge k}y\,\varphi
    \equiv
    \exists^{\ge k}y\,\bigvee_{1\le i\le 6}(\theta_i\land\varphi_i)
    \equiv
      \bigvee_{(*)}
       \bigwedge_{1\le i\le 6}
          \exists^{\ge k_i}y\,(\theta_i\land\varphi_i)
  \]
  where the disjunction $(*)$ extends over all tuples
  $(k_1,\dots,k_6)$ of natural numbers with $\sum_{1\le i\le 6}k_i=k$.

  Hence it suffices to show that
  \begin{equation}
    \label{eq:formula-theta}
    \{x\in\Sigma^*\mid \exists^{\ge k}y\,(\theta_i \land\varphi)\}
  \end{equation}
  is effectively regular for all $1\le i\le 6$, all $k\in\bN$, and all
  Boolean combinations $\varphi$ of formulas of the form $x\in K$ and
  $y\in L$ where $K$ and $L$ are regular languages. Since the class of
  regular languages is closed under Boolean operations, we can find
  regular languages $K_i$ and $L_i$ such that $\varphi$ is equivalent
  to
  \[
    \bigvee_{1\le i\le n}(x\in K_i\land y\in L_i)\,.
  \]

  Note that this formula is equivalent to
  \[
    \bigvee_{M\subseteq\{1,\dots,n\}}
    \left(
      x\in \underbrace{\bigcap_{i\in M}K_i\setminus\bigcup_{i\notin M}K_i}
           _{=K_M}
      \land
      y\in \underbrace{\bigcup_{i\in M}L_i}
           _{=L_M}
    \right)\,.
  \]
  Since this disjunction is exclusive (i.e. any pair of words $(x,y)$
  satisfies at most one of the cases), the set from
  \eqref{eq:formula-theta} equals the union of the sets
  \begin{equation}
    \label{eq:formula-theta2}
    \{x\in\Sigma^*\mid \exists^{\ge k}y\,(\theta_i \land x\in K_M\land y\in L_M)\}
  \end{equation}
  for $M\subseteq\{1,2,\dots,n\}$. Observe that for $k=0$, this set equals $\Sigma^*$ and we are done. So let us assume $k\ge 1$ from now on. Note that in that case, the set from
  \eqref{eq:formula-theta2} equals
  \[
      K_M\cap\{x\in\Sigma^*\mid \exists^{\ge k}y\in L_M\colon\theta_i\}\,.
  \]
  This set, in turn, equals
  \begin{itemize}
  \item $K_M\cap L_M$ if $i=1$ and $k=1$,
  \item $\emptyset$ if $i=1$ and $k>1$,
  \item $K_M\cap \{x\in\Sigma^*\mid\exists^{\ge k}y\in L_M\colon x\cover y\}$
    if $i=2$,
  \item
    $K_M\cap \{x\in\Sigma^*\mid\exists^{\ge k}y\in L_M\colon (x,
    y)\in{\propersubword\setminus\cover}\}$ if $i=3$,
  \item $K_M\cap \{x\in\Sigma^*\mid\exists^{\ge k}y\in L_M\colon y\cover x\}$
    if $i=4$,
  \item
    $K_M\cap \{x\in\Sigma^*\mid\exists^{\ge k}y\in L_M\colon (y,
    x)\in{\propersubword\setminus\cover}\}$ if $i=5$, and
  \item
    $K_M\cap \{x\in\Sigma^*\mid\exists^{\ge k}y\in L_M\colon
    x\parallel y\}$ if $i=6$.
  \end{itemize}
  In any case, it is effectively regular by
  Prop.~\ref{P-preservation-of-regularity},
  Lemma~\ref{L-subword-unamb.-rational}, and
  Lemma~\ref{L-inc-unamb.-rational}. Since the language from the claim
  of the lemma is a Boolean combination of such languages, the first
  claim is demonstrated.

  To also demonstrate the regularity of the second language, let
  $p,q\in\bN$ with $p<q$. Then $\exists^{p\bmod q}y\,\varphi$ is
  equivalent to the disjunction of all formulas of the form
  \[
     \bigwedge_{1\le i\le 6}
        \exists^{p_i\bmod q}y\,(\theta_i\land\varphi_i)
  \]
  where $(p_1,\dots,p_6)$ is a tuple of natural numbers from
  $\{0,1,\dots,q-1\}$ with $\sum_{1\le i\le 6}p_i\equiv p\bmod q$.
  The rest of the proof proceeds \textit{mutatis mutandis}.
\end{proof}

\begin{theorem}
  Let $\cS=(\Sigma^*,\subword,\cover,(L)_{L\text{ regular}})$. Let
  $\varphi(x)$ be a formula from $\CMOD^2$. Then the set
  \[
    \{x\in\Sigma^*\mid \cS\models\varphi(x)\}
  \]
  is effectively regular.
\end{theorem}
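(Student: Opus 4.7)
The plan is to proceed by structural induction on $\varphi$, proving the slightly stronger statement that every $\CMOD^2$-formula is effectively equivalent to a quantifier-free $\CMOD^2$-formula. Once this is established, a quantifier-free formula with free variable $x$ (after trivially simplifying $x=x$, $x\subword x$, $x\cover x$ to $\top$, $\top$, $\bot$ respectively) is a Boolean combination of atoms of the form $x\in K$ for regular $K$, and hence its truth set is effectively a regular language by closure of regular languages under Boolean operations.

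The base case is immediate since atomic formulas are already quantifier-free. The cases $\varphi=\lnot\psi$ and $\varphi=\psi_1\land\psi_2$ follow directly from the inductive hypothesis. The only interesting case is $\varphi=\mathrm{Q}z\,\psi$ where $\mathrm{Q}\in\{\exists,\exists^{\ge k},\exists^{p\bmod q}\}$ and $z\in\{x,y\}$. By the inductive hypothesis, $\psi$ is effectively equivalent to a quantifier-free formula $\psi'(x,y)$. The formula $\mathrm{Q}z\,\psi'$ then has at most one free variable $u\in\{x,y\}\setminus\{z\}$, and Lemma~\ref{L-quantifier-elimination} (applied as stated when $z=y$, and by symmetrically swapping the roles of $x$ and $y$ when $z=x$) produces an effectively regular language $K$ equal to its truth set. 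Thus $\varphi$ is equivalent to the atomic quantifier-free formula $u\in K$, or to $\top$/$\bot$ if no free variable remains (in which case $K=\Sigma^*$ or $K=\emptyset$).

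The only subtlety is that Lemma~\ref{L-quantifier-elimination} is phrased for the counting quantifiers only; the plain existential $\exists z\,\alpha$ is subsumed as the $k=1$ instance of $\exists^{\ge k}z\,\alpha$. There is no substantive obstacle to overcome here: all the heavy lifting, namely the closure of regular languages under counting images along unambiguous rational relations, and the unambiguous rationality of the comparison relations $\cover$, $\propersubword\setminus\cover$, and $\parallel$, has already been performed in Sections~\ref{SS-unambiguous-rational-relations} and~\ref{SS-counting-closure-and-unambiguous-rational-relations} and packaged into Lemma~\ref{L-quantifier-elimination}. The present theorem is essentially the propagation of that one-step quantifier elimination through nested quantifiers by induction on formula structure; the regular representation grows rapidly with the quantifier depth, but effectiveness is preserved at each step.
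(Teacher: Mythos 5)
Your proposal is correct and follows essentially the same route as the paper: a structural induction that eliminates one quantifier at a time via Lemma~\ref{L-quantifier-elimination} (invoked symmetrically when the quantified variable is $x$) and uses effective Boolean closure of regular languages for $\lnot$ and $\land$. The paper's own proof is just a two-line statement of this same induction, so your write-up is simply a more explicit rendering of it, including the correct observations that $\exists$ is the $k=1$ instance of $\exists^{\ge k}$ and that closed formulas yield $\Sigma^*$ or $\emptyset$.
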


\begin{proof}
  The claim is trivial if $\varphi$ is atomic. For more complicated
  formulas, the proof proceeds by induction using
  Lemma~\ref{L-quantifier-elimination} and the effective closure of
  the class of regular languages under Boolean operations.
\end{proof}

\begin{corollary}\label{C-CMOD2}
  Let $L\subseteq\Sigma^*$ be a regular language.  Then the
  $\CMOD^2$-theory of the structure $\mathcal
  S=(L,\subword,\cover,(K\cap L)_{K\text{ regular}},(w)_{w\in L})$ is
  decidable.
\end{corollary}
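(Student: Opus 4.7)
The plan is to reduce the $\CMOD^2$-theory of $\mathcal S$ to the effective regularity established by the preceding theorem for the parameter-free structure $\mathcal S_0=(\Sigma^*,\subword,\cover,(K)_{K\text{ regular}})$. I will describe a syntactic translation $\varphi\mapsto\varphi^\star$ from $\CMOD^2$-formulas over $\mathcal S$ to $\CMOD^2$-formulas over $\mathcal S_0$ such that, for any assignment of the free variables to elements of $L$, $\mathcal S\models\varphi$ iff $\mathcal S_0\models\varphi^\star$. For a sentence $\varphi$, the translated sentence $\varphi^\star$, viewed as a formula with a vacuous free variable $x$, produces via the preceding theorem the effectively regular language $\{x\in\Sigma^*:\mathcal S_0\models\varphi^\star\}$, which is $\Sigma^*$ or $\emptyset$; emptiness of regular languages is decidable.

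The translation has two ingredients. First, eliminate the constants $w\in L$: every atomic formula mentioning at least one constant either evaluates to a truth value (when both arguments are constants, or when it has the form $K(w)$ for a regular predicate $K\cap L$) or defines a unary predicate on the remaining variable. The relevant unary sets $\{w\}$, $\{v:v\subword w\}$, $\{v:w\subword v\}$, $\{v:v\cover w\}$, $\{v:w\cover v\}$ are all effectively regular (the first, second, fourth and fifth are finite, and the third equals $\Sigma^*a_1\Sigma^*a_2\cdots a_n\Sigma^*$ for $w=a_1a_2\cdots a_n$). Each atom with a constant is replaced by a membership statement for the remaining variable in an effectively regular language, so no new variable is introduced and the $\CMOD^2$ restriction is preserved. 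Second, relativize every quantifier to $L$ (which is itself a regular predicate available in $\mathcal S_0$) by replacing $\exists x\,\alpha$, $\exists^{\ge k}x\,\alpha$, and $\exists^{p\bmod q}x\,\alpha$ by $\exists x\,(x\in L\land\alpha^\star)$, $\exists^{\ge k}x\,(x\in L\land\alpha^\star)$, and $\exists^{p\bmod q}x\,(x\in L\land\alpha^\star)$, respectively.

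Correctness is a routine induction on $\varphi$. The only point that needs care is the counting semantics: for the threshold and modulo quantifiers, elements $w\notin L$ make $w\in L\land\alpha^\star(w)$ false in $\mathcal S_0$, so $\{w\in\Sigma^*:\mathcal S_0\models w\in L\land\alpha^\star(w)\}$ equals $\{w\in L:\mathcal S_0\models\alpha^\star(w)\}$; by the inductive hypothesis this coincides with $\{w\in L:\mathcal S\models\alpha(w)\}$, so the cardinalities agree exactly, not merely modulo $q$. No serious obstacle arises; the main care needed is in verifying that the sets used to eliminate constants are effectively computable as regular languages over $\Sigma^*$ and that the relativization transports the counting semantics on the nose.
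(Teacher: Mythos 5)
Your proposal is correct and takes essentially the same route as the paper: the paper's two-line proof invokes the preceding theorem (stated for $(\Sigma^*,\subword,\cover,(K)_{K\text{ regular}})$) to get regularity of the set defined by the sentence and then decides nonemptiness, which implicitly requires exactly the relativization of quantifiers to the regular predicate $L$ and the replacement of constant-containing atoms by effectively regular unary predicates that you spell out. Your explicit treatment of the counting semantics under relativization and of the regular sets $\{v\mid v\subword w\}$, $\{v\mid w\subword v\}$, etc.\ fills in details the paper leaves to the reader.
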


\begin{proof}
  Let $\varphi\in\CMOD^2$ be a sentence. By the previous theorem, the set
  \[
    \{x\in L\mid\cS\models\varphi\}
  \]
  is regular. Hence $\varphi$ holds iff this set is nonempty, which is
  decidable.
\end{proof}

\section{The $\Sigma_1$-theory}

Let $L$ be regular and bounded. Then, by Theorem~\ref{T-bounded}, we
obtain in particular that the $\Sigma_2$-theory of $(L,\subword)$ is
decidable. Note that the regular language $L=\{a,b\}^*$ is not
covered by this result since it is unbounded. And, indeed, the
$\Sigma_2$-theory of $(\{a,b\}^*,\subword)$ is
undecidable~\cite{HalSZ17}.  On the positive side, we know that the
$\Sigma_1$-theory of $(\{a,b\}^*,\subword)$ is
decidable~\cite{Kus06}.

In this section, we generalize this positive result to arbitrary
regular languages, i.e., we prove the following result:

\begin{theorem}\label{T-unbounded}
  Let $L\subseteq\Sigma^*$ be regular. Then the $\Sigma_1$-theory of
  $\cS=(L,\subword)$ is decidable.
\end{theorem}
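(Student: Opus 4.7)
The plan is to split on the boundedness of $L$. If $L$ is bounded (and regular), then in particular $L$ is bounded and context-free, so Theorem~\ref{T-bounded} already gives decidability of the full $\FOMOD$-theory of the richer structure $(L,\subword,(K\cap L)_{K\text{ regular}})$, which trivially implies decidability of the $\Sigma_1$-theory of $(L,\subword)$. So I may assume from now on that $L$ is unbounded.

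The heart of the proof is then the following embedding claim: for every unbounded regular $L$, every finite partial order embeds into $(L,\subword)$ as an induced sub-order. Granted this claim, the decidability of the $\Sigma_1$-theory follows as in \cite{Kus06}. Indeed, a $\Sigma_1$-sentence $\exists x_1\cdots\exists x_n\,\psi$ with $\psi$ quantifier-free in the signature $\{\subword\}$ is a Boolean combination of atomic formulas $x_i\subword x_j$ and $x_i=x_j$; its truth on a tuple depends only on the induced ``type,'' namely the partial order on the quotient by $=$. There are only finitely many such types on at most $n$ elements, and for each the truth of $\psi$ is decidable. Since by the embedding claim every such type is realised in $(L,\subword)$, the sentence holds iff one of the finitely many types that satisfies $\psi$ is realisable, which is a finite, decidable check.

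To prove the embedding claim I would exploit the structure of regular languages: unboundedness of $L$ forces the existence of pumping patterns $u_0 w_1^{m_1} u_1 \cdots w_k^{m_k} u_k \in L$ (for $m_1,\dots,m_k$ in suitable arithmetic progressions) whose pumping factors $w_i$ include several primitive words with pairwise distinct primitive roots. Carefully choosing large exponents, one would assign to each element $p$ of a given finite partial order $P$ a word $v_p \in L$ of this parametric form so that the exponent vector attached to $p$ encodes the down-set $\{q \in P \mid q \le_P p\}$. Using combinatorial lemmas about primitive roots and prefix-maximal subword factorisations, one then verifies that $v_p \subword v_q$ holds exactly when the exponent vector of $p$ is dominated componentwise by that of $q$, equivalently when $p \le_P q$.

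I expect the main obstacle to be the embedding claim itself. Unboundedness of a regular language can manifest in many different ways (via distinct cycles of a minimal DFA, via cycles labelled by non-commuting words, etc.), and one must extract from any such $L$ a uniform pumping structure rich enough for the encoding; equally delicate is ruling out spurious subword relations between the encoded words $v_p$, which ultimately rests on periodicity results of Fine--Wilf type controlling how primitive powers can sit inside one another as scattered factors. Once these combinatorial facts are in place, the reduction to the finite enumeration of types sketched above immediately yields decidability of the $\Sigma_1$-theory.
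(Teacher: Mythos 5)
Your reduction is exactly the paper's: handle bounded $L$ via Theorem~\ref{T-bounded}, and for unbounded $L$ derive decidability from the claim that every finite partial order embeds into $(L,\subword)$, by enumerating the finitely many order types on at most $n$ points. That part is correct. The genuine gap is that the embedding claim, which is the entire mathematical substance of the theorem, is not proved: your paragraph on it is a plan (``one would assign\dots, one then verifies\dots'') whose verification step is precisely where all the difficulty sits. Beware in particular that the hope that subword comparisons between words of the form $u_0w_1^{m_1}u_1\cdots w_k^{m_k}u_k$ reduce to componentwise comparison of exponent vectors is false in general --- the paper itself observes that $(\bN^2,\le)$ does not embed into $(\{ab,ba\}^*,\subword)$ --- so the encoding must be restricted (to $\{0,1\}$-valued exponents against a large fixed padding) and the absence of spurious subword relations is a genuinely delicate matter, not a routine verification.

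For comparison, here is what the paper actually does. Proposition~\ref{P-2} shows that any unbounded regular $L$ contains a sub-language $x\{u,v\}^*y$ with $u\neq v$, $|u|=|v|$ and $uv$ \emph{primitive}: starting from $x\{p,q\}^*y\subseteq L$ with $|p|=|q|$, $p\neq q$, one passes to the non-conjugate pair $r=pq$, $s=pp$ and then to $u=rs^{n-1}$, $v=s^n$. Proposition~\ref{P-1} then embeds $(\{0,1\}^m,\le)$, and hence every $m$-element poset, into $(\{u,v\}^*,\subword)$ via $\varphi_m(t)=v^{t_1}(uv)^n\cdots v^{t_m}(uv)^n$ for a large fixed $n$. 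Ruling out unwanted embeddings uses the relation $x\embedsright y$ (``$x$ is a prefix-maximal subword of $y$''), its multiplicativity (Lemma~\ref{L-congruence-property}), the fact that for primitive $w$ every embedding of $w^n$ into $w^{n+1}$ is initial or terminal (Lemma~\ref{L-initial-or-terminal}), and an induction (via Lemma~\ref{L-right-embeddings}) that propagates a length deficit $\ell$ through the blocks until it contradicts $\varphi_m(s)\subword\varphi_m(t)$. Your sketch names the right ingredients --- primitivity, pumping, prefix-maximal subwords --- but without an explicit encoding and the accompanying combinatorics the proof is incomplete.
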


The proof for the case $L=\{a,b\}^*$ in~\cite{Kus06} essentially
relies on the fact that each order $(\bN^k,\le)$, and thus every
finite partial order, embeds into $(\{a,b\}^*,\subword)$.

In the general case here, the situation is more involved.  Take, for
example, $L=\{ab,ba\}^*$. Then, orders as simple as $(\bN^2,\le)$ do
not embed into $(L,\subword)$: This is because the downward closure of
any infinite subset of $L$ contains all of $L$, but $\bN^2$ contains a
downwards closed infinite chain.  Nevertheless, we will show, perhaps
surprisingly, that every finite partial order embeds into
$(L,\subword)$. In fact, this holds whenever $L$ is an unbounded
regular language. The latter requires two propositions that we shall
prove only later. Recall that a word $w\in\Sigma^+$ is called
\emph{primitive} if there is no $r\in\Sigma^+$ with $w=rr^+$.

\begin{proof}[Proof of Theorem~\ref{T-unbounded}]
  By Theorem~\ref{T-bounded}, we may assume
  that $L$ is unbounded.

  By Proposition~\ref{P-2} below, there are words $x,y,u,v\in\Sigma^*$
  with $|u|=|v|$, $uv$ primitive, and $x\{u,v\}^*y\subseteq L$. By
  Proposition~\ref{P-1} below, any finite partial order embeds into
  $(\{u,v\}^*,\subword)$ and therefore into $(x\{u,v\}^*y,\subword)$
  which is a substructure of $(L,\subword)$, i.e., every finite
  partial order embeds into $(L,\subword)$.

  Hence $\varphi=\exists x_1,x_2,\dots,x_n\colon\psi$ with $\psi$
  quantifier-free holds in $(L,\subword)$ iff it holds in some finite
  partial order whose size can be bounded by $n$. Since there are only
  finitely many such partial orders, the result follows.
\end{proof}

The first proposition used in the above proof deals with the existence
of certain primitive words for every unbounded regular language.

\begin{proposition}\label{P-2}
  For every unbounded regular language $L\subseteq \Sigma^*$, there
  are words $x,u,v,y\in \Sigma^*$ so that
  \begin{enumerate}
  \item $|u|=|v|$,
  \item the word $uv$ is primitive, and
  \item $x\{u,v\}^*y\subseteq L$.
  \end{enumerate}
\end{proposition}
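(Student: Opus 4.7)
My plan is to exploit a dichotomy for regular languages: if $L$ is unbounded, then any DFA recognizing $L$ must contain a state $q$, both accessible from the initial state and co-accessible to a final state, such that the loops at $q$ include two labels $\alpha,\beta\in\Sigma^+$ that fail to commute, i.e.\ $\alpha\beta\neq\beta\alpha$. Once this is secured, letting $x$ be the label of any path from the initial state to $q$ and $y$ the label of any path from $q$ to a final state yields $x\{\alpha,\beta\}^*y\subseteq L$, and it only remains to replace $\alpha,\beta$ by $u,v$ of equal length with $uv$ primitive while preserving the loop property at $q$.

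The first step will proceed by contraposition. If at every accessible/co-accessible state $q$ the loop submonoid $M_q\subseteq\Sigma^*$ were commutative, then, using the standard fact that any two commuting words in a free monoid share a primitive root, all of $M_q$ would lie in $\pi_q^*$ for a single primitive $\pi_q$. Decomposing every accepting path into a simple path through some sequence of states $q_0,\dots,q_m$ with loops inserted at each $q_i$, the language $L$ could then be written as a finite union of languages of the form $\pi_{q_0}^*\,u_1\,\pi_{q_1}^*\,u_2\cdots \pi_{q_{m-1}}^*\,u_m\,\pi_{q_m}^*$, each of which is bounded; hence $L$ would be bounded, a contradiction. So non-commuting loops $\alpha,\beta\in\Sigma^+$ must exist at some $q$.

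The second step will be almost immediate once the Lyndon--Sch\"utzenberger equation is brought in: if $\alpha,\beta,\gamma\in\Sigma^+$ satisfy $\alpha^i\beta^j=\gamma^k$ with $i,j,k\ge 2$, then $\alpha,\beta,\gamma$ are all powers of a common word. Writing $a=|\alpha|$, $b=|\beta|$, I set $u=\alpha^{2b}$ and $v=\beta^{2a}$. Then $|u|=|v|=2ab$, and $u,v$ remain loops at $q$, so $x\{u,v\}^*y\subseteq x\{\alpha,\beta\}^*y\subseteq L$. If $uv=\alpha^{2b}\beta^{2a}$ were a proper power $\gamma^k$ with $k\ge 2$, Lyndon--Sch\"utzenberger would force $\alpha$ and $\beta$ to be powers of a common word and hence to commute, contradicting the choice of $\alpha,\beta$. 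Therefore $uv$ is primitive, and the triple $(x,u,v,y)$ meets all three required conditions.

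I expect the main obstacle to lie in the first step, namely the characterization of unbounded regular languages via non-commuting loops at a single DFA state. The decomposition of every accepting path as ``simple path plus inserted loops'' must be carried out carefully enough to absorb even nested loop traversals into the appropriate $\pi_{q_i}^*$ factor, and one must verify that the resulting finite union of expressions really is bounded in the sense of the paper. Granted this lemma, the Lyndon--Sch\"utzenberger shortcut makes the construction of $u$ and $v$ essentially effortless.
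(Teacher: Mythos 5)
Your proof is correct, but it differs from the paper's in both halves. For the first step, the paper simply invokes (without proof or citation) the known characterization that an unbounded regular language contains some $x\{p,q\}^*y$ with $|p|=|q|$ and $p\neq q$; you instead derive an equivalent statement---a DFA state with two non-commuting loops $\alpha,\beta$---from scratch via the contrapositive, using the commutation theorem (commuting words share a primitive root, so each loop monoid $M_q$ sits inside some $\pi_q^*$) and a decomposition of accepting runs. This makes your argument more self-contained, and the decomposition does go through: following the run and cutting at the \emph{last} visit to each state yields a sequence of pairwise distinct states whose intervening segments lie in the corresponding loop monoids, so $L$ becomes a finite union of languages $\pi_{q_0}^*a_1\pi_{q_1}^*\cdots a_m\pi_{q_m}^*$, which is bounded. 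For the second step, the paper sets $r=pq$, $s=pp$, proves $r$ and $s$ are not conjugate, and then shows that $u=rs^{n-1}$, $v=s^n$ give a primitive $uv$ by a hands-on prefix/suffix/factor argument; you take $u=\alpha^{2b}$, $v=\beta^{2a}$ and get primitivity of $uv=\alpha^{2b}\beta^{2a}$ in one line from Lyndon--Sch\"utzenberger (all three exponents are $\ge 2$, so a proper power would force $\alpha$ and $\beta$ to commute, and note $2ab>0$ so $u,v$ are nonempty). Your route is shorter and cleaner at the cost of importing a classical theorem, while the paper's is elementary but more laborious; in both cases $\{u,v\}^*$ stays inside the loop monoid (respectively inside $\{p,q\}^*$), so $x\{u,v\}^*y\subseteq L$ is immediate.
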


\begin{proof}
  Since $L$ is unbounded and regular, there are words
  $x,y,p,q\in\Sigma^*$ with $|p|=|q|$, $p\neq q$, and
  $x\{p,q\}^*y\subseteq L$. Set $r=pq$ and $s=pp$.

  Then $|r|=|s|$ and $x\{r,s\}^*y\subseteq x\{p,q\}^*y\subseteq L$.
  Suppose $r$ and $s$ are conjugate. Since $s=p^2$, this implies
  $r=yxyx$ with $p=yx$, i.e., $r$ is the square of some word $yx$ of
  length $|p|=|q|$. But this contradicts $r=pq$ and $p\neq q$. Hence
  $r$ and $s$ are not conjugate.

  Next let $n=|r|$, $u=rs^{n-1}$ and $v=s^n$.

  By contradiction, we show that $uv$ is primitive.

  Since we assume $uv=rs^{2n-1}$ not to be primitive, there is a word
  $w\in\Sigma^*$ with $rs^{2n-1}\in ww^+$.  Observe that there is a
  $t\in\bN$ such that $n\le |w^t|\le n^2$: If $|w|\ge n$, we can
  choose $t=1$ since $|w|\le\frac{1}{2}|rs^{2n-1}|=n^2$ and if
  $|w|<n$, we can take $t=n$.

  Observe that $r$ and $w^t$ are prefixes of $uv=rs^{2n-1}$ of length
  $n$ and $\ge n$, respectively. Hence $r$ is a prefix of $w^t$.

  On the other hand, $v=s^n$ and $w^t$ are suffixes of $uv$ of length
  $n^2$ and $\le n^2$, respectively. Hence $w^t$ is a suffix of $v=s^n$.

  Taking these two facts together, we obtain that $r$ is a factor of
  $s^n$. Since $r$ and $s$ are not conjugate, this implies $pq=r=s=pp$ which
  contradicts $p\neq q$.
\end{proof}

The second proposition used above talks about the embeddability of
every finite partial order into certain regular languages of the form
$\{u,v\}^*$ where the words $u$ and $v$ originate from the previous
proposition. The proof of this embeddability requires a good deal of
preparation that deals with the combinatorics of subwords, more
precisely with the properties of ``prefix-maximal subwords''.

Let $x=a_1a_2\dots a_m$ and $y=b_1b_2\dots b_n$ with
$a_i,b_i\in \Sigma$. An \emph{embedding} of $x$ into $y$ is a mapping
$\alpha\colon \{1,2,\dots,m\}\to\{1,2,\dots,n\}$ with
$a_i=b_{\alpha(i)}$ and $i<j\iff\alpha(i)<\alpha(j)$ for all
$i,j\in\{1,2,\dots,m\}$. Note that $x\subword y$ iff there exists an
embedding of $x$ into $y$. This embedding is called \emph{initial} if
$\alpha(1)=1$, i.e., if the left-most position in $x$ hits the
left-most position in $y$. Symmetrically, the embedding $\alpha$ is
\emph{terminal} if $\alpha(m)=n$, i.e., if the right-most position in $x$
hits the right-most position in $y$.

We write $x\embedsright y$ if $x\subword y$ and every embedding of $x$
into $y$ is terminal. This is equivalent to saying that $x$, but no
word $xa$ with $a\in\Sigma$ is a subword of $y$. In other words,
$x\embedsright y$ if $x$ is a \emph{prefix-maximal subword of $y$}.

\begin{lemma}\label{L-initial-or-terminal}
  Let $w$ be primitive and $n>|w|$. Then, every embedding of $w^{n}$
  into $w^{n+1}$ is either initial or terminal.
\end{lemma}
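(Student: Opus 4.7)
The plan is to analyze the \emph{shift sequence} $s(i)=\alpha(i)-i$ of a given embedding $\alpha$ of $w^n$ into $w^{n+1}$. Since $\alpha$ is strictly increasing and $w^{n+1}$ has exactly $|w|$ more letters than $w^n$, the sequence $s$ is non-decreasing with values in $\{0,1,\dots,|w|\}$. The embedding $\alpha$ is initial iff $s(1)=0$ and terminal iff $s(n|w|)=|w|$, so the goal is to show that these are the only two possibilities.

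The first step is a pigeonhole argument. Because the total increase $s(n|w|)-s(1)$ is at most $|w|$, there are at most $|w|$ indices at which $s$ strictly increases, which cuts $\{1,2,\dots,n|w|\}$ into at most $|w|+1$ maximal intervals on which $s$ is constant. Since the lengths of these intervals sum to $n|w|$ and $n\ge|w|+1$ by hypothesis, at least one interval $I=\{i,i+1,\dots,i+k-1\}$ has length $k\ge|w|$. On $I$ we have $\alpha(j)=j+c$ for a single constant $c\in\{0,1,\dots,|w|\}$.

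The second step uses primitivity. Since $\alpha$ is letter-preserving, the length-$|w|$ factor of $w^{n+1}$ starting at position $i+c$ coincides with the length-$|w|$ factor of $w^n$ starting at position $i$. Each of these factors is the cyclic rotation of $w$ determined by its starting position modulo $|w|$, so their equality means that $w$ is invariant under cyclic rotation by $c$ places; primitivity of $w$ then forces $c\equiv 0\pmod{|w|}$, hence $c\in\{0,|w|\}$. Monotonicity of $s$ closes the argument: if $c=0$, then $s(j)\le 0$ for all $j\le i$, forcing $s(1)=0$ and $\alpha$ is initial; if $c=|w|$, then $s(j)\ge|w|$ for all $j\ge i+k-1$, forcing $s(n|w|)=|w|$ and $\alpha$ is terminal. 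The main obstacle I anticipate is the primitivity step, where one has to identify carefully which cyclic rotation of $w$ occurs at each starting position modulo $|w|$ and then invoke that a primitive word has trivial cyclic stabiliser; everything else is routine bookkeeping, and the hypothesis $n>|w|$ is used precisely once, in the pigeonhole bound.
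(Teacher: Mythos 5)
Your proof is correct and follows essentially the same route as the paper's: a pigeonhole argument using $n>|w|$ produces a block of $|w|$ consecutive positions mapped with a single shift $c$, primitivity forces $c\in\{0,|w|\}$, and monotonicity of the shift sequence finishes. The only cosmetic difference is that the paper takes the block to be one of the $n$ aligned copies of $w$ (a ``gapless'' copy) and invokes the fact that a primitive word occurs as a contiguous factor of $w^{n+1}$ only at positions divisible by $|w|$, whereas you allow an unaligned window and use the equivalent fact that a primitive word has trivial cyclic stabiliser.
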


\begin{proof}
  Let $\alpha$ be an embedding of $w^n$ into $w^{n+1}$ that is neither
  initial nor terminal. Consider the $n$ copies of $w$ in the word
  $w^n$.  We call such a copy \emph{gapless} if its image in $w^{n+1}$
  under $\alpha$ is contiguous.  Since the length difference between
  $w^n$ and $w^{n+1}$ is only $|w|<n$, there has to be at least one
  gapless copy of $w$, say the $i$th copy. The image of this copy is a
  contiguous subword of $w^{n+1}$ that spells $w$ and occurs at some
  position $i\cdot |w|+j$ with $j\in\{0,\ldots,|w|\}$. If $j=0$, then
  $\alpha$ is initial and if $j=|w|$, then $\alpha$ is terminal. This
  means $j\in\{1,\ldots,|w|-1\}$. However, since $w$ is primitive, it
  can occur as a contiguous subword in $w^{n+1}$ only at positions
  that are divisible by $|w|$, which is a contradiction.
\end{proof}

\begin{lemma}\label{L-congruence-property}
  The ordering $\embedsright$ is multiplicative: If
  $x,x',y,y'\in\Sigma^*$ with $x\embedsright y$ and $x'\embedsright
  y'$, then $xy\embedsright x'y'$.
\end{lemma}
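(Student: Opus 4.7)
Interpreting the stated conclusion ``$xy \embedsright x'y'$'' as a typo for $xx' \embedsright yy'$ (this is the natural multiplicativity statement given the pairing of hypotheses; the printed version fails already for $x=y=a$, $x'=y'=b$), the plan is to unfold the definition of $\embedsright$ directly. The containment $xx' \subword yy'$ is immediate by concatenating an embedding of $x$ into $y$ with an embedding of $x'$ into $y'$, so the real content is prefix-maximality: no word $xx'a$ with $a \in \Sigma$ may be a subword of $yy'$.

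Suppose toward a contradiction that $xx'a \subword yy'$ for some $a \in \Sigma$, witnessed by an embedding $\beta$ from the positions of $xx'a$ to those of $yy'$. Split the target into the $y$-block (first $|y|$ positions) and the $y'$-block (remaining $|y'|$), and let $j$ be the largest source index with $\beta(j) \le |y|$, setting $j = 0$ if no such index exists. By monotonicity of $\beta$, its restriction to $\{1,\dots,j\}$ yields an embedding of the prefix $(xx'a)[1..j]$ into $y$, while $\beta$ on the remaining indices (shifted by $|y|$) gives an embedding of the suffix $(xx'a)[j+1..|xx'a|]$ into $y'$.

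Now I case-split on the position of $j$ relative to $|x|$. If $j \le |x|$, then the suffix $(xx'a)[j+1..|xx'a|]$ ends with $x'a$ (possibly preceded by some tail of $x$), so $x'a \subword y'$, contradicting $x' \embedsright y'$ (which forbids $x'a$ being a subword of $y'$ for any $a \in \Sigma$). If $j > |x|$, then the prefix $(xx'a)[1..j]$ starts with $x$ followed by at least one further letter $b$ (the first letter of $x'$, or $a$ itself in the degenerate case $x' = \varepsilon$), so $xb \subword y$, contradicting $x \embedsright y$. Either branch yields a contradiction, so no such $a$ exists.

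The main (and essentially only) subtlety is arranging the case split so that in each branch the ``extra'' letter witnessing the violation of prefix-maximality falls on the appropriate side of the $y/y'$ cut; the combinatorics is otherwise routine bookkeeping on the embedding $\beta$, and the argument goes through with no extra hypotheses about $x$, $x'$, $y$, $y'$.
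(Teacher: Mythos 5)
Your proof is correct and is essentially the paper's own argument: the paper likewise fixes the typo (its proof reads the hypotheses as $x\embedsright x'$ and $y\embedsright y'$, which after renaming is exactly your $xx'\embedsright yy'$ reading) and then derives the same dichotomy — an embedding of the extended word into the concatenation either forces $xb\subword y$ for the first letter $b$ past $x$, or forces $x'a\subword y'$. Your version just spells out the split point $j$ of the embedding more explicitly; there is no substantive difference.
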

\begin{proof}
  Suppose $xy\not\embedsright x'y'$. Since $xy\subword x'y'$, there is
  $a\in\Sigma$ such that $xya\subword x'y'$. Then, either $xb\subword
  x'$ where $b$ is the first letter of $ya$ (contradicting
  $x\embedsright x'$), or $ya\subword y'$ (contradicting
  $y\embedsright y'$).
\end{proof}

\begin{lemma}\label{L-right-embeddings}
  Let $u,v\in\Sigma^*$ be words such that $|u|=|v|$ and $uv$ is
  primitive. Then, for all $\ell,n\in\bN$ with $n>|uv|+\ell+2$, we
  have
  \begin{enumerate}[(i)]
  \item $(uv)^n\embedsright v(uv)^n$,
  \item $(uv)^\ell v(uv)^{n-\ell-1} \embedsright (uv)^n$, and
  \item $(uv)^{1+\ell} v(uv)^{n-\ell-2} \embedsright v(uv)^n$.
  \end{enumerate}
\end{lemma}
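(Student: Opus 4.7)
The plan is to prove the three claims (i), (ii), (iii) in order, using Lemma~\ref{L-initial-or-terminal} and Lemma~\ref{L-congruence-property} throughout. Note that primitivity of $uv$ together with $|u|=|v|$ forces $|u|\geq 1$, since $uv=\varepsilon$ is excluded.

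For (i), given any embedding $\alpha$ of $(uv)^n$ into $v(uv)^n$, I lift it to an embedding $\alpha'$ of $(uv)^n$ into $u\cdot v(uv)^n=(uv)^{n+1}$ by setting $\alpha'(i)=\alpha(i)+|u|$. The hypothesis $n>|uv|+\ell+2>|uv|$ lets me apply Lemma~\ref{L-initial-or-terminal} to $\alpha'$: it must be initial or terminal. The initial case is ruled out because $\alpha'(1)\geq|u|+1>1$, so $\alpha'$ is terminal, and so is $\alpha$.

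For (ii), I reduce to the case $\ell=0$ via Lemma~\ref{L-congruence-property} applied to the splits $(uv)^\ell\cdot v(uv)^{n-\ell-1}$ and $(uv)^\ell\cdot(uv)^{n-\ell}$; the trivial $(uv)^\ell\embedsright(uv)^\ell$ reduces everything to $v(uv)^{m-1}\embedsright(uv)^m$ with $m=n-\ell$ and $m-1>|uv|$. For this base case, restricting a given embedding $\alpha\colon v(uv)^{m-1}\to(uv)^m$ to its suffix gives an embedding $\beta$ of $(uv)^{m-1}$ into $(uv)^m$ defined by $\beta(i)=\alpha(|v|+i)$. Lemma~\ref{L-initial-or-terminal} forces $\beta$ to be initial or terminal, and the initial case is impossible since $\beta(1)=\alpha(|v|+1)\geq 2$. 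Hence $\beta$, and therefore $\alpha$, is terminal.

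For (iii), my plan is to combine (i) and (ii) via Lemma~\ref{L-congruence-property} using the split $(uv)^{\ell+1}v(uv)^{n-\ell-2}=(uv)^N\cdot(uv)^{\ell+1-N}v(uv)^{n-\ell-2}$ and $v(uv)^n=v(uv)^N\cdot(uv)^{n-N}$ for an integer $N$ with $|uv|<N\leq\ell+1$. The two resulting congruence conditions are $(uv)^N\embedsright v(uv)^N$, which is (i) at parameter $N$, and $(uv)^{\ell+1-N}v(uv)^{n-\ell-2}\embedsright(uv)^{n-N}$, which is (ii) at parameters $\ell+1-N,\,n-N$; its numerical hypothesis $n-N>|uv|+(\ell+1-N)+1$ is exactly the global hypothesis $n>|uv|+\ell+2$. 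The main obstacle is that such an $N$ exists only when $\ell+1>|uv|$. For the small-$\ell$ regime $\ell+1\leq|uv|$ I would use the alternative split $(uv)^{\ell+1}v(uv)^{n-\ell-2}=uv\cdot(uv)^\ell v(uv)^{n-\ell-2}$ and $v(uv)^n=vuv\cdot(uv)^{n-1}$; this reduces (iii) to (ii) at parameter $n-1$ (valid because $n-1>|uv|+\ell+1$) together with the auxiliary statement $uv\embedsright vuv$ for primitive $uv$. That auxiliary statement I would prove by a direct case analysis on how the letters of a hypothetical extension $uv\cdot a\subword vuv$ distribute over the three blocks $v,u,v$ of $vuv$, using primitivity of $uv$ (which in particular prevents $u=v$ and thus $u\subword v$) to derive a contradiction from every non-terminal configuration.
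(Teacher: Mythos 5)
Your arguments for (i) and (ii) are correct and close in spirit to the paper's (which restricts a given embedding to a suffix instead of lifting it into $(uv)^{n+1}$, and handles general $\ell$ directly rather than reducing to $\ell=0$ via Lemma~\ref{L-congruence-property}); the only caveat is that in part (iii) you invoke (i) and (ii) at parameters that violate the stated hypothesis $n>|uv|+\ell+2$, so you would have to record explicitly that your proofs establish (i) for all $n>|uv|$ and (ii) for all $n>|uv|+\ell+1$. That is fixable. The genuine gap is in the small-$\ell$ case of (iii): the auxiliary statement $uv\embedsright vuv$ is \emph{false} in general. Take $u=abab$ and $v=abaa$, so that $|u|=|v|=4$ and $uv=abababaa$ is primitive (it is none of $a^8$, $(ab)^4$, $(abab)^2$). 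Then $vuv=abaaabababaa$, and $uv\cdot a=abababaaa$ embeds into it at positions $1,2,3,6,7,8,9,11,12$; hence $uv\cdot a\subword vuv$ and $uv\not\embedsright vuv$. (In your intended case analysis this is the configuration $aba\subword v$, $bab\subword u$, $aaa\subword v$, which produces no contradiction.) Consequently Lemma~\ref{L-congruence-property} cannot be applied to the split $uv\cdot(uv)^\ell v(uv)^{n-\ell-2}$ versus $vuv\cdot(uv)^{n-1}$, and your proof of (iii) collapses whenever $\ell+1\le|uv|$ --- which includes the case $\ell=0$ actually needed in Proposition~\ref{P-1}.

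The statement (iii) itself is nevertheless true, and the paper's proof shows why a block as short as $uv$ is the wrong unit: it restricts a given embedding of $(uv)^{1+\ell}v(uv)^{n-\ell-2}$ into $v(uv)^n=v(uv)^{1+\ell}(uv)^{n-\ell-1}$ to an induced embedding $\beta$ of the suffix $(uv)^{n-\ell-2}$ into $(uv)^{n-\ell-1}$, and rules out the initial case of Lemma~\ref{L-initial-or-terminal} by observing that $\beta$ initial would force the \emph{entire} prefix $(uv)^{1+\ell}v$ to embed into $v(uv)^{1+\ell}$ --- impossible because these are distinct words (by primitivity of $uv$) of equal length. The equal-length comparison of the two full prefixes is exactly the leverage that your local claim $uv\embedsright vuv$ lacks, as the counterexample shows. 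To repair your proof you would need to replace the case-B split by this global prefix comparison, i.e., essentially revert to the paper's argument.
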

\begin{proof}
  For claim (i), suppose $\alpha$ is an embedding of $(uv)^n$ into
  $v(uv)^n$. Then $\alpha$ induces an embedding $\beta$ of
  $(uv)^{n-1}$ into $(uv)^n$. Note that $\beta$ cannot be initial
  because otherwise $\alpha$ would embed $uv$ into $v$. Thus, $\beta$
  is terminal by Lemma~\ref{L-initial-or-terminal}. Hence, $\alpha$ is
  terminal.

  For claim (ii), suppose $\alpha$ is an embedding of $(uv)^\ell v
  (uv)^{n-\ell-1}$ into $(uv)^n$.  Since $(uv)^n=(uv)^\ell
  (uv)^{n-\ell}$, $\alpha$ induces an embedding $\beta$ of
  $(uv)^{n-\ell-1}$ into $(uv)^{n-\ell}$. Again, $\beta$ cannot be
  initial because otherwise $\alpha$ would embed $(uv)^\ell v$ into
  $(uv)^\ell$. Therefore, $\beta$ is terminal according to
  Lemma~\ref{L-initial-or-terminal}, meaning that $\alpha$ is terminal as
  well.

  Finally, for claim (iii), suppose $\alpha$ is an embedding of
  $(uv)^{1+\ell}v(uv)^{n-\ell-2}$ into $v(uv)^n$. Since
  $v(uv)^n=v(uv)^{1+\ell}(uv)^{n-\ell-1}$, $\alpha$ induces an
  embedding $\beta$ of $(uv)^{n-\ell-2}$ into $(uv)^{n-\ell-1}$.
  Again, $\beta$ cannot be initial because otherwise, $\alpha$ would
  embed $(uv)^{1+\ell}v$ into $v(uv)^{1+\ell}$, but these are distinct
  words of equal length. Thus, Lemma~\ref{L-initial-or-terminal} tells us
  that $\beta$ must be terminal and hence also $\alpha$.
\end{proof}

\begin{proposition}\label{P-1}
  Let $u,v\in\Sigma^*$ be distinct such that $uv$ is primitive and
  $|u|=|v|$. Then every finite partial order embeds into
  $(\{u,v\}^*,\subword)$.
\end{proposition}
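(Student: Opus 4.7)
The plan is to reduce to embedding the Boolean lattice and then to realize this embedding via explicit words built from the blocks $(uv)^N$ and $v(uv)^N$ that are controlled by Lemma~\ref{L-right-embeddings}. Every finite partial order $P$ embeds into $(\mathcal{P}(P),\subseteq)\cong(\{0,1\}^{|P|},\subseteq)$ via the downward-closure map $p\mapsto\{q\in P:q\le p\}$, so it suffices to embed $(\{0,1\}^k,\subseteq)$ into $(\{u,v\}^*,\subword)$ for every $k\in\bN$.

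Fixing $N$ sufficiently large in terms of $|uv|$ and $k$ (so that Lemma~\ref{L-right-embeddings} applies throughout), I would associate to each $S\subseteq[k]$ the word
\[
  W_S=X_{S,1}X_{S,2}\cdots X_{S,k},\qquad X_{S,i}=\begin{cases}v(uv)^N&\text{if }i\in S,\\(uv)^N&\text{otherwise.}\end{cases}
\]
Each $W_S$ lies in $\{u,v\}^*$. The forward direction $S\subseteq T\Rightarrow W_S\subword W_T$ is immediate: for each $i$ one has $X_{S,i}\subword X_{T,i}$ (either equal, or $(uv)^N$ appears as a suffix of $v(uv)^N$), and subword embedding is preserved by concatenation.

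The main obstacle is the reverse direction: if $S\not\subseteq T$, show $W_S\not\subword W_T$. The case $|S|>|T|$ is a length count, so I would assume $|S|\le|T|$ with some $i_0\in S\setminus T$, and derive a contradiction from a putative embedding $W_S\subword W_T$ via a rigidity argument. Writing both $W_S$ and $W_T$ in their canonical \emph{signature} form $(uv)^{a_0}v(uv)^{a_1}v\cdots v(uv)^{a_m}$, I would combine the three prefix-maximal embeddings of Lemma~\ref{L-right-embeddings} with the multiplicativity of $\embedsright$ (Lemma~\ref{L-congruence-property}) to peel off matching blocks from both ends of any embedding and force the inserted $v$'s of $W_S$ to align with inserted $v$'s of $W_T$ in a block-respecting way. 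The delicate point---that a single extra $v$ of $W_S$ could conceivably be ``absorbed'' into a $(uv)$-block of $W_T$ rather than aligning with one of $W_T$'s extra $v$'s---is exactly what the prefix-maximality conclusions of Lemma~\ref{L-right-embeddings} rule out once $N$ is large; having $i_0\in S\setminus T$ then forces an alignment incompatible with the block sizes, yielding the contradiction. Turning this intuition into a clean combinatorial argument is the step that will require the most care, presumably by induction on $k$ (or on $|T|$) together with a simultaneous peeling argument from both ends of $W_T$.
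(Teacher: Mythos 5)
Your construction is exactly the paper's: the reduction to the Boolean lattice, the words $W_S$ built from blocks $(uv)^N$ and $v(uv)^N$, and the forward direction all coincide with the paper's map $\varphi_m(t)=v^{t_1}(uv)^n\cdots v^{t_m}(uv)^n$, and you correctly identify Lemmas~\ref{L-right-embeddings} and~\ref{L-congruence-property} as the tools for the converse. But the converse is where the entire difficulty lies, and your proposal does not actually prove it: the ``peel off matching blocks from both ends and force the inserted $v$'s to align'' picture is left as an intuition, and it is not the argument that works. A single extra $v$ of $W_S$ need not align with anything; what the paper does instead is a one-directional, left-to-right induction that maintains a quantitative invariant. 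Starting at the first index $i$ with $s_i=1$, $t_i=0$, one shows that for every $k\ge i$ there is a \emph{deficit} $\ell\in[1,k]$ such that
\[
  v^{s_1}(uv)^n\cdots v^{s_k}(uv)^{n-\ell}\ \embedsright\ v^{t_1}(uv)^n\cdots v^{t_k}(uv)^n,
\]
i.e.\ a prefix of $W_S$ that is \emph{short by $\ell$ copies of $uv$ in its last block} is already a prefix-maximal subword of the corresponding prefix of $W_T$. The deficit stays the same when $s_{k+1}=0$ and increases by one when $s_{k+1}=1$; the inductive step is exactly where parts (ii) and (iii) of Lemma~\ref{L-right-embeddings} enter, because carrying the deficit forward produces blocks of the interior-$v$ shape $(uv)^\ell v(uv)^{n-\ell-1}$, and the hypothesis $n>|uv|+\ell+2$ (hence the choice $n=|uv|+m+3$) is needed to keep the lemma applicable for all $\ell\le m$. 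At the end one has a \emph{proper} prefix of $W_S$ that is prefix-maximal in $W_T$, contradicting $W_S\subword W_T$.

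Your sketch contains neither the invariant nor the mechanism by which the interior-$v$ blocks of Lemma~\ref{L-right-embeddings}(ii)--(iii) arise, and it does not explain how the parameter $\ell$ is bounded so that the lemma's hypothesis remains satisfied; the ``simultaneous peeling from both ends'' you propose instead is not obviously sound (prefix-maximality is a one-sided notion, and the paper never uses a right-to-left peeling). So while the setup is right and the cited lemmas are the right ones, the core combinatorial step --- which you yourself flag as the part requiring the most care --- is missing, and this is a genuine gap rather than a routine verification.
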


\begin{proof}
  For $m\in\bN$, let $\le$ denote the componentwise order on the set
  $\{0,1\}^m$ of $m$-dimensional vectors over $\{0,1\}$. Note that
  every finite partial order with $m$ elements embeds into
  $(\{0,1\}^m,\le)$. Hence, it suffices to embed this partial order
  into $(\{u,v\}^*,\subword)$.

  We define the map $\varphi_m\colon \{0,1\}^m\to \{u,v\}^*$ as
  follows.  Set $n=|uv|+m+3$. Then, for a tuple
  $t=(t_1,\ldots,t_m)\in \{0,1\}^m$, let
  \[
    \varphi_m(t_1,\ldots,t_m)=v^{t_1} (uv)^n \cdots v^{t_m} (uv)^n\,.
  \]

  It is clear that for $s,t\in\{0,1\}^m$, $s\le t$ implies
  $\varphi_m(s)\subword\varphi_m(t)$.

  Now let $s=(s_1,\dots,s_m)$ and $t=(t_1,\dots,t_m)$ be two vectors
  from $\{0,1\}^m$ with $\varphi_m(s)\subword\varphi_m(t)$.  Towards a
  contradiction, suppose $s\nleq t$. Then there is an $i\in[1,m]$ with
  $s_i=1$, $t_i=0$ and $s_j\le t_j$ for all $j\in[1,i-1]$.  Since
  $(uv)^n\embedsright v(uv)^n$ by Lemma~\ref{L-right-embeddings}(i)
  and clearly also $(uv)^n\embedsright (uv)^n$, we have
  $v^{s_j}(uv)^n\embedsright v^{t_j}(uv)^n$ for every $j\in[1,i-1]$.
  Furthermore, since $s_i=1$ and $t_i=0$, we have
  $v^{s_i}(uv)^{n-1}\embedsright v^{t_i}(uv)^n$ according to
  Lemma~\ref{L-right-embeddings}(ii) with $\ell=0$. Therefore,
  Lemma~\ref{L-congruence-property}(i) yields
  \begin{equation} v^{s_1}(uv)^n\cdots v^{s_{i}}(uv)^{n-1}\embedsright v^{t_1}(uv)^n\cdots v^{t_{i}}(uv)^n. \label{eq:non-embedding-base}\end{equation}
  We show by induction on $k$ that for every $k\in[i,n]$, there
  is an $\ell\in[1,k]$ with
  \begin{equation} v^{s_1}(uv)^n\cdots
    v^{s_{k}}(uv)^{n-\ell}\embedsright v^{t_1}(uv)^n\cdots
    v^{t_{k}}(uv)^n. \label{eq:non-embedding-hypothesis}
  \end{equation}
  Of course, \eqref{eq:non-embedding-base} is the base case. So let
  $k\in[i,n-1]$ and $\ell\in[1,k]$ such that
  \eqref{eq:non-embedding-hypothesis} holds. We distinguish three
  cases.
  \begin{enumerate}
  \item Suppose $s_{k+1}=0$.
    Then
   
    \[
      (uv)^n\embedsright v^{t_{k+1}}(uv)^n
    \]
    since either $t_{k+1}=0$ and the two words are the same, or
    $t_{k+1}=1$ and $(uv)^n\embedsright v(uv)^n$ by
    Lemma~\ref{L-right-embeddings}(i).  So together with the induction
    hypothesis \eqref{eq:non-embedding-hypothesis},
    Lemma~\ref{L-congruence-property}(i) yields
    \begin{align*}
      v^{s_1}(uv)^n\cdots v^{s_k}(uv)^n v^{s_{k+1}}(uv)^{n-\ell}
      &= v^{s_1}(uv)^n\cdots v^{s_k}(uv)^{n-\ell}\,
        (uv)^n && \\
      &\embedsright v^{t_1}(uv)^n\cdots v^{t_k}(uv)^n\,
         v^{t_{k+1}}(uv)^n\,,
    \end{align*}
    where the equality is due to $s_{k+1}=0$.
    Since $\ell\in[1,k]\subseteq[1,k+1]$, this proves
    \eqref{eq:non-embedding-hypothesis} for $k+1$.
  \item Suppose $s_{k+1}=1$ and $t_{k+1}=0$.  By
    Lemma~\ref{L-right-embeddings}(ii), we have
    \[
       (uv)^\ell v(uv)^{n-(\ell+1)}
       \embedsright (uv)^n\,.
    \]
    So together with the induction hypothesis
    \eqref{eq:non-embedding-hypothesis},
    Lemma~\ref{L-congruence-property}(i) implies
    \begin{align*} 
      v^{s_1}(uv)^n\cdots v^{s_k}(uv)^n v^{s_{k+1}}(uv)^{n-(\ell+1)}
      &= v^{s_1}(uv)^n\cdots v^{s_k}(uv)^{n-\ell}\,
        (uv)^\ell v (uv)^{n-(\ell+1)}\\
      &\embedsright v^{t_1}(uv)^n\cdots v^{t_k}(uv)^n\,
         (uv)^n\\
      &= v^{t_1}(uv)^n\cdots v^{t_k}(uv)^n\,
         v^{t_{k+1}}(uv)^n\,,
    \end{align*}
    where the second equality is due to $t_{k+1}=0$.
    Since $\ell+1\in[1,k+1]$, this proves
    \eqref{eq:non-embedding-hypothesis} for $k+1$.
  \item If $s_{k+1}=1$ and $t_{k+1}=1$, then
    Lemma~\ref{L-right-embeddings}(iii) tells us that
    \[
      (uv)^\ell v(uv)^{n-(\ell+1)}\embedsright v(uv)^n\,.
    \]
    So together with the induction hypothesis
    \eqref{eq:non-embedding-hypothesis},
    Lemma~\ref{L-congruence-property}(i) implies
    \begin{align*}
      v^{s_1}(uv)^n\cdots v^{s_k}(uv)^n v^{s_{k+1}}(uv)^{n-(\ell+1)}
      &= v^{s_1}(uv)^n\cdots v^{s_k}(uv)^{n-\ell}\,
        (uv)^\ell v (uv)^{n-(\ell+1)}\\
      &\embedsright v^{t_1}(uv)^n\cdots v^{t_k}(uv)^n\,
         v(uv)^n\\
      &= v^{t_1}(uv)^n\cdots v^{t_k}(uv)^n\,
         v^{t_{k+1}}(uv)^n\,.
    \end{align*}
    Since $\ell+1\in[1,k+1]$, this proves
    \eqref{eq:non-embedding-hypothesis} for $k+1$.
  \end{enumerate}
  This completes the induction. Therefore, we have in particular
  \[
    v^{s_1}(uv)^n\cdots v^{s_m}(uv)^{n-\ell}
    \embedsright v^{t_1}(uv)^n\cdots v^{t_m}(uv)^{n}=\varphi_m(t)
  \]
  for some $\ell>0$. Since the left-hand side is a proper prefix of
  $\varphi_m(s)$, this contradicts $\varphi_m(s)\subword\varphi_m(t)$.
\end{proof}

This completes the proof of the main result of this section, i.e., of
Theorem~\ref{T-unbounded}.

\section{The $\Sigma_1$-theory with constants}

By Theorem~\ref{T-unbounded}, the $\Sigma_1$-theory of $(L,\subword)$
is decidable for all regular languages $L$. If $L$ is bounded, then
even the $\Sigma_1$-theory of $(L,\subword,(w)_{w\in L})$ is decidable
(Theorem~\ref{T-bounded}). This result does not extend to all regular
languages since, e.g., the $\Sigma_1$-theory of
$(\Sigma^*,\subword,(w)_{w\in\Sigma^*})$ is
undecidable~\cite{HalSZ17}. In this section, we present another class
of regular languages $L$ (besides the bounded ones) such that
\[
  \cS=(L,\subword,(w)_{w\in L})
\]
has a decidable $\Sigma_1$-theory.

Let $L\subseteq\Sigma^*$ be some language. Then \emph{almost all words
  from $L$ have a non-negligible number of occurrences of every
  letter} if there exists a positive real number $\varepsilon$ such
that for all $a\in\Sigma$ and all but finitely many words $w\in L$, we
have
\[
  \frac{|w|_a}{|w|}>\varepsilon\,.
\]
An example of such a regular language is $\{ab,ba\}^*$ (this class
contains all finite languages, is closed under union and concatenation
and under iteration, provided every word of the iterated language
contains every letter).

For $w\in\Sigma^*$, let $w\mathord\uparrow$ denote the set of
superwords of $w$, i.e., the upward closure of $\{w\}$ in
$(\Sigma^*,\subword)$.

The basic idea is, as in the proof of Theorem~\ref{T-unbounded}, to
embed every finite partial order into $(L,\subword)$. The following
lemma refines this embedability. Furthermore, it shows that
$L\setminus w\mathord{\uparrow}$ is finite in this case.

\begin{lemma}\label{L-embeds-above-w}
  Let $L\subseteq\Sigma^*$ be an unbounded regular language such that
  almost all words from $L$ have a non-negligible number of
  occurrences of every letter. Let $w\in\Sigma^*$. Then every finite
  partial order $(P,\le)$ can be embedded into
  $(L\cap w\mathord{\uparrow},\subword)$. Furthermore, the set
  $L\setminus w\mathord{\uparrow}$ is finite.
\end{lemma}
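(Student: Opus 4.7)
The plan is to prove the two claims separately, leveraging the non-negligible letter density condition in both. For finiteness of $L\setminus w\up$, I would argue by contradiction: since $L$ and $\Sigma^*\setminus w\up$ are regular, so is $L\setminus w\up$, and if it were infinite the pumping lemma would produce a factorisation $u=xyz$ with $|y|\ge 1$ and $xy^nz\in L\setminus w\up$ for every $n\in\bN$. If some letter $a\in\Sigma$ were absent from $y$, then $|xy^nz|_a/|xy^nz|\to 0$, contradicting the non-negligible hypothesis applied to almost all words of $L$. So $y$ must contain every letter of $\Sigma$, and for $n\ge |w|$ one embeds $w$ into $y^n$ by placing the $j$-th letter of $w$ into the $j$-th copy of $y$. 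Hence $w\subword xy^nz$, contradicting $xy^nz\notin w\up$.

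For the embeddability claim, I would adapt the construction of Theorem~\ref{T-unbounded}. Proposition~\ref{P-2} furnishes words $x',y',p,q$ with $|p|=|q|$, $p\neq q$, and $x'\{p,q\}^*y'\subseteq L$, together with derived words $u,v$ satisfying $|u|=|v|$, $uv$ primitive, and $x'\{u,v\}^*y'\subseteq L$. The same ratio argument as above, now applied to the infinite family $x'\{p,q\}^*y'\subseteq L$, forbids any letter from being absent in both $p$ and $q$, so $pq$ (and hence $uv$, which contains $pq$ as a factor) contains every letter of $\Sigma$. Inspecting Proposition~\ref{P-1}, its proof only uses $n>|uv|+m+2$ via Lemma~\ref{L-right-embeddings}, so I would enlarge $n$ so that additionally $n\ge |w|$. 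Then $(uv)^n$ decomposes into $n$ blocks each containing every letter, giving $w\subword (uv)^n\subword \varphi_m(t)\subword x'\varphi_m(t)y'$ for every $t\in\{0,1\}^m$; thus all values of $t\mapsto x'\varphi_m(t)y'$ lie in $L\cap w\up$.

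Finally, given a finite partial order $(P,\le_P)$ with $k$ elements, embed it into $(\{0,1\}^k,\le)$ via the principal-downset map and take $m=\max(k,|w|)$; then $t\mapsto x'\varphi_m(t)y'$ is the desired embedding into $(L\cap w\up,\subword)$. Proposition~\ref{P-1} shows that $\varphi_m$ is an order-embedding, and the outer concatenation map $z\mapsto x'zy'$ also is: if $x'z_1y'\subword x'z_2y'$, a rightmost-embedding argument pins the common suffix $y'$ to itself, giving $x'z_1\subword x'z_2$, and then a leftmost-embedding argument pins the common prefix $x'$ to itself, giving $z_1\subword z_2$. The main obstacle I anticipate is verifying that the construction in the proof of Proposition~\ref{P-1} goes through unchanged when $n$ is enlarged beyond its prescribed value $|uv|+m+3$; this should follow from a quick re-reading, since Lemma~\ref{L-right-embeddings} holds for every $n>|uv|+\ell+2$.
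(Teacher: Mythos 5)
Your proof is correct, and its skeleton is the same as the paper's: Proposition~\ref{P-2} supplies $x\{u,v\}^*y\subseteq L$, the density hypothesis forces every letter of $\Sigma$ into the periodic part, Proposition~\ref{P-1} supplies the order embedding, and a ``pump a loop'' argument gives finiteness of $L\setminus w\mathord{\uparrow}$. The differences are in the packaging. For finiteness, the paper factors any sufficiently long $v\in L$ (via the minimal DFA) into $|w|$ consecutive loops at a single state, each of which must contain every letter; your single application of the pumping lemma to the regular language $L\setminus w\mathord{\uparrow}$, pumping one loop $|w|$ times, reaches the same contradiction and is, if anything, slightly leaner. For the embedding, the paper avoids re-opening Proposition~\ref{P-1}: since $xu^*y\subseteq L$, the density condition forces $u$ to contain every letter, hence $w\subword u^{|w|}$, and replacing $x$ by $x'=xu^{|w|}$ gives $x'\{u,v\}^*y\subseteq L\cap w\mathord{\uparrow}$, after which Proposition~\ref{P-1} is used as a black box. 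Your route---enlarging $m$ (equivalently $n$) so that $(uv)^n$ with $n\ge|w|$ is a factor of every $\varphi_m(t)$---also works, and your observation that the proof of Proposition~\ref{P-1} only needs the lower bound $n>|uv|+\ell+2$ with $\ell\le m$ is accurate; but it turns the argument white-box where a one-line prefix trick suffices, which is exactly the ``obstacle'' you anticipated. Your final check that $z\mapsto x'zy'$ reflects the subword order is correct and is implicitly relied on by the paper as well.
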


Note that $L\cap w\mathord{\uparrow}$ is regular, but not necessarily
unbounded (it could even be finite). Hence the first claim is not an
obvious consequence of Propositions~\ref{P-1} and \ref{P-2}.

\begin{proof}
  Since $L$ is regular and unbounded, there are words
  $x,u,v,y\in\Sigma^*$ with $|u|=|v|>0$, $uv$ primitive, and
  $x\{u,v\}^*y\subseteq L$ (by Proposition~\ref{P-2}). In particular,
  $xu^*y\subseteq L$. Let $a\in\Sigma$ and suppose $|u|_a=0$. Then
  \[
    \lim_{n\to\infty}\frac{|xu^ny|_a}{|xu^ny|}=0
  \]
  contradicting that almost all words from $L$ have a non-negligible
  number of occurrences of every letter. Hence, $u$ contains every
  letter from $\Sigma$ implying $w\subword u^{|w|}$. Set
  $x'=x u^{|w|}$. Then we have
  $x'\{u,v\}^*y\subseteq L\cap w\mathord{\uparrow}$. From
  Proposition~\ref{P-1}, we learn that $(P,\le)$ can be embedded into
  $(\{u,v\}^*,\subword)$. Hence, it can be embedded into
  $(x'\{u,v\}^*y,\subword)$ and therefore into
  $(L\cap w\mathord{\uparrow},\subword)$.

  Next, we show that $L\setminus w\mathord{\uparrow}$ is finite. Let
  $M=(Q,\Sigma,\iota,\delta,F)$ be the minimal deterministic finite
  automaton accepting $L$. Let $v\in L$ with $|v|\ge|Q|\cdot|w|$. Then
  we can factorize the word $v$ into $v=v_0v_1\cdots v_{|w|+1}$ such
  that $\delta(\iota,v_0)=\delta(\iota,v_0v_1\cdots v_i)$ for all
  $1\le i\le|w|$. With $q=\delta(\iota,v_0)$, we obtain
  $\delta(q,v_i)=q_i$ for all $1\le i\le |w|$ and therefore
  $v_0v_1\cdot v_{i-1}v_i^*v_{i+1}\cdots v_{|w|+1}\subseteq L$.  Since
  almost all words from $L$ have a non-negligible number of
  occurrences of every letter, this implies (as above) that $v_i$
  contains all letters from $\Sigma$. Since this holds for all
  $1\le i\le|w|$, we obtain $w\subword v_1v_2\dots v_{|w|}\subword v$
  and therefore $v\notin L\setminus w\mathord{\uparrow}$. Hence,
  indeed, $L\setminus w\mathord{\uparrow}$ is finite.
\end{proof}

\begin{theorem}\label{T-nice}
  Let $L\subseteq\Sigma^*$ be an unbounded regular language such that
  almost all words from $L$ have a non-negligible number of
  occurrences of every letter. Then the $\Sigma_1$-theory of
  $(L,\subword,(w)_{w\in L})$ is decidable.
\end{theorem}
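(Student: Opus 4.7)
The plan is to decide each $\Sigma_1$-sentence $\varphi=\exists x_1\cdots\exists x_n\colon\psi$ (with $\psi$ quantifier-free, involving constants $w_1,\dots,w_k\in L$) by a finite case analysis followed by an appeal to Lemma~\ref{L-embeds-above-w}. Concretely, fix any letter $a\in\Sigma$ and set $w^*=w_1w_2\cdots w_ka\in\Sigma^*$, so that $w_j\propersubword w^*$ for every $j$. Every word in $w^*\mathord{\uparrow}$ is then strictly above every $w_j$, and in particular distinct from each $w_j$. By Lemma~\ref{L-embeds-above-w} applied to $w^*$, the set $L\setminus w^*\mathord{\uparrow}$ is finite and, from the proof of that lemma, effectively computable; so the finite set $F=\{w_1,\dots,w_k\}\cup(L\setminus w^*\mathord{\uparrow})$ is at hand.

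I would then branch on the ``profile'' of the existential witnesses. For each partition $\{1,\dots,n\}=A\sqcup B$ and each function $f\colon A\to F$, consider the case $x_i=f(i)$ for $i\in A$ and $x_i\in L\cap w^*\mathord{\uparrow}$ for $i\in B$. Under this case, any atomic subformula of $\psi$ involving a constant $w_j$ and a variable $x_i$ with $i\in B$ has a fixed truth value: $w_j\subword x_i$ is true, while $x_i\subword w_j$ and $x_i=w_j$ are false (since $w_j\propersubword x_i$ precludes $x_i\subword w_j$ unless $x_i=w_j$). Atomic formulas among constants or involving variables from $A$ likewise evaluate to fixed truth values after substitution. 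The remaining atomic formulas involve only the variables $(x_i)_{i\in B}$ under $\subword$ and $=$, yielding a quantifier-free formula $\psi'$ in those variables. The original sentence holds in this profile iff $\exists(x_i)_{i\in B}\colon\psi'$ is satisfiable in the substructure $(L\cap w^*\mathord{\uparrow},\subword)$.

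Each such reduced instance is then handled exactly as in the proof of Theorem~\ref{T-unbounded}: by Lemma~\ref{L-embeds-above-w} every finite partial order embeds into $(L\cap w^*\mathord{\uparrow},\subword)$, and conversely any satisfying assignment spans a finite sub-poset. Hence $\exists(x_i)_{i\in B}\colon\psi'$ is satisfiable iff it holds in some finite partial order of cardinality at most $|B|$, which can be checked by enumeration. Taking the disjunction over the finitely many profiles yields the decision procedure for $\varphi$.

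The substantive mathematical content has already been supplied by Lemma~\ref{L-embeds-above-w}; the main point requiring care is the choice of $w^*$, which must be large enough so that $L\cap w^*\mathord{\uparrow}$ is disjoint from $\{w_1,\dots,w_k\}$. Padding with the extra letter $a$ guarantees this, so that the profile-by-profile substitution cleanly eliminates every occurrence of a constant and reduces the problem to the constant-free setting already settled by Theorem~\ref{T-unbounded}.
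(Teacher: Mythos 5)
Your overall strategy---split each witness into either a member of a finite, computable exceptional set $F$ or a member of $L\cap w^*\mathord{\uparrow}$, and then reduce to the constant-free situation of Theorem~\ref{T-unbounded} via Lemma~\ref{L-embeds-above-w}---is close in spirit to the paper's proof, and several of your observations are correct: $L\setminus w^*\mathord{\uparrow}$ is finite and effectively computable, every literal comparing an original constant $w_j$ with a variable $x_i$, $i\in B$, is indeed determined, and the closing poset argument is sound for genuinely constant-free formulas. The gap lies in the claim that ``atomic formulas \ldots involving variables from $A$ likewise evaluate to fixed truth values after substitution.'' Consider $x_i\subword x_j$ with $i\in A$ and $j\in B$: after substituting $x_i=f(i)=u\in F$ this becomes $u\subword x_j$, and since $u\in L\setminus w^*\mathord{\uparrow}$ need not be a subword of $w^*$, its truth value is \emph{not} determined by $x_j\in L\cap w^*\mathord{\uparrow}$. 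For example, with $L=\{ab,ba\}^*$ and $w^*=aba$, the word $u=baab$ lies in $L\setminus w^*\mathord{\uparrow}$, yet $u\subword x_j$ fails for $x_j=abab$ and holds for $x_j=baabab$, both of which lie in $L\cap w^*\mathord{\uparrow}$. Consequently your reduced formula $\psi'$ is not a formula in the $B$-variables over $\subword$ and $=$ alone: it still contains literals $u\subword x_j$ and $u\not\subword x_j$ for new constants $u\in F$, and the final equivalence ``satisfiable iff some abstract finite partial order of size $|B|$ satisfies $\psi'$'' is then false, since it ignores these constraints entirely.

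The positive residue $u\subword x_j$ is easy to repair: embed the finite poset above the concatenation of $w^*$ with all such $u$, which is exactly what the paper does in its Step~3. The negated residue $u\not\subword x_j$ is the real obstruction: it cannot be absorbed by any embedding argument, because it confines $x_j$ to the finite set $\{x\in L\mid u\not\subword x\}$ and therefore forces a further round of case analysis and substitution. This is precisely the elimination step the paper performs first (its Step~1, using the second claim of Lemma~\ref{L-embeds-above-w} and termination by decreasing the number of variables), followed by its Step~2 turning $x\not\subword w$ into a disjunction of lower-bound literals via minimal elements of an upward-closed set. Your profile split elegantly disposes of the literals between original constants and $B$-variables, but it merely relocates the problematic literal forms onto the new constants from $F$; without an additional elimination round for $u\not\subword x_j$, the procedure is incorrect on any input containing a negated subword literal between two variables that land in different parts of the profile.
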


\begin{proof}
  We want to show that satisfiability in $(L,\subword,(w)_{w\in L})$
  is decidable for quantifier-free formulas, i.e., for positive
  Boolean combinations $\varphi$ of literals of the following forms
  (where $x$ and $y$ are arbitrary variables and $w$ an arbitrary word
  from $L$):
  
  \begin{minipage}[t]{.2\linewidth}
    \begin{enumerate}[(i)]
    \item\label{form:xw} $x\subword w$
    \item\label{form:xnw} $x\not\subword w$
    \end{enumerate}
  \end{minipage}
  \begin{minipage}[t]{.2\linewidth}
    \begin{enumerate}[(i)]
      \setcounter{enumi}{2}
    \item\label{form:wx} $w\subword x$
    \item\label{form:wnx} $w\not\subword x$
    \end{enumerate}
  \end{minipage}
  \begin{minipage}[t]{.2\linewidth}
    \begin{enumerate}[(i)]
      \setcounter{enumi}{4}
    \item\label{form:xy} $x\subword y$
    \item\label{form:xny} $x\not\subword y$
    \end{enumerate}
  \end{minipage}
  
  \noindent
  Note that literals of the form $x=y$ can be
  written as $x\subword y\land y\subword x$, $x\neq y$ as
  $x\not\subword y\lor y\not\subword x$, and similarly $x\neq w$ as
  $x\not\subword w\lor w\not\subword x$. Furthermore, literals
  mentioning two words like $u\subword v$ can be replaced by
  $\top=(x\subword y\lor x\not\subword y)$ or $\bot=(x\subword y\land
  x\not\subword y)$. By bringing the formula in disjunctive normal
  form, we may assume that we are given a disjunction of conjunctions
  of such literals.

  \emph{Step 1.} We first show that literals of types \eqref{form:xw} and
  \eqref{form:wnx} can be eliminated. To this end, observe that for
  each $w\in L$, both of the sets 
  \begin{align*} \{u\in L\mid u\subword w\}, && \{u\in L\mid w\not\subword u\} \end{align*}
  are finite.  In the case $\{u\in L\mid u\subword w\}$, this is
  trivial. In the case of $\{u\in L\mid w\not\subword u\}$, this is
  the second claim in Lemma~\ref{L-embeds-above-w}.  Thus, every
  conjunction that contains a literal $x\subword w$ or $w\not\subword
  x$, constrains $x$ to finitely many values. Therefore, we can
  replace this conjunction with a disjunction of conjunctions that
  result from replacing $x$ by one of these values.  (Here, we might
  obtain literals $u\subword v$ or $u\not\subword v$, but those can be
  replaced by $\bot$ and $\top$ as above).

  Note that such a replacement reduces the number of variables by one.
  We repeat this replacement until there are no more literals of the
  form \eqref{form:xw} and \eqref{form:wnx}. Since we replace each
  conjunction with (a disjunction of) conjunctions that have fewer
  variable, this has to terminate. Thus, we arrive at a disjunction of
  conjunctions of literals of the forms
  \eqref{form:xnw},\eqref{form:wx},\eqref{form:xy}, and
  \eqref{form:xny}.

  \emph{Step 2.} In the second step, we will eliminate literals of the
  form \eqref{form:xnw}.  Note that the language
  $\{u\in L\mid u\not\subword w\}$ is upward closed in
  $(L,\subword)$. Since $L$ is regular, we can compute the finite set
  of minimal elements of this set. Thus, $x\not\subword w$ is
  equivalent to a finite disjunction of literals of the form
  $w'\subword x$.  As a result, we get a positive Boolean combination
  $\psi$ of literals of the form \eqref{form:wx}, \eqref{form:xy},
  \eqref{form:xny} that is equivalent to $\varphi$.

  \emph{Step 3.} In the third step, we check whether our formula is satisfiable.
  We may assume that $\psi$ is in disjunctive normal
  form. To verify whether $\psi$ is satisfiable in $(L,\subword)$, it
  therefore suffices to verify satisfiability of conjunctions of
  literals of the form \eqref{form:wx}, \eqref{form:xy},
  \eqref{form:xny}.
  So let $\gamma$ be such a conjunction.
  It can be written as
  $\gamma_1\land\gamma_2$ where $\gamma_1$ is a conjunction of
  literals of the form \eqref{form:wx} and $\gamma_2$ is a conjunction
  of literals of the form \eqref{form:xy} and \eqref{form:xny}.

  Let $n$ denote the number of variables appearing in $\gamma_2$. If
  $\gamma$ is satisfiable in $(L,\subword)$, then $\gamma_2$ is
  satisfied by some partial order with at most $n$
  elements. Conversely, let $\gamma_2$ be satisfied by $(P,\le)$ where
  $P$ has at most $n$ elements. Let, furthermore, $w$ denote some
  concatenation of all words $w$ appearing in the formula $\gamma_1$. By
  the first claim of Lemma~\ref{L-embeds-above-w}, the finite partial
  order $(P,\le)$ can be embedded into
  $(L\cap w\mathord{\uparrow},\subword)$. Consequently,
  $\gamma_1\land\gamma_2$ is satisfiable in
  $(L\cap w\mathord{\uparrow},\subword)$ and therefore in
  $(L,\subword)$. In summary, $\gamma$ is satifiable in $(L,\subword)$
  iff $\gamma_2$ holds in some finite partial order of size at most
  $n$. Since there are only finitely many such finite partial orders,
  we get that satisfiability of $\gamma$ in $(L,\subword)$ is
  decidable.
  %
  %
  %
\end{proof}

\section*{Open questions}

We did not consider complexity issues. In particular, from
\cite{KarS15}, we know that the $\FO^2$-theory of the structure
$(\Sigma^*,\subword,(w)_{w\in\Sigma^*})$ can be decided in elementary
time. We currently work out the details for the extension of this
result to the $\CMOD^2$-theory of the structure
$(L,\subword,(w)_{w\in L})$ for $L$ regular. We reduced the
$\FOMOD$-theory of the full structure (for $L$ context-free and
bounded) to the $\FOMOD$-theory of $(\bN,+)$ which is known to be
decidable in elementary time \cite{HabK15}. Unfortunately, our
reduction increases the formula exponentially due to the need of
handling statements of the form ``there is an even number of pairs
$(x,y)\in\bN^2$ such that ...'' It should be checked whether the proof
from \cite{HabK15} can be extended to handle such statements in
$\FOMOD$ for $(\bN,+)$ directly.

Finally, we did not give any new undecidability results. For example,
we know that the $\Sigma_1$-theory of $(L,\subword,(w)_{w\in L})$ is
undecidable for $L=\Sigma^*$ \cite{HalSZ17} and decidable for
$L=\{ab,ba\}^*$ (Theorem~\ref{T-nice}). To narrow the gap between
decidable and undecidable cases, one should find more undecidable
cases.


\begin{thebibliography}{10}

\bibitem{Ape66}
H.~Apelt.
\newblock {Axiomatische Untersuchungen \"uber einige mit der Presburgerschen
  Arithmetik verwandten Systeme.}
\newblock {\em Z. Math. Logik Grundlagen Math.}, 12:131--168, 1966.

\bibitem{Ber79}
J.~Berstel.
\newblock {\em Transductions and context-free languages}.
\newblock Teubner Studienb{\"u}cher, Stuttgart, 1979.

\bibitem{FinS01}
A.~Finkel and Ph. Schnoebelen.
\newblock Well-structured transition systems everywhere!
\newblock {\em Theoretical Computer Science}, 256:63--92, 2001.

\bibitem{HabK15}
P.~Habermehl and D.~Kuske.
\newblock On {P}resburger arithmetic extended with modulo counting quantifiers.
\newblock In {\em FoSSaCS'15}, Lecture Notes in Comp.\ Science vol.~9034, pages
  375--389. Springer, 2015.

\bibitem{HalSZ17}
S.~Halfon, {Ph}. Schnoebelen, and G.~Zetzsche.
\newblock Decidability, complexity, and expressiveness of first-order logic
  over the subword ordering.
\newblock In {\em Proc. of the Thirty-Second Annual ACM/IEEE Symposium on Logic
  in Computer Science (LICS 2017)}, pages 1--12. IEEE Computer Society, 2017.

\bibitem{Hig52}
G.~Higman.
\newblock Ordering by divisibility in abstract algebras.
\newblock {\em Proc.\ London Math.\ Soc.}, 2:326--336, 1952.

\bibitem{JezM09}
J.~{Je\v zek} and R.~{McKenzie}.
\newblock {Definability in substructure orderings. I: Finite semilattices.}
\newblock {\em {Algebra Univers.}}, 61(1):59--75, 2009.

\bibitem{JezM09a}
J.~{Je\v zek} and R.~{McKenzie}.
\newblock {Definability in substructure orderings. III: Finite distributive
  lattices.}
\newblock {\em {Algebra Univers.}}, 61(3-4):283--300, 2009.

\bibitem{JezM09b}
J.~{Je\v zek} and R.~{McKenzie}.
\newblock {Definability in substructure orderings. IV: Finite lattices.}
\newblock {\em {Algebra Univers.}}, 61(3-4):301--312, 2009.

\bibitem{JezM10}
J.~{Je\v zek} and R.~{McKenzie}.
\newblock {Definability in substructure orderings. II: Finite ordered sets.}
\newblock {\em {Order}}, 27(2):115--145, 2010.

\bibitem{KarS15}
P.~Karandikar and {Ph}. Schnoebelen.
\newblock Decidability in the logic of subsequences and supersequences.
\newblock In Prahladh Harsha and G.~Ramalingam, editors, {\em {P}roceedings of
  the 35th {C}onference on {F}oundations of {S}oftware {T}echnology and
  {T}heoretical {C}omputer {S}cience ({FSTTCS}'15)}, volume~45 of {\em Leibniz
  International Proceedings in Informatics}, pages 84--97. Leibniz-Zentrum
  f{\"u}r Informatik, 2015.

\bibitem{KudS07}
O.~V. Kudinov and V.~L. Selivanov.
\newblock Undecidability in the homomorphic quasiorder of finite labelled
  forests.
\newblock {\em J. Log. Comput.}, 17(6):1135--1151, 2007.

\bibitem{KudSY10}
O.~V. Kudinov, V.~L. Selivanov, and L.~V. Yartseva.
\newblock Definability in the subword order.
\newblock In {\em Programs, Proofs, Processes, 6th Conference on Computability
  in Europe, CiE 2010, Ponta Delgada, Azores, Portugal, June 30 - July 4, 2010.
  Proceedings}, Lecture Notes in Computer Science vol.~6158, pages 246--255.
  Springer, 2010.

\bibitem{KudS09}
O.~V. Kudinov, V.~L. Selivanov, and A.~V. Zhukov.
\newblock Definability in the h-quasiorder of labeled forests.
\newblock {\em Ann. Pure Appl. Logic}, 159(3):318--332, 2009.

\bibitem{Kus06}
D.~Kuske.
\newblock Theories of orders on the set of words.
\newblock {\em Theoretical Informatics and Applications}, 40:53--74, 2006.

\bibitem{Par66}
R.~Parikh.
\newblock On context-free languages.
\newblock {\em Journal of the {ACM}}, 13(4):570--581, 1966.

\bibitem{Sak09}
J.~Sakarovitch.
\newblock Rational and recognizable series.
\newblock In M.~Droste, W.~Kuich, and H.~Vogler, editors, {\em Handbook of
  Weighted Automata}, pages 105--174. Springer, 2009.

\bibitem{Sch05}
N.~Schweikardt.
\newblock Arithmetic, first-order logic, and counting quantifiers.
\newblock {\em {ACM} Trans. Comput. Log.}, 6(3):634--671, 2005.

\bibitem{Thi17}
Ramanathan~S. Thinniyam.
\newblock Definability of recursive predicates in the induced subgraph order.
\newblock In {\em Logic and Its Applications - 7th Indian Conference, {ICLA}
  2017, Kanpur, India, January 5-7, 2017, Proceedings}, Lecture Notes in Comp.\
  Science vol.\ 10119, pages 211--223. Springer, 2017.

\bibitem{Thi18}
Ramanathan~S. Thinniyam.
\newblock Defining recursive predicates in graph orders.
\newblock {\em Logical Methods in Computer Science}, 14(3), 2018.

\end{thebibliography}
\end{document}